\documentclass[11pt]{article}
\usepackage[margin=1in]{geometry}

\usepackage[T1]{fontenc}
\usepackage[utf8]{inputenc}
\usepackage[round]{natbib}
\usepackage{graphicx}
\usepackage[font=footnotesize]{subfig}
\usepackage{amsmath,amssymb,amsfonts,amsthm,mathtools}
\usepackage{booktabs}
\usepackage{multirow}
\usepackage{url}
\usepackage{enumitem}
\usepackage{algorithm}
\usepackage{algpseudocode}
\usepackage{xcolor}
\usepackage[hidelinks]{hyperref}
\usepackage{microtype}

\usepackage{wrapfig}

\newtheorem{theorem}{Theorem}[section]
\newtheorem{lemma}[theorem]{Lemma}
\newtheorem{proposition}[theorem]{Proposition}
\newtheorem{corollary}[theorem]{Corollary}

\theoremstyle{definition}
\newtheorem{definition}[theorem]{Definition}
\newtheorem{assumption}[theorem]{Assumption}

\theoremstyle{remark}

\DeclareMathOperator{\Law}{Law}
\DeclareMathOperator{\softmax}{softmax}

\newcommand{\ind}[1]{\mathbf{1}\!\left\{#1\right\}}
\newcommand{\ball}[2]{\mathcal{B}_{#1}(#2)}

\newcommand{\R}{\mathbb{R}}

\title{Ball Differential Privacy - \\ How to Mitigate Data Reconstruction with Less Noise}

\author{Joseph Margaryan and Nirupam Gupta\\[0.5ex]
\small Department of Computer Science, University of Copenhagen\\
\small Copenhagen, Denmark\\
\small \texttt{josephmargaryan@gmail.com}, \texttt{nigu@di.ku.dk}}
\date{}

\newcommand{\suchthat}{\ensuremath{~\middle|~}}
\newcommand{\knowing}{\suchthat{}}

\newcommand{\norm}[1]{\left\lVert{#1}\right\rVert}

\newcommand{\indexvar}[3]{\ensuremath{{{#3}^{\ifthenelse{\equal{#1}{}}{}{\left({#1}\right)}}_{#2}}}}
\newcommand{\indexvarNoPar}[3]{\ensuremath{{{#3}^{\ifthenelse{\equal{#1}{}}{}{\left{#1}\right}}_{#2}}}}

\providecommand{\norm}[1]{\ensuremath{\left\lVert#1\right\rVert }}

\newcommand{\proba}[2]{\ensuremath{\text{P}\!\left({#1}\ifthenelse{\equal{#2}{}}{}{\knowing{}{#2}}\right)}}

\renewcommand{\paragraph}[1]{\textbf{#1}~}

\begin{document}
\maketitle

\begin{abstract}
Vector embeddings of raw records, while not human-readable, do not preserve privacy of records: an adversary can reconstruct training records from a released model even when that model is a simple convex classifier. Differential privacy (DP) is the principled defense, but its noise is calibrated to the worst-case indistinguishability; hiding arbitrary single-record substitutions, including those far outside the set of plausible alternatives relevant to a reconstruction adversary. The result is noise far larger than what reconstruction robustness requires, degrading accuracy without a corresponding security benefit.

We propose \emph{Ball-DP}: enforcing $(\varepsilon,\delta)$ indistinguishability over single-record substitutions restricted to a ball of radius $r$ as per a distance metric $d$ in the embedding space. 
A deployment facing only local reconstruction threat can choose a small $r$, 
thereby reduce noise and recover accuracy. The radius makes the scope of the privacy claim explicit against reconstruction attacks; standard DP is recovered when $r$ covers the entire admissible record domain. 
We provide noise calibrations for regularized convex learning problems for Ball-DP, and derive the corresponding
reconstruction-robustness certificates (named {\em Ball-ReRo}) -- 
upper-bound on an attacker's reconstruction success.
By deriving the optimal finite-prior MAP reconstruction attack, we present empirical auditing of Ball-ReRo certificates for seven benchmark learning tasks. 
Our experiments show that calibrating noise to Ball-DP yields improvement in utility, considerably exceeding the dilution of reconstruction robustness in high privacy regimes, i.e., when $\varepsilon$ is small.

\end{abstract}

\section{Introduction}
Modern machine learning systems rarely operate directly on raw data. Instead, representation-learning pipelines commonly map text, images, or other types of user records to vector embeddings that can be reused by downstream task-specific classifying heads~\citep{bengio2013representation,devlin2019bert,reimers2019sentencebert,radford2021clip}. This representation layer creates a false sense of privacy. Embeddings may not be human-readable, yet a model trained on them can reveal information about the training records.
In the informed-adversary model, for example, an attacker knowing all but one training record can identify or reconstruct training records from a released model, even if the release is a simple convex model rather than a memorizing neural network \cite{balle2022informed}.

Privacy-preserving AI requires guarantees that remain meaningful even against strong adversaries with side information. Differential privacy (DP) provides such a guarantee by requiring the released model to look nearly the same under arbitrary changes to a single training record~\citep{cummings2024advancing}. The parameters $(\varepsilon,\delta)$ control the strength of this indistinguishability requirement, with smaller values giving stronger protection. A common implementation is output perturbation: train the model, then add enough noise to mask the effect of any single record~\citep{chaudhuri2011erm}. This worst-case calibration is simple and broadly applicable, but it can require substantial noise and therefore reduce accuracy when strong privacy is desired.

DP is naturally aligned with membership inference attacks (MIA), which ask whether a particular record was used to train the model: by making the released model nearly indistinguishable under any single-record changes, DP limits what an attacker can infer about involvement of specific record. However, participation is not always the sensitive fact of interest~\citep{hayes2023bounding,cummings2024attaxonomy}. In many settings, it may already be known that a person contributed data; the real privacy concern is whether the released model reveals the private content of their record.



This motivates training-record reconstruction attacks, where an adversary uses side information to recover the hidden record itself, often by choosing from a set of plausible candidates. The connection between DP guarantee and reconstruction risk is, however, quite loose~\citep{hayes2023bounding, cummings2024attaxonomy}. DP mechanisms provide indistinguishability against worst-case single-record substitutions that can be arbitrarily far from each other in the embedding space. While a reconstruction adversary is interested in generating an estimate (i.e., choosing an element from the candidate set) that is in proximity to the target record. Thus, to prevent reconstruction, it suffices to induce indistinguishability over single-record substitutions that are close to each other in the underlying embedding space.

\begin{wrapfigure}{l}{0.5\textwidth}
  \begin{center}
    \includegraphics[width=0.48\textwidth]{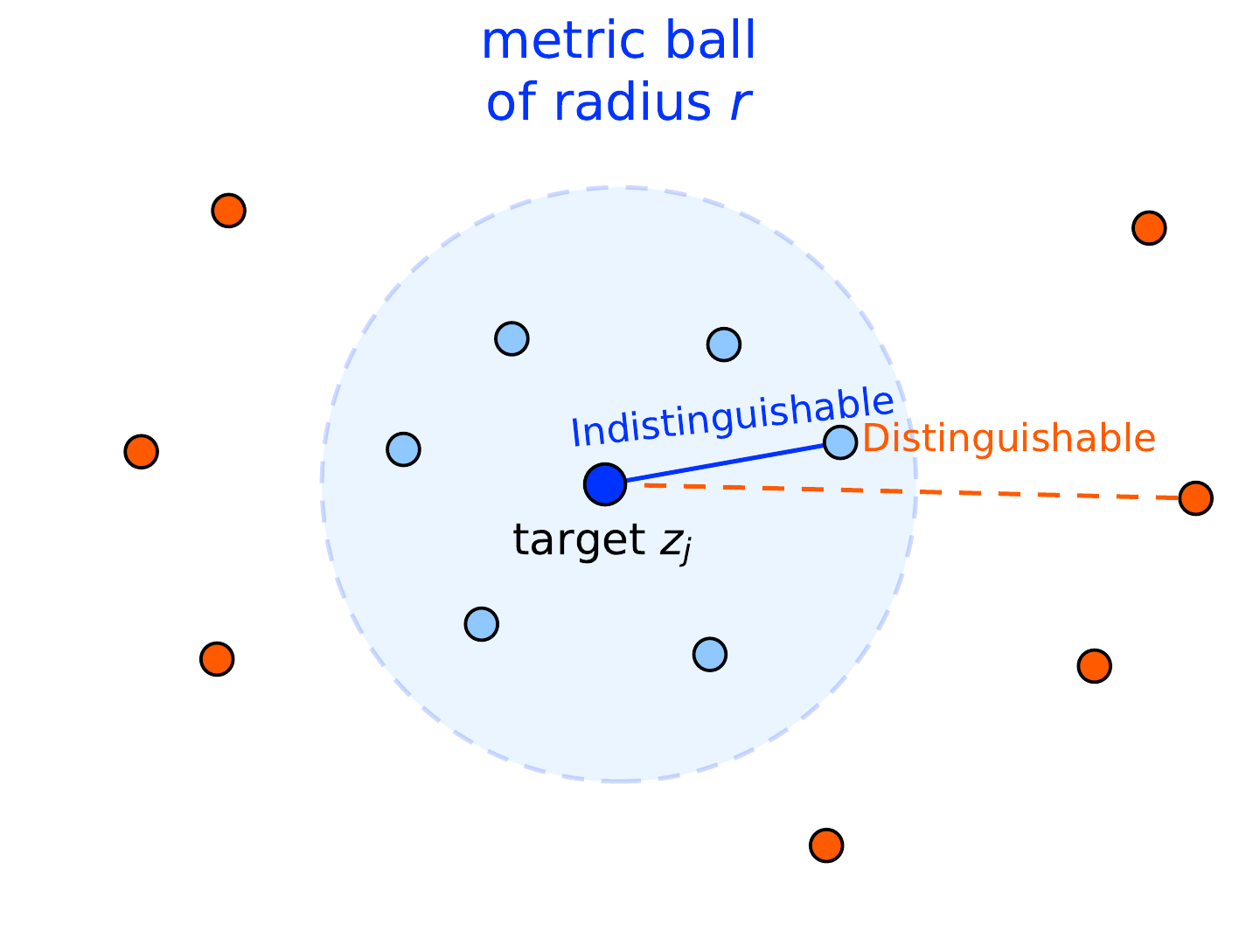}
  \end{center}
  \caption{\small Illustration of reconstruction protection due to Ball-DP: an adversary can narrow down the target record to points highlighted in blue but cannot reliably tell any further. By limiting the scope of $(\varepsilon, \delta)$-indistinguishability to replacements within a metric ball of radius $r$, Ball-DP mitigates accurate reconstruction while recovering higher model accuracy than standard DP.}
  \label{fig:ball-DP}
\end{wrapfigure}

\noindent \paragraph{Ball-DP.} We propose \emph{Ball-DP}, which deploys the usual $(\varepsilon,\delta)$-indistinguishability but applies it to only single-record substitutions within the ball of declared radius $r$. The radius becomes an explicit privacy parameter: smaller $r$ efficiently mitigates local reconstruction threats using less noise, while larger $r$ expands the protected set at a corresponding utility cost. For the regularized convex prediction heads that we consider, we can ensure $(r, \varepsilon, \delta)$-Ball-DP via Gaussian output perturbation whose noise scales with $r$ in addition to the indistinguishability parameters $(\varepsilon, \delta)$. 
As shown in Figure~\ref{fig:ball-DP}, Ball-DP allows an adversary to differentiate between the target record and a candidate outside the ball of radius $r$, by enforcing $(\varepsilon,\delta)$-indistinguishability over the ball it forces the adversary to incur an error that scales with $r$ in estimating the target record. We can tune $r$ according to our privacy requirement; recovering the standard DP guarantee when $r$ is large enough to cover the entire admissible record domain. \\

\noindent \paragraph{Contributions.} We make the following contributions.

\begin{enumerate}
    \item We define $(r, \varepsilon, \delta)$-Ball-DP formally, i.e., $(\varepsilon, \delta)$-indistinguishability over $r$-adjacency, 
    and establish its connection to standard DP. 
    \item We derive $\ell_2$-sensitivity of strongly convex empirical-risk minimization (ERM) under $r$-adjacency, and present the Gaussian release mechanism calibrated to Ball-DP guarantee. 
    \item We formally connect Ball-DP to reconstruction robustness; bounding an attacker's reconstruction success rate as a function of the oblivious prior and the privacy parameters $(r,\varepsilon,\delta)$.
    \item We empirically validate our theoretical bounds, and compare the accuracy-reconstruction robustness trade-off of Ball-DP and standard DP across seven learning benchmarks. 
    
\end{enumerate}

Our empirical evidence suggests that Ball-DP provides comparable reconstruction robustness while requiring less noise, thereby mitigating the accuracy--reconstruction robustness trade-off. At $\varepsilon=2$ and $\delta=10^{-6}$, Ball-DP improves mean accuracy over the standard global-radius DP comparator across all seven embedding benchmarks at the reported local policy radii. For example, at the $q50$ label-preserving policy radii, Ball-DP improves mean accuracy by $0.57$--$36.47$ percentage points. Even at the broader $q95$ local radii, Ball-DP remains more accurate than the global-radius comparator on every benchmark. Finite-prior maximum a posteriori (MAP) audits show comparable empirical exact-identification rates for candidate sets of size $m=10$.\\

\paragraph{Comparisons with metric-DP} We emphasize that Ball-DP should not be confused with metric-DP, or $d_X$-privacy~\citep{chatzikokolakis2013metrics,imola2022metricdp}. In its usual form, metric-DP imposes a distance-dependent privacy-loss bound,
so privacy degrades continuously with the distance between records. By contrast, $(r,\varepsilon,\delta)$-Ball-DP uses the metric only to define a binary adjacency relation $D\sim_r D'$, and then enforces the standard $(\varepsilon,\delta)$ indistinguishability inequality uniformly over that restricted adjacent subset of records. Ball-DP need not provide any guarantee for substitutions outside the radius, nor does it require privacy loss to shrink for pairs that are closer than $r$. The advantage of this formulation is that it separates the privacy policy from the mechanism calibration. In particular, mechanisms whose privacy analyses depend on the adjacency relation only through a sensitivity bound can be adapted by replacing the usual sensitivity to $r$-adjacency sensitivity. We show this explicitly in the following wherein we adapt the standard Gaussian mechanism to Ball-DP.\\

\paragraph{Paper organization.}
The rest of the paper is organized as follows.
Section~\ref{sec:preliminaries} fixes the ERM setup, the informed reconstruction threat model, and the Gaussian output-perturbation notation used throughout.
Section~\ref{sec:ball-dp} introduces Ball-DP as a radius-restricted adjacency policy and gives the corresponding Gaussian calibration for strongly convex ERM.
Section~\ref{sec:gaussian-rero} translates these privacy guarantees into Ball-ReRo reconstruction certificates and states the finite-prior MAP audit used in the experiments.
Section~\ref{sec:experiments} reports the utility and reconstruction-audit results on fixed embedding benchmarks.
Section~\ref{sec:technical-statements} collects the model-specific technical statements needed for the experimental calibration.
Section~\ref{sec:related} discusses related work on metric privacy, private convex ERM, reconstruction robustness, and prototype-based learning.
Finally, Section~\ref{sec:conclusion} provide a summary of the paper. Proofs and auxiliary derivations are deferred to the appendices.

\section{Preliminaries: ERM, Reconstruction, and Gaussian Releases}
\label{sec:preliminaries}

This section formalizes the learning problem, the reconstruction threat model, and the Gaussian-release notation used throughout the paper. The order is as follows: we first specify what is learned from embeddings, then describe why reconstruction is a meaningful privacy breach, and lastly discuss DP and Gaussian output-perturbation guarantees.\\

\paragraph{Embedding ERM setup.}
Let raw examples live in an application-dependent space $\mathcal X_0$, e.g., photos, text and tabular records, and let labels lie in space $\mathcal Y$. A public encoder $\phi:\mathcal X_0\to\mathbb R^d$
maps each raw example to an embedding. From this point onwards, a training record is
\[
z_i=(x_i,y_i)=(\phi(a_i),y_i)\in\mathcal Z:=\mathbb R^d\times\mathcal Y,
\]
where $a_i\in\mathcal X_0$ is the corresponding raw example. The label space may either be binary, $\mathcal Y=\{0,1\}$, or multiclass, $\mathcal Y=\{1,\ldots,K\}$. A dataset is an ordered tuple $D=(z_1,\ldots,z_n)\in\mathcal Z^n$. We consider $\ell_2$-regularized empirical risk minimization (ERM) on the training records. For a parameter $\theta\in\mathbb R^p$, a loss function $\ell(\theta;z)$, and regularization parameter $\lambda>0$, we define
\begin{equation}
\label{eq:generic-erm}
F_D(\theta)
:=
\frac{1}{n}\sum_{i=1}^n \ell(\theta;z_i)
+
\frac{\lambda}{2}\|\theta\|_2^2,
\quad 
\widehat\theta(D)
:=
\arg\min_{\theta\in\mathbb R^p} F_D(\theta).
\end{equation}
When $\ell(\cdot;z)$ is convex for every record $z$, the objective in~\eqref{eq:generic-erm} is $\lambda$-strongly convex, hence the minimizer is unique. This is a standard convex-learning template behind regularized linear and generalized-linear heads~\citep{hastie2009elements,shalev2014understanding}.\\

\paragraph{Reconstruction threat model.}
Embedding-space training does not by itself eliminate record-level privacy risk. We consider an {\em informed reconstruction adversary}, following the setting proposed and studied in~\cite{balle2022informed,hayes2023bounding}, that observes the released model and has substantial side information about the training set. In particular, all records except one target record $z$ are treated as known; when needed below, we denote this ordered tuple of known records by $D^-\in\mathcal Z^{n-1}$. The adversary also knows the training algorithm, hyperparameters, release rule, and any public information used to form beliefs about the hidden record.

The relevant reconstruction task is therefore not an unconstrained search over the full record space. Side information typically narrows the target to a finite shortlist, or more generally a prior distribution, over plausible alternatives~\citep{balle2022informed}. A privacy breach occurs when the released model lets the adversary identify the hidden record, or a sufficiently close substitute, better than what could be achieved otherwise from its prior information alone. For example, if the adversary has narrowed the target to $m$ plausible candidates and has no further information, an oblivious exact-identification guess succeeds with probability $1/m$; a meaningful reconstruction audit should be interpreted relative to this baseline. \\



\paragraph{Differential privacy.}
The gold standard for privacy-preservation of training records is differential privacy~\citep{dwork2014book}. 
Two datasets $D,D'\in\mathcal Z^n$ are adjacent, written $D\sim D'$, if they comprise identical records except one. The two differing records must lie in the admissible record domain. A randomized mechanism $M:\mathcal Z^n\to\Theta$ is said to satisfy $(\varepsilon,\delta)$-differential privacy if for every adjacent pair $D\sim D'$ and every measurable set $S\subseteq\Theta$,
\begin{equation}
\label{eq:dp-definition}
\Pr[M(D)\in S]
\le
 e^\varepsilon \Pr[M(D')\in S]+\delta.
\end{equation}
This indistinguishability prevents any adversary from inferring the inclusion of any record in the training. By extension, DP prevents even an informed adversary, described above, from extracting the hidden record with certainty. \\

\paragraph{Gaussian output perturbation.}
In convex learning, we can ensure DP by output perturbation. Specifically, instead of releasing the ERM output $\widehat\theta(D)$, we release a perturbed model: 
\[
\widetilde\theta(D)=\widehat\theta(D)+\xi,
\qquad
\xi\sim\mathcal N(0,\sigma^2 I_p).
\]
Here, $\mathcal  N(0,\sigma^2 I_p)$ denotes a Gaussian distribution with mean $0$ and covariance matrix $\sigma^2 I_p$. This is commonly referred to as the Gaussian mechanism applied to the trained parameter vector. The variance $\sigma^2$ scales with the $\ell_2$ sensitivity of the original unperturbed model~\citep{dwork2014book,chaudhuri2011erm,balle2018analytic}:
\[
\Delta_2:=\sup_{D\sim D'}\|\widehat\theta(D)-\widehat\theta(D')\|_2.
\]
Specifically, when
\(
\sigma \geq \frac{\Delta_2}{\varepsilon}\sqrt{2\log(1.25/\delta)},
\)
the Gaussian mechanism satisfies $(\varepsilon, \delta)$-DP.\\

\paragraph{Finite-prior exact-identification audit.}
The reconstruction threat model above still leaves open how an audit should instantiate the adversary. Prior reconstruction studies often optimize directly for a reconstructed object using attack-specific machinery, such as generative priors or training traces~\citep{balle2022informed,hayes2023bounding}. In experimental evaluations, we take a complementary route that is natural for Ball-DP (presented in the following): side information narrows the target to a small set of plausible candidates inside the declared neighborhood, the adversary's task is exact identification.

Concretely, the audit asks which candidate record would most likely have produced the observed Gaussian-perturbed model. For each candidate $z_i$, the auditor forms the model that would have been released if $z_i$ were the hidden target, compares it with the observed release, and chooses the most likely candidate under the prior. Under a uniform finite prior, this becomes a nearest-model rule: choose the candidate whose simulated trained parameter vector is closest to the observed noisy release. Section~\ref{sec:finite-prior-map-audit} formalizes this MAP rule and gives the corresponding audit algorithm.


\section{Ball-DP: Indistinguishability Over Restricted Adjacency}
\label{sec:ball-dp}

Ball-DP introduces an explicit adjacency policy for mitigating reconstruction threats. Specifically, instead of asking the privacy mechanism to hide every admissible one-record replacement, it hides one-record replacements inside a declared metric ball. The radius of this ball becomes part of the overall privacy budget along with the indistinguishability parameters.

\subsection{Adjacency policy definition}
\label{sec:r-adjacency}

Fix a public metric $d$ on the record domain $\mathcal Z$. For embeddings, $d$ is typically the Euclidean distance on the embedding coordinate, possibly with label constraints as in ~\ref{eq:label-preserving-metric}.

\begin{definition}[$r$-adjacency]
For a radius $r>0$, datasets $D = (z_1, \ldots, z_n) \in \mathcal Z^n$ and $D' = (z'_1, \ldots, z'_n) \in\mathcal Z^n$ are \emph{$r$-adjacent}, written $D\sim_r D'$, if there exists an index $j\in[n]$ such that $z_i=z_i'$ for all $i\ne j$ and
\(
d(z_j,z'_j)\le r.
\) 

\end{definition}

We now define Ball-DP as $(\varepsilon, \delta)$-indistinguishability over $r$-adjacency.

\begin{definition}[Ball-DP]
A randomized mechanism $M:\mathcal Z^n\to\Theta$ is said to satisfy \emph{$(r, \varepsilon,\delta)$-Ball-DP} if for every measurable set $S\subseteq\Theta$ and every $r$-adjacent pair $D\sim_r D'$, we have
\[
\Pr[M(D)\in S]
\le
 e^\varepsilon \Pr[M(D')\in S]+\delta.
\]
\end{definition}

The radius $r$ changes the scope of the adjacency relation being protected, not the semantics of $(\varepsilon,\delta)$ indistinguishability on that relation. For $0 \le r' \le r$, every mechanism satisfying $(r, \varepsilon,\delta)$-Ball-DP also satisfies $(r', \varepsilon,\delta)$-Ball-DP. Increasing $r$ enables protection against a larger number of substitutions, at the cost of additional noise; a trade-off made precise later in the following.

\begin{proposition}[DP as a special case]
\label{prop:bounded-replacement-special-case}
Suppose that for all $z \in \mathcal Z$, we have $\|z\|_2\le B$ and the distance metric is Euclidean: $d(z,z') \coloneqq \|z-z'\|_2$. Then $(r, \varepsilon, \delta)$-Ball-DP implies $(\varepsilon, \delta)$-DP as long as $r \ge 2B$. 
\end{proposition}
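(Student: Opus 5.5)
The plan is to show that, under the stated hypotheses, the $r$-adjacency relation contains the standard adjacency relation. Once that inclusion holds, the Ball-DP inequality, which is required for every $r$-adjacent pair, applies in particular to every standard-adjacent pair, and that is exactly the $(\varepsilon,\delta)$-DP condition~\eqref{eq:dp-definition}. No property of the mechanism $M$ is used; the argument is purely about the two adjacency relations.

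\textbf{Step 1 (diameter bound).} For any admissible records $z,z'\in\mathcal Z$, the triangle inequality and the norm bound give
\[
d(z,z')=\|z-z'\|_2\le \|z\|_2+\|z'\|_2\le 2B .
\]
So every pair of admissible records is at distance at most $2B$, and hence at most $r$ whenever $r\ge 2B$.

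\textbf{Step 2 (inclusion of adjacency relations).} Take $D\sim D'$ in the standard sense. Then $D$ and $D'$ agree at every index except some $j$, and both $z_j$ and $z'_j$ lie in the admissible domain $\mathcal Z$. By Step 1, $d(z_j,z'_j)\le 2B\le r$, so $D\sim_r D'$. If $D=D'$, then any index $j$ witnesses $r$-adjacency trivially, since $d(z_j,z_j)=0$.

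\textbf{Step 3 (conclude).} Fix a measurable set $S\subseteq\Theta$ and a pair $D\sim D'$. By Step 2 the pair is $r$-adjacent, so the Ball-DP definition gives $\Pr[M(D)\in S]\le e^\varepsilon\Pr[M(D')\in S]+\delta$. This is~\eqref{eq:dp-definition}, so $M$ is $(\varepsilon,\delta)$-DP.

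The only real obstacle is interpretive rather than technical. The norm $\|z\|_2$ must be read on whatever vector representation the Euclidean metric $d$ is applied to, for example embedding coordinates with the label encoded or constrained. Both the bound $\|z\|_2\le B$ and the metric must refer to the same coordinates for Step 1 to go through. In particular, the ``admissible record domain'' in the DP definition has to be the same set $\mathcal Z$ on which the norm bound is assumed. With that reading fixed, the proof is the two-line diameter argument above.
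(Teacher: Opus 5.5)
Your proposal is correct and is the paper's argument: the triangle inequality gives $\|z-z'\|_2\le 2B\le r$ for any admissible replacement, so standard adjacency is contained in $r$-adjacency, and the Ball-DP inequality then yields~\eqref{eq:dp-definition}. You additionally spell out the final specialization step and the trivial $D=D'$ case, which the paper leaves implicit.
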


The distance metric $d(\cdot, \cdot)$ and radius $r$ makes the scope of privacy by Ball-DP explicit in the following sense. If a model release mechanism satisfies $(r, \varepsilon, \delta)$-Ball-DP then it prevents an informed adversary with access to all but the hidden record $z^*$ to estimate $z^*$ within distance $r$. Throughout this paper we consider the following label-preserving distance metric. The analyses, however, can be extended in the future to general distance metrics. \\


\paragraph{Label-preserving distance metric.}
Throughout this paper, records are pairs $(x,y)$ and we use the label-preserving Euclidean metric:
\begin{equation}
\label{eq:label-preserving-metric}
d\big((x,y),(x',y')\big)=
\begin{cases}
\|x-x'\|_2, & y=y',\\
\infty, & y\ne y'.
\end{cases}
\end{equation}
Finite-radius adjacency therefore protects embedding changes conditional on the label. It does not protect label changes; that would require a metric assigning finite distance to cross-label substitutions. This choice also makes class counts invariant over adjacent datasets, which is why count-aware calibration does not violate indistinguishability.

The metric, radius, embedding map, norm bound, and any count metadata used for calibration are treated as public or fixed by the adjacency policy. This is consistent with settings where a private learner is trained on fixed embeddings supplied by an external representation model. If any of these calibration quantities are estimated from sensitive records, that estimation must be privatized or justified as public side information before the Ball-DP claim applies.


\subsection{Ball-DP for convex learning}
\label{sec:erm}

We next connect radius-restricted adjacency to Gaussian output perturbation for the ERM problem~\eqref{eq:generic-erm}. The key quantity is how much a single record replacement can change the gradient contribution of the loss. We capture this through the following uniform regularity assumption.




\begin{assumption}[Lipschitz smoothness in sample]
\label{ass:lz}
For the parameter set $\Theta\subseteq\mathbb R^p$ containing all ERM minimizers of interest, assume that there exists a $L_z \ge 0$ such that for all $\theta\in\Theta$ and $z,z'\in\mathcal Z$,
\[
\|\nabla_\theta \ell(\theta;z)-\nabla_\theta \ell(\theta;z')\|_2
\le
L_z\,d(z,z').
\]
\end{assumption}


Section~\ref{sec:technical-statements} gives the constant $L_z$ for the ridge-prototype model used in the main experiments and records the corresponding constants for the additional convex heads. The derivations for the secondary heads are deferred to Appendix~\ref{app:proof-secondary-lz}.

\begin{theorem}[$r$-adjacency $\ell_2$-sensitivity of strongly convex ERM]
\label{thm:erm-sensitivity}
Assume that $F_D$ in~\eqref{eq:generic-erm} is differentiable and $\lambda$-strongly convex for every dataset $D$, and that Assumption~\ref{ass:lz} holds on parameter set $\Theta$. 
Then, for all $D\sim_r D'$, we have $\norm{\widehat\theta(D)-\widehat\theta(D')}_2
\le
\frac{L_z}{\lambda n}\,r$.
\end{theorem}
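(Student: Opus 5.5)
The plan is the standard strong-convexity sensitivity argument, in the style of \citet{chaudhuri2011erm}, with the uniform gradient bound replaced by the sample-Lipschitz bound of Assumption~\ref{ass:lz}. Fix $D\sim_r D'$ differing only at index $j$, so $d(z_j,z'_j)\le r$. Write $\theta:=\widehat\theta(D)$ and $\theta':=\widehat\theta(D')$, both assumed to lie in $\Theta$. Since $F_D$ and $F_{D'}$ are differentiable and strongly convex, the first-order optimality conditions give $\nabla F_D(\theta)=0$ and $\nabla F_{D'}(\theta')=0$. The two objectives differ only in the $j$-th loss term:
\[
F_{D'}(\vartheta)=F_D(\vartheta)+\tfrac1n\bigl(\ell(\vartheta;z'_j)-\ell(\vartheta;z_j)\bigr).
\]

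First, use $\lambda$-strong convexity of $F_D$ in its gradient-monotonicity form:
\[
\langle \nabla F_D(\theta')-\nabla F_D(\theta),\,\theta'-\theta\rangle\ge\lambda\|\theta'-\theta\|_2^2.
\]
Because $\nabla F_D(\theta)=0$, this leaves only $\nabla F_D(\theta')$. Next, compute that term from the decomposition and $\nabla F_{D'}(\theta')=0$:
\[
\nabla F_D(\theta')=\tfrac1n\bigl(\nabla_\theta\ell(\theta';z_j)-\nabla_\theta\ell(\theta';z'_j)\bigr).
\]
Substituting this and applying Cauchy--Schwarz gives
\[
\lambda\|\theta'-\theta\|_2^2\le \tfrac1n\,\|\nabla_\theta\ell(\theta';z_j)-\nabla_\theta\ell(\theta';z'_j)\|_2\,\|\theta'-\theta\|_2 .
\]
Then apply Assumption~\ref{ass:lz} at the point $\theta'\in\Theta$ to bound the gradient difference by $L_z d(z_j,z'_j)\le L_z r$. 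If $\theta\neq\theta'$, dividing by $\lambda\|\theta'-\theta\|_2$ gives $\|\theta-\theta'\|_2\le L_z r/(\lambda n)$; if $\theta=\theta'$ the bound is trivial.

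The main point needing care is where the assumption is applied. It holds only on $\Theta$, so the argument must evaluate the per-sample gradients at a minimizer, here $\theta'$, and never along a segment between $\theta$ and $\theta'$. The gradient-monotonicity formulation does exactly this, and the statement's hypothesis that $\Theta$ contains all ERM minimizers of interest covers it.

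There is also a subtlety about $\infty$-valued distances. With the label-preserving metric~\eqref{eq:label-preserving-metric}, $d(z_j,z'_j)\le r<\infty$ forces $y_j=y'_j$. The bound $L_z d$ is therefore finite and the estimate is meaningful. Strong convexity is used only for $F_D$, and it holds by the theorem's hypothesis.
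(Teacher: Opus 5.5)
Your proposal is correct and follows the paper's proof essentially step for step. Like the paper, you combine strong monotonicity of $\nabla F_D$ with $\nabla F_D(\widehat\theta(D))=0$, rewrite $\nabla F_D(\widehat\theta(D'))$ as the single-record gradient difference via $\nabla F_{D'}(\widehat\theta(D'))=0$, apply Cauchy--Schwarz, and invoke Assumption~\ref{ass:lz} only at the minimizer $\widehat\theta(D')\in\Theta$.
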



The proof of Theorem~\ref{thm:erm-sensitivity} can be found in Appendix~\ref{app:proof-erm-sensitivity}. The theorem shows that the distance between the corresponding ERM solutions is bounded linearly in the radius $r$.
Therefore, the Gaussian output perturbation can be calibrated to the protected radius $r$ rather than to any arbitrary replacement. 

In what follows, $\Phi$ denotes the standard normal cumulative distribution function (CDF). 
For sharper Gaussian calibration, we use the analytic Gaussian mechanism scale of~\citep{balle2018analytic}. Given an $\ell_2$ sensitivity bound $\Delta$, define
\[
\mathsf{AGM}(\varepsilon,\delta,\Delta)
:=
\inf\left\{
\sigma>0:
\Phi\left(\frac{\Delta}{2 \sigma}-\frac{\varepsilon \sigma}{\Delta}\right)-e^{\varepsilon} \Phi\left(-\frac{\Delta}{2 \sigma}-\frac{\varepsilon \sigma}{\Delta}\right) \leq \delta
\right\},
\]
where $\mathsf{AGM}(\varepsilon,\delta,0)=0$ by convention. 


\begin{theorem}[Ball-Gaussian output perturbation]
\label{thm:ball-gaussian-output-perturbation}
Fix $r > 0$, $\varepsilon>0$ and $\delta\in(0,1)$. Let $f(D)=\widehat\theta(D)$ and suppose $r$-adjacency $\ell_2$-sensitivity is bounded by $\Delta_r$, i.e., 
\[
\sup_{D\sim_r D'}\|f(D)-f(D')\|_2\le \Delta_r.
\]
If
\[
M(D)=f(D)+\xi,
\qquad
\xi\sim\mathcal N(0,\sigma^2 I_p),
\qquad
\sigma\ge \mathsf{AGM}(\varepsilon,\delta,\Delta_r),
\]
then $M$ satisfies $(r, \varepsilon,\delta)$-Ball-DP. 
\end{theorem}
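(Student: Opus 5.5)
The plan is to reduce the claim, pair by pair, to the exact privacy-profile characterization of the Gaussian mechanism due to \citet{balle2018analytic}. Fix an $r$-adjacent pair $D\sim_r D'$ and set $\mu=f(D)$, $\mu'=f(D')$, $v=\mu-\mu'$. By hypothesis $\|v\|_2\le\Delta_r$. The outputs are $M(D)\sim\mathcal N(\mu,\sigma^2I_p)$ and $M(D')\sim\mathcal N(\mu',\sigma^2I_p)$. The $(\varepsilon,\delta)$ inequality for all measurable $S$ is equivalent to the hockey-stick divergence satisfying
\[
E_{e^\varepsilon}\bigl(P\,\|\,Q\bigr)=\sup_S\bigl(P(S)-e^\varepsilon Q(S)\bigr)\le\delta,
\qquad P=M(D),\; Q=M(D').
\]
If $v=0$ the two distributions coincide and there is nothing to prove. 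This also covers the convention $\mathsf{AGM}(\varepsilon,\delta,0)=0$.

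Next I reduce to one dimension. Rotational invariance of the isotropic Gaussian and translation invariance show that the divergence depends only on $\|v\|_2$. The optimal set $S$ is a half-space orthogonal to $v$, since the likelihood ratio is monotone in $\langle x-\mu',v\rangle$. Computing it explicitly gives the analytic Gaussian identity
\[
E_{e^\varepsilon}(P\|Q)=g_\varepsilon(\|v\|_2/\sigma),
\qquad
g_\varepsilon(t):=\Phi\left(\frac{t}{2}-\frac{\varepsilon}{t}\right)-e^\varepsilon\Phi\left(-\frac{t}{2}-\frac{\varepsilon}{t}\right).
\]
I would cite this as Theorem~8 of \citet{balle2018analytic} rather than rederive it. Note that with $t=\Delta/\sigma$ this is exactly the expression inside the definition of $\mathsf{AGM}$.

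The main step is monotonicity, in two senses. First, $g_\varepsilon(t)$ is nondecreasing in $t$. This follows from differentiating: the two Gaussian density terms combine, using $\phi(t/2-\varepsilon/t)=e^\varepsilon\phi(-t/2-\varepsilon/t)$, into $\phi(t/2-\varepsilon/t)>0$. Alternatively, it follows from post-processing or data processing, since a shorter shift can be simulated from a longer one. Monotonicity lets me replace $\|v\|_2$ by the upper bound $\Delta_r$, giving
\[
E_{e^\varepsilon}(P\|Q)\le g_\varepsilon(\Delta_r/\sigma).
\]
Second, $\sigma\mapsto g_\varepsilon(\Delta_r/\sigma)$ is nonincreasing and continuous. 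Hence the set in the definition of $\mathsf{AGM}$ is an up-ray, closed by continuity. So every $\sigma\ge\mathsf{AGM}(\varepsilon,\delta,\Delta_r)$, including the infimum itself, satisfies $g_\varepsilon(\Delta_r/\sigma)\le\delta$. The only delicate point is the boundary case $\sigma$ equal to the infimum, which is handled by this continuity. Since the pair was arbitrary, $M$ satisfies $(r,\varepsilon,\delta)$-Ball-DP. The argument uses adjacency only through the sensitivity bound, exactly as the paper's remark on metric-DP anticipates.
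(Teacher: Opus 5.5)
Your proposal is correct and takes the same route as the paper: reduce each $r$-adjacent pair to two equal-covariance Gaussians whose means are at most $\Delta_r$ apart, then invoke the analytic Gaussian mechanism theorem of Balle and Wang. You also spell out two points the paper leaves inside that citation, and both check out: that $g_\varepsilon$ is increasing in $t$, with derivative $\varphi(t/2-\varepsilon/t)$, so separation $\le\Delta_r$ suffices; and that monotonicity and continuity in $\sigma$ make the choice $\sigma=\mathsf{AGM}(\varepsilon,\delta,\Delta_r)$ itself admissible.
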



\begin{proof}
For every $r$-adjacent pair $D\sim_r D'$, the two releases are Gaussian distributions with common covariance $\sigma^2I_p$ and means separated by at most $\Delta_r$. By the analytic Gaussian mechanism theorem~\citep[Theorem 8]{balle2018analytic}, the condition $\sigma\ge\mathsf{AGM}(\varepsilon,\delta,\Delta_r)$ implies
\[
\Pr[M_r(D)\in S]
\le
e^\varepsilon\Pr[M_r(D')\in S]+\delta
\]
for every measurable set $S\subseteq\Theta$. Since this holds for every $D\sim_r D'$, the mechanism satisfies $(r, \varepsilon,\delta)$-Ball-DP.
\end{proof}

Alternatively, if $\sigma\ge \frac{\Delta_r}{\varepsilon}\sqrt{2\log(1.25/\delta)}$, 
then the same conclusion follows from the classic Gaussian mechanism bound~\citep{dwork2014book}. Moreover, under Assumption~\ref{ass:lz}, due to Theorem~\ref{thm:erm-sensitivity} it suffices to use $\Delta_r=\frac{L_z}{\lambda n}r$. Note that the Gaussian mechanism calibrated to $(r, \varepsilon, \delta)$-Ball-DP also satisfies $(t, \varepsilon_t, \delta_t)$-Ball-DP for $t > r$ as long as the indistinguishability parameters $\varepsilon_t, \delta_t$ scales with the ratio $t/r$. The proof of Proposition~\ref{prop:outside-ball} is deferred to Appendix~\ref{app:proof-outside-ball}.



\begin{proposition}[Distance-dependent guarantee for Gaussian release]
\label{prop:outside-ball}
Suppose Assumption~\ref{ass:lz} holds true and the Gaussian mechanism $M$ with noise parameter $\sigma$ satisfies $(r, \varepsilon, \delta)$-Ball-DP. Then, for any $t \geq r$, $M$ satisfies $(t, \varepsilon_t, \delta_t)$-Ball-DP as long as 
\[
\delta_t
\ge
\Phi\left(\frac{\Delta_t}{2 \sigma}-\frac{\varepsilon_t \sigma}{\Delta_t}\right)-e^{\varepsilon_t} \Phi\left(-\frac{\Delta_t}{2 \sigma}-\frac{\varepsilon_t \sigma}{\Delta_t}\right),
\]
where $\Delta_t:=\frac{L_z}{\lambda n}t$.
Moreover, if the mechanism $M$ is calibrated as per the non-analytical sufficient bound
\[
\sigma\ge \frac{\Delta_r}{\varepsilon}\sqrt{2\log(1.25/\delta)},
\qquad
\Delta_r:=\frac{L_z}{\lambda n}r,
\]
then $M$ satisfies $(t, \varepsilon_t, \delta_t)$-Ball-DP with $\varepsilon':=\varepsilon \frac{t}{r}$ and $\delta_t = \delta$. 
\end{proposition}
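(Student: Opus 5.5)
The plan is to reduce everything to the sensitivity of the unperturbed ERM under $t$-adjacency and then reuse the Gaussian-mechanism facts already invoked in Theorem~\ref{thm:ball-gaussian-output-perturbation}. I take $M(D)=\widehat\theta(D)+\xi$ with $\xi\sim\mathcal N(0,\sigma^2 I_p)$, as in that theorem. I also assume, as there, that $F_D$ is differentiable and $\lambda$-strongly convex, since this is needed to invoke Theorem~\ref{thm:erm-sensitivity}.

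\textbf{Step 1.} Fix $t\ge r$ and a pair $D\sim_t D'$. Theorem~\ref{thm:erm-sensitivity} is stated for an arbitrary radius, and Assumption~\ref{ass:lz} does not depend on $r$. Applying the theorem with radius $t$ therefore gives
\[
\|\widehat\theta(D)-\widehat\theta(D')\|_2\le \tfrac{L_z}{\lambda n}t=\Delta_t .
\]
So $M(D)$ and $M(D')$ are Gaussians with common covariance $\sigma^2 I_p$ whose means are at most $\Delta_t$ apart.

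\textbf{Step 2 (analytic part).} I invoke \citep[Theorem 8]{balle2018analytic}. For two such Gaussians at mean distance $\Delta$, the pair is $(\varepsilon_t,\delta_t)$-indistinguishable exactly when
\[
\delta_t\ge g(\Delta):=\Phi\!\left(\tfrac{\Delta}{2\sigma}-\tfrac{\varepsilon_t\sigma}{\Delta}\right)-e^{\varepsilon_t}\Phi\!\left(-\tfrac{\Delta}{2\sigma}-\tfrac{\varepsilon_t\sigma}{\Delta}\right).
\]
The actual mean distance may be strictly smaller than $\Delta_t$, so I need $g$ to be nondecreasing in $\Delta$. This holds because the privacy profile of the Gaussian mechanism is monotone in sensitivity. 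One way to see it: a mean shift of size $\Delta'<\Delta$ is a post-processing-free restriction of the worst case, since the likelihood-ratio test family at distance $\Delta'$ is dominated. Alternatively, one can check directly that $\partial g/\partial\Delta\ge0$; the two $\Phi'$ terms combine to $\phi(\cdot)/\sigma\ge0$ after using the identity $e^{\varepsilon}\phi(-a-b)=\phi(a-b)$ with $b=\varepsilon\sigma/\Delta$. With monotonicity in hand, the hypothesis $\delta_t\ge g(\Delta_t)$ yields the inequality for every $S$ and every $t$-adjacent pair. This monotonicity check is the only real obstacle, and I would do it by the derivative computation.

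\textbf{Step 3 (classical part).} Suppose $\sigma\ge \frac{\Delta_r}{\varepsilon}\sqrt{2\log(1.25/\delta)}$. Since $\Delta_t=\frac{t}{r}\Delta_r$, setting $\varepsilon_t:=\varepsilon t/r$ gives
\[
\tfrac{\Delta_t}{\varepsilon_t}=\tfrac{\Delta_r}{\varepsilon},
\qquad\text{hence}\qquad
\sigma\ge\tfrac{\Delta_t}{\varepsilon_t}\sqrt{2\log(1.25/\delta)}.
\]
The classic Gaussian mechanism bound \citep{dwork2014book}, applied with sensitivity $\Delta_t$ from Step 1, then gives $(t,\varepsilon t/r,\delta)$-Ball-DP.

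One caveat: the textbook statement of that bound assumes $\varepsilon_t<1$. If $\varepsilon t/r\ge1$, I would flag this restriction, or fall back on Step 2 with $\varepsilon_t=\varepsilon t/r$. In that fallback I would need to show $g(\Delta_t)\le\delta$. The argument is that $\Delta_t/\sigma$ and $\varepsilon_t\sigma/\Delta_t$ both scale relative to the $r$ case in ways that do not increase the bound, which I expect is the most delicate point to state honestly.
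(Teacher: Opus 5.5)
Your Steps 1--3 follow the paper's proof essentially line by line. You apply Theorem~\ref{thm:erm-sensitivity} at radius $t$ and the analytic Gaussian characterization (Balle--Wang, Theorem 8) for the first claim. For the second claim you use the identity $\Delta_t/\varepsilon_t=\Delta_r/\varepsilon$ plus the classical bound, with the same caveat as the paper, whose proof only concludes the second claim ``whenever $\varepsilon_t$ lies in the usual parameter range''. Your computation $\partial g/\partial\Delta=\varphi(\Delta/(2\sigma)-\varepsilon_t\sigma/\Delta)/\sigma\ge0$ (with $\varphi$ the standard normal density) is correct. It makes explicit a monotonicity step the paper uses silently. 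The same fact also follows from post-processing: the kernel $x\mapsto(\Delta'/\Delta)x+\mathcal N(0,\sigma^2(1-\Delta'^2/\Delta^2)I)$ maps a pair of $\mathcal N(\cdot,\sigma^2I)$ laws at mean distance $\Delta$ to such a pair at mean distance $\Delta'\le\Delta$.

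Do not pursue your fallback for $\varepsilon t/r\ge1$, however. It cannot work, and the scaling intuition behind it is wrong.

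Under the classical calibration, $b:=\varepsilon_t\sigma/\Delta_t=\varepsilon\sigma/\Delta_r\ge\sqrt{2\log(1.25/\delta)}$ does not depend on $t$. Meanwhile $x:=\Delta_t/(2\sigma)$ grows linearly in $t$, and $\varepsilon_t=2xb$. The quantity you would need to bound by $\delta$ is therefore $h(x):=\Phi(x-b)-e^{2xb}\Phi(-x-b)$. Using $e^{2xb}\varphi(x+b)=\varphi(x-b)$ and the Mills bound $\Phi(-y)<\varphi(y)/y$, one gets $h'(x)=2\varphi(x-b)-2b\,e^{2xb}\Phi(-x-b)>2\varphi(x-b)\,x/(x+b)>0$. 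Moreover $h(x)\to1$ as $x\to\infty$.

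So along your scaling the exact $\delta$ strictly increases in $t$ and eventually exceeds any $\delta<1$. At $\delta=10^{-6}$ with $\sigma$ at the classical threshold, $g(\Delta_t)>\delta$ already near $\varepsilon_t\approx9$. Sometimes the $t$-sensitivity genuinely grows linearly in $t$, e.g.\ ridge prototypes on unbounded features, where it is exactly $2t/(2+\lambda n)$. In that case the analytic condition is also necessary, so $(t,\varepsilon t/r,\delta)$-Ball-DP really fails for large $t/r$.

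The restriction on $\varepsilon_t$ is thus essential to the second claim, not an artifact of the textbook proof. Keep your first option and state it explicitly. Alternatively, use the analytic condition from the first claim with $\varepsilon_t$ chosen so that it holds.
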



In particular, if a mechanism has been calibrated for radius $r$, and one later evaluates substitutions at a larger radius $t>r$, the same Gaussian release still yields a weaker but valid accounting guarantee, with privacy parameter inflated from $\varepsilon$ to $\varepsilon \frac{t}{r}$.
Hence, when $\norm{z}_2 \leq B$ for all records $z$, combining Proposition~\ref{prop:bounded-replacement-special-case} with Proposition~\ref{prop:outside-ball}, the Ball-DP privacy guarantee can be converted into standard DP guarantee by setting $t = 2B$.

\section{Ball-ReRo Certificates and Audits}
\label{sec:gaussian-rero}

We now translate the radius-restricted indistinguishability guarantee into reconstruction-robustness statements. The goal is not to define a new attack model, but to make precise what an informed adversary can gain from the released Gaussian-perturbed model beyond its prior information. The section has three parts. First, we state the exact finite-prior MAP audit used in the experiments. Second, we define the Ball-ReRo baseline and show how Ball-DP and Ball-RDP imply reconstruction certificates. Third, we give the direct Gaussian Ball-ReRo bound used as the empirical certificate.

Throughout this section, the adversary knows the public remainder dataset $D^-$, the training algorithm, the hyperparameters, the Ball-DP policy radius, and the candidate-construction protocol. The hidden record is denoted by $Z$, the completed dataset is $D^-\cup\{Z\}$, and the released model is
\[
\widetilde\theta
=
\widehat\theta(D^-\cup\{Z\})+
\xi,
\qquad
\xi\sim\mathcal N(0,\sigma^2I).
\]
The audit asks whether $\widetilde\theta$ identifies the hidden record better than what is already possible from the prior and the allowed candidate set.

\subsection{Finite-prior MAP audit for Gaussian releases}
\label{sec:finite-prior-map-audit}

We consider a finite candidate set
\[
S=\{z_1,\dots,z_m\}\subseteq \ball{d}{u,r}
\]
that contains the hidden target, with prior weights $\pi_1,\dots,\pi_m>0$ satisfying $\sum_{i=1}^m\pi_i=1$. This is an adversary-favorable audit: the side information has already reduced the reconstruction problem to choosing among plausible local alternatives. The remaining question is whether the released model identifies which candidate was used as the hidden training record.

\begin{proposition}[Exact finite-prior MAP attack for Gaussian release]
\label{prop:finite-map}
For the Gaussian release described above, a Bayes-optimal exact-identification attack under the finite prior
\[
\pi_S=\sum_{i=1}^m \pi_i\delta_{z_i}
\]
is
\[
\widehat z_{\mathrm{MAP}}
\in
\arg\max_{z_i\in S}
\left\{
\log \pi_i
-
\frac{1}{2\sigma^2}
\bigl\|
\widetilde\theta-\widehat\theta(D^-\cup\{z_i\})
\bigr\|_2^2
\right\}.
\]
\end{proposition}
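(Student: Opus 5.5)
The plan is the standard Bayes-decision argument for a finite hypothesis class. First I will fix the probabilistic model. The hidden record $Z$ is drawn from $\pi_S$, so $\Pr[Z=z_i]=\pi_i$. Conditionally on $Z=z_i$, the release is
\[
\widetilde\theta\sim\mathcal N\bigl(\mu_i,\sigma^2 I_p\bigr),
\qquad
\mu_i:=\widehat\theta(D^-\cup\{z_i\}).
\]
This holds because $\widehat\theta$ is a deterministic function of the completed dataset: the minimizer is unique by strong convexity of~\eqref{eq:generic-erm}. The noise $\xi$ is independent of $Z$. Since the adversary knows $D^-$, the training algorithm and $\sigma$, each $\mu_i$ is computable by the attacker.

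Next I will show that the MAP rule maximizes exact-identification success. Any (possibly randomized) attack is a measurable map $A:\mathbb R^p\to\Delta(S)$, and its success probability is
\[
\Pr[A(\widetilde\theta)=Z]
=\int_{\mathbb R^p}\sum_{i=1}^m \Pr[A(\theta)=z_i]\,\pi_i\,p_i(\theta)\,d\theta ,
\]
where $p_i$ is the Gaussian density of $\mathcal N(\mu_i,\sigma^2 I_p)$. For each fixed $\theta$, the inner sum is a convex combination of the values $\pi_i p_i(\theta)$. It is therefore at most $\max_i \pi_i p_i(\theta)$, with equality when $A(\theta)$ puts all its mass on maximizers. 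Integrating pointwise, the rule $\widehat z(\theta)\in\arg\max_i \pi_i p_i(\theta)$ is Bayes-optimal. Equivalently, it maximizes the posterior $\Pr[Z=z_i\mid\widetilde\theta]\propto \pi_i p_i(\widetilde\theta)$.

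The last step is the computation. Taking logarithms preserves the argmax, and
\[
\log p_i(\theta)=-\tfrac{p}{2}\log(2\pi\sigma^2)-\frac{1}{2\sigma^2}\|\theta-\mu_i\|_2^2 .
\]
The normalizing constant is the same for every $i$ because all candidates share the covariance $\sigma^2 I_p$, so it can be dropped. This yields exactly the stated objective $\log\pi_i-\frac{1}{2\sigma^2}\|\widetilde\theta-\widehat\theta(D^-\cup\{z_i\})\|_2^2$. Ties form a Lebesgue-null set unless two means coincide, and any tie-breaking rule attains the same value. This is why the statement writes ``$\in\arg\max$''.

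I expect the only delicate point to be measurability: the selected argmax must be a measurable function of $\widetilde\theta$. I will handle it with a fixed deterministic tie-break, such as the smallest index. Each region $\{\theta:\text{index } i \text{ is selected}\}$ is then defined by finitely many quadratic inequalities in $\theta$, so it is Borel. Everything else is routine.
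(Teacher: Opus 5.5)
Your proposal is correct and follows the same route as the paper's proof. Both write the equal-covariance Gaussian likelihood of each candidate model $\widehat\theta(D^-\cup\{z_i\})$, apply Bayes' rule, take logarithms, and drop the normalizer, which does not depend on the candidate. Beyond that, your pointwise convex-combination argument that the posterior maximizer is Bayes-optimal for exact identification, together with the measurable tie-break, makes explicit a step the paper simply assumes when it treats the MAP rule as the optimal attack.
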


For a uniform prior on $S$, the $\log\pi_i$ term is constant and the attack reduces to constrained least squares over the candidate set. The proof is deferred to Appendix~\ref{app:proof-finite-map}.

\begin{algorithm}[t]
\caption{Finite-prior MAP audit for Gaussian release}
\label{alg:finite-map}
\begin{algorithmic}[1]
\Require observed release $\widetilde\theta$, public remainder $D^-$, noise scale $\sigma$
\Require candidate set $S=\{z_1,\dots,z_m\}\subseteq \ball{d}{u,r}$ and prior weights $\pi_1,\dots,\pi_m$
\For{$i=1,\dots,m$}
    \State $\theta_i \gets \widehat\theta(D^-\cup\{z_i\})$ \Comment{candidate model under target $z_i$}
    \State $s_i \gets \log \pi_i - \dfrac{1}{2\sigma^2}\|\widetilde\theta-\theta_i\|_2^2$ \Comment{Gaussian log-posterior up to constants}
\EndFor
\State \Return $z_{\hat\imath}$ for any $\hat\imath\in\arg\max_i s_i$
\end{algorithmic}
\end{algorithm}

\subsection{Ball-supported priors and feasible outputs}

A reconstruction certificate must be interpreted relative to the adversary's side information. In particular, if the audit protocol restricts the reconstructor to a feasible output set, then the oblivious baseline should obey the same restriction. We therefore define the prior support, the allowed-output baseline, and the Ball-ReRo property using the same feasible-output family.

\begin{definition}[Ball-supported prior]
A family of priors $\{\pi_{u,r}\}_{u\in\mathcal Z}$ is \emph{Ball-supported} if each $\pi_{u,r}$ is supported on the policy ball
\[
\ball{d}{u,r}:=\{z\in\mathcal Z:d(z,u)\le r\}.
\]
The record $u$ is a public anchor or reference record and need not itself be sampled from the prior.
\end{definition}

\begin{definition}[Allowed-output reconstruction baseline]
\label{def:kappa}
Fix a prior $\pi$, reconstruction loss $\rho:\mathcal Z\times\mathcal Z\to[0,\infty)$, tolerance $\eta\ge0$, and feasible output set $\mathcal A\subseteq\mathcal Z$. Define
\[
\kappa_{\pi,\rho}(\eta;\mathcal A)
:=
\sup_{a\in\mathcal A}
\Pr_{Z\sim\pi}[\rho(Z,a)\le\eta].
\]
This is the best success probability of an oblivious adversary that ignores the mechanism output while obeying the same feasible-output restriction as the reconstructor. When the prior, loss, tolerance, and feasible set are clear from context, we write this baseline simply as $\kappa$.
\end{definition}

\begin{definition}[Ball-ReRo]
\label{def:ball-rero}
Fix a radius $r>0$, a reconstruction loss $\rho$, a tolerance $\eta\ge0$, a Ball-supported prior family $\{\pi_{u,r}\}_{u\in\mathcal Z}$, and feasible output sets $\{\mathcal A_{u,r}\}_{u\in\mathcal Z}$ with $\mathcal A_{u,r}\subseteq\mathcal Z$. A mechanism $M:\mathcal Z^n\to\Theta$ satisfies $(\eta,\gamma,r)$-Ball-ReRo with respect to $(\rho,\{\pi_{u,r}\},\{\mathcal A_{u,r}\})$ if, for every public remainder dataset $D^-\in\mathcal Z^{n-1}$, every anchor $u\in\mathcal Z$, and every reconstructor $R:\Theta\times\mathcal Z^{n-1}\to\mathcal A_{u,r}$,
\[
\Pr_{Z\sim\pi_{u,r},\,\theta\sim M(D^-\cup\{Z\})}
\bigl[
\rho(Z,R(\theta,D^-))\le\eta
\bigr]
\le \gamma.
\]
\end{definition}

For finite exact-identification audits, the reconstructor is constrained to output one of the candidates in $S$. With a uniform prior on $m$ candidates and exact-identification loss, the baseline in Definition~\ref{def:kappa} is therefore $\kappa=1/m$. A smaller policy radius may make it harder to construct a large, well-separated feasible set near a given anchor; this is part of the operational trade-off and must be reported explicitly with the audit.

\begin{theorem}[Ball-DP implies Ball-ReRo]
\label{thm:ball-dp-rero}
Fix $D^-\in\mathcal Z^{n-1}$, an anchor $u\in\mathcal Z$, a Ball-supported prior $\pi_{u,r}$, a loss $\rho$, a tolerance $\eta$, and a feasible output set $\mathcal A_{u,r}\subseteq\mathcal Z$. Let
\[
\kappa:=\kappa_{\pi_{u,r},\rho}(\eta;\mathcal A_{u,r}).
\]
If $M$ satisfies $(r,\varepsilon,\delta)$-Ball-DP, then every reconstructor $R:\Theta\times\mathcal Z^{n-1}\to\mathcal A_{u,r}$ satisfies
\[
\Pr_{Z\sim\pi_{u,r},\,\theta\sim M(D^-\cup\{Z\})}
\bigl[
\rho(Z,R(\theta,D^-))\le\eta
\bigr]
\le e^{\varepsilon}\kappa+\delta.
\]
Equivalently, $M$ satisfies $(\eta,e^\varepsilon\kappa+\delta,r)$-Ball-ReRo for this prior and feasible-output set.
\end{theorem}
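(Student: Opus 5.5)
The plan is to run the standard ReRo argument of \citet{hayes2023bounding}: compare the real experiment with a ``decoupled'' experiment in which the mechanism is fed a record independent of the target. The one new ingredient is that every pair of records we compare must be $r$-adjacent, so that Ball-DP applies.

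First, fix a public reference record $w$ in the support of $\pi_{u,r}$; any point of that support will do. The only thing we need to verify is that $D^-\cup\{Z\}\sim_r D^-\cup\{w\}$ for every $Z$ in the support. This is where the ball restriction enters, and I expect it to be the main obstacle. Both $Z$ and $w$ lie in $\ball{d}{u,r}$, so the triangle inequality only gives $d(Z,w)\le 2r$, not $\le r$.

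To close that gap I would compare against the anchor itself, taking $w=u$. Then $d(Z,u)\le r$ holds directly from Ball-supportedness, and the pair $D^-\cup\{Z\}\sim_r D^-\cup\{u\}$ is $r$-adjacent. Here the replaced record occupies the same index slot as $Z$, which is consistent with the ordered-tuple convention. Nothing requires $u$ to be drawn from the prior: $u$ is public and $u\in\ball{d}{u,r}$ trivially. This is exactly why the definition states that ``the record $u$ is a public anchor and need not itself be sampled from the prior.'' So $w=u$ is the right choice.

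Second, fix $z$ in the support and define the success set
\[
S_z:=\{\theta\in\Theta:\rho(z,R(\theta,D^-))\le\eta\}.
\]
We assume $R$ and $\rho$ are measurable so that $S_z$ is measurable. Ball-DP applied to the pair above gives
\[
\Pr_{\theta\sim M(D^-\cup\{z\})}[\theta\in S_z]\le e^{\varepsilon}\Pr_{\theta'\sim M(D^-\cup\{u\})}[\theta'\in S_z]+\delta .
\]

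Third, integrate this bound over $z\sim\pi_{u,r}$ and use Fubini to swap the order of integration on the right-hand side:
\[
\mathbb E_{\theta'\sim M(D^-\cup\{u\})}\Bigl[\Pr_{Z\sim\pi_{u,r}}\bigl[\rho(Z,R(\theta',D^-))\le\eta\bigr]\Bigr].
\]
Now $\theta'$ is independent of $Z$, and $a:=R(\theta',D^-)\in\mathcal A_{u,r}$. So for each fixed $\theta'$ the inner probability is at most $\sup_{a\in\mathcal A_{u,r}}\Pr_Z[\rho(Z,a)\le\eta]=\kappa$, by Definition~\ref{def:kappa}. It follows that the outer expectation is also at most $\kappa$.

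Combining the three steps gives the bound $e^\varepsilon\kappa+\delta$. Because $D^-$ and $u$ were arbitrary, this also yields the Ball-ReRo restatement in Definition~\ref{def:ball-rero}.
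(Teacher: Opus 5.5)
Your proof is correct and follows essentially the same route as the paper: compare each completed dataset $D^-\cup\{z\}$ with the anchor dataset $D^-\cup\{u\}$ (which is $r$-adjacent because $d(z,u)\le r$ on the support), apply Ball-DP to the success event, swap the order of integration by Tonelli/Fubini, and bound the inner probability by $\kappa$ using $R(\theta,D^-)\in\mathcal A_{u,r}$. One cosmetic fix: drop the opening remark that ``any point of that support will do'' as the reference record, since, as you yourself observe, only the choice $w=u$ guarantees $r$-adjacency.
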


The proof is in Appendix~\ref{app:proof-ball-dp-rero}. Theorem~\ref{thm:ball-dp-rero} is the radius-restricted analogue of the standard observation that DP limits reconstruction attacks~\citep{balle2022informed}. The important operational term is the baseline $\kappa$: it is the best success probability achievable without observing the mechanism output, under the same prior and feasible-output restriction as the reconstructor. If the prior and side information already make this baseline large, then a high raw reconstruction rate may mostly reflect the audit setup rather than additional information leaked by the release. The relevant empirical question is therefore how far the observed success probability rises above $\kappa$.

We next adapt Rényi-DP (RDP)~\citep{mironov2017renyi} to Ball-DP, which we call {\em Ball-RDP}, and state the corresponding Ball-ReRo guarantee.

\begin{definition}[Ball-RDP]
A mechanism $M$ satisfies \emph{$(\alpha,\varepsilon_\alpha)$ Ball-RDP at radius $r$} if for every $D\sim_r D'$,
\[
D_\alpha\big(M(D)\,\|\,M(D')\big)\le \varepsilon_\alpha,
\qquad \alpha>1.
\]
\end{definition}

The Gaussian release mechanism can be adjusted as follows to ensure Ball-RDP.

\begin{lemma}[Gaussian Ball-RDP]
\label{lem:gaussian-ball-rdp}
Let $M(D)=f(D)+\xi$ with $\xi\sim\mathcal N(0,\sigma^2I)$ and $r$-adjacency sensitivity
\[
\Delta_2(r):=\sup_{D\sim_r D'}\|f(D)-f(D')\|_2.
\]
Then for every order $\alpha>1$,
\[
D_\alpha\big(M(D)\,\|\,M(D')\big)
\le
\frac{\alpha\Delta_2(r)^2}{2\sigma^2}
\qquad\text{for all }D\sim_r D'.
\]
\end{lemma}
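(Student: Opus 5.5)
The plan is to reduce the statement to the closed-form Rényi divergence between two isotropic Gaussians with equal covariance, and then take the supremum over $r$-adjacent pairs using the definition of $\Delta_2(r)$. Fix $D\sim_r D'$ and write $\mu:=f(D)$ and $\mu':=f(D')$. Then $M(D)=\mathcal N(\mu,\sigma^2 I)$ and $M(D')=\mathcal N(\mu',\sigma^2 I)$. The only input from the Ball-DP setup is the bound $\|\mu-\mu'\|_2\le\Delta_2(r)$, which holds by definition of the $r$-adjacency sensitivity. No property of $f$ beyond this is needed; in particular, Theorem~\ref{thm:erm-sensitivity} is not required, though it can be plugged in afterwards to give $\Delta_2(r)\le L_z r/(\lambda n)$.

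The core step is to show that, for $\alpha>1$,
\[
D_\alpha\big(\mathcal N(\mu,\sigma^2I)\,\|\,\mathcal N(\mu',\sigma^2I)\big)=\frac{\alpha\|\mu-\mu'\|_2^2}{2\sigma^2}.
\]
I would do this directly from $D_\alpha(P\|Q)=\frac{1}{\alpha-1}\log\int p^\alpha q^{1-\alpha}$. Completing the square in the exponent, $\alpha\|x-\mu\|^2+(1-\alpha)\|x-\mu'\|^2=\|x-(\alpha\mu+(1-\alpha)\mu')\|^2-\alpha(1-\alpha)\|\mu-\mu'\|^2$. Hence the integral equals $\exp\big(\alpha(\alpha-1)\|\mu-\mu'\|^2/(2\sigma^2)\big)$, since the remaining Gaussian integral over the shifted mean is $1$. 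Taking the logarithm and dividing by $\alpha-1$ gives the formula. A cleaner alternative is to rotate coordinates so that $\mu-\mu'$ lies along $e_1$. By additivity of Rényi divergence over independent product coordinates, all coordinates except the first then contribute zero, so the problem reduces to the one-dimensional case treated in \citet{mironov2017renyi}.

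The last step is monotonicity: the right-hand side is increasing in $\|\mu-\mu'\|_2$, and $\|\mu-\mu'\|_2\le\Delta_2(r)$, which gives the claimed bound for every $D\sim_r D'$. The main obstacle is merely bookkeeping in the completing-the-square step, namely checking the sign of $\alpha(1-\alpha)$ and confirming that the cross term cancels exactly. There is also a degenerate case to handle: if $\Delta_2(r)=\infty$, the bound is vacuous, and if $\Delta_2(r)=0$, the two distributions coincide and the divergence is $0$. Both cases are consistent with the stated inequality.
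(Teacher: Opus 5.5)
Your proposal is correct and follows the paper's proof. Both fix $D\sim_r D'$, invoke the equal-covariance identity $D_\alpha(\mathcal N(\mu,\sigma^2I)\,\|\,\mathcal N(\mu',\sigma^2I))=\alpha\|\mu-\mu'\|_2^2/(2\sigma^2)$, and bound $\|\mu-\mu'\|_2$ by $\Delta_2(r)$; the only difference is that the paper quotes the identity while you derive it.

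In your derivation, the completing-the-square step has the wrong sign. It should read $\alpha\|x-\mu\|_2^2+(1-\alpha)\|x-\mu'\|_2^2=\|x-(\alpha\mu+(1-\alpha)\mu')\|_2^2+\alpha(1-\alpha)\|\mu-\mu'\|_2^2$. With your minus sign, the integral would be $\exp(-\alpha(\alpha-1)\|\mu-\mu'\|_2^2/(2\sigma^2))$ and the divergence would come out negative. The plus sign gives the value $\exp(\alpha(\alpha-1)\|\mu-\mu'\|_2^2/(2\sigma^2))$ that you state.
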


Lastly, we obtain the following Ball-ReRo guarantee for the Gaussian Ball-RDP mechanism.

\begin{theorem}[Ball-RDP implies Ball-ReRo]
\label{thm:ball-rdp-rero}
Fix $D^-\in\mathcal Z^{n-1}$, a Ball-supported prior $\pi_{u,r}$, a loss $\rho$, a tolerance $\eta$, and a feasible output set $\mathcal A_{u,r}\subseteq\mathcal Z$. Let $\kappa:=\kappa_{\pi_{u,r},\rho}(\eta;\mathcal A_{u,r})$. If a mechanism satisfies $(\alpha,\varepsilon_\alpha)$ Ball-RDP at radius $r$ for some $\alpha>1$, then every reconstructor $R$ with outputs in $\mathcal A_{u,r}$ obeys the Ball-ReRo inequality with
\[
\gamma = \gamma_{\alpha}
:=
\min\left\{1,
\kappa^{(\alpha-1)/\alpha}
\exp\left(\frac{\alpha-1}{\alpha}\varepsilon_\alpha\right)
\right\}.
\]
\end{theorem}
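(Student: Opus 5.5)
The plan is the standard argument that Rényi DP bounds reconstruction success, as in the informed-adversary analysis of \citet{balle2022informed}, here restricted to the ball. Fix $D^-$ and the reconstructor $R$. Because $\pi_{u,r}$ is Ball-supported, every $z$ in its support satisfies $d(z,u)\le r$. I will therefore compare each real-world pair $(Z,\theta)$ with $\theta\sim M(D^-\cup\{Z\})$ against a reference world in which $\theta\sim M(D^-\cup\{u\})$ is independent of $Z$. This requires $D^-\cup\{u\}\sim_r D^-\cup\{z\}$ for every such $z$, which holds by the definition of $r$-adjacency. If $u$ itself is not admissible, I would instead fix any $z_0$ in the support, which gives distance $\le 2r$; to stay at radius $r$ I expect to simply use $u$ as the paper's setup intends. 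The quantity to bound is the success probability $P=\mathbb E_{Z,\theta}[g(Z,\theta)]$, where $g(z,\theta):=\ind{\rho(z,R(\theta,D^-))\le\eta}\in\{0,1\}$.

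\textbf{Step 1 (reference world).} Let $\mu:=M(D^-\cup\{u\})$, and let $P_0$ be the success probability when $Z\sim\pi_{u,r}$ and $\theta\sim\mu$ are independent. Then
\[
P_0=\mathbb E_{\theta\sim\mu}\Pr_{Z}[\rho(Z,R(\theta,D^-))\le\eta]\le\sup_{a\in\mathcal A_{u,r}}\Pr_Z[\rho(Z,a)\le\eta]=\kappa,
\]
because $R(\theta,D^-)\in\mathcal A_{u,r}$ for every $\theta$.

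\textbf{Step 2 (change of measure via H\"older).} For each $z$, Ball-RDP gives $D_\alpha(M(D^-\cup\{z\})\,\|\,\mu)\le\varepsilon_\alpha$. By the data-processing/joint-convexity property of Rényi divergence, the joint law $\pi\otimes M(D^-\cup\{\cdot\})$ also has Rényi divergence at most $\varepsilon_\alpha$ from $\pi\otimes\mu$; equivalently, one can work conditionally on $z$ and integrate at the end. Write $h$ for the likelihood ratio. H\"older's inequality with exponents $\alpha$ and $\alpha/(\alpha-1)$ then gives
\[
P=\mathbb E_{\pi\otimes\mu}[h\,g]\le\bigl(\mathbb E_{\pi\otimes\mu}[h^\alpha]\bigr)^{1/\alpha}\bigl(\mathbb E_{\pi\otimes\mu}[g^{\alpha/(\alpha-1)}]\bigr)^{(\alpha-1)/\alpha}.
\]
The first factor equals $\exp\bigl(\tfrac{\alpha-1}{\alpha}D_\alpha\bigr)\le\exp\bigl(\tfrac{\alpha-1}{\alpha}\varepsilon_\alpha\bigr)$. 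Since $g$ is an indicator, $g^{\alpha/(\alpha-1)}=g$, so the second factor equals $P_0^{(\alpha-1)/\alpha}\le\kappa^{(\alpha-1)/\alpha}$ by Step 1. Combining this with the trivial bound $P\le1$ yields $\gamma_\alpha$.

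\textbf{Main obstacle.} The delicate step is moving from the per-$z$ Rényi bound to the joint expectation. Applying H\"older conditionally on $z$ gives
\[
P\le e^{\frac{\alpha-1}{\alpha}\varepsilon_\alpha}\,\mathbb E_Z\bigl[p(Z)^{(\alpha-1)/\alpha}\bigr],\qquad p(z):=\Pr_{\theta\sim\mu}[g(z,\theta)=1].
\]
Since $t\mapsto t^{(\alpha-1)/\alpha}$ is concave, Jensen's inequality gives $\mathbb E_Z[p(Z)^{(\alpha-1)/\alpha}]\le(\mathbb E_Z[p(Z)])^{(\alpha-1)/\alpha}=P_0^{(\alpha-1)/\alpha}\le\kappa^{(\alpha-1)/\alpha}$. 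This Jensen route avoids needing joint Rényi divergence at all, so it is the version I would write out. The remaining technical point is to justify the adjacency $D^-\cup\{u\}\sim_r D^-\cup\{z\}$ in the definition, including that the anchor $u$ is an admissible record.
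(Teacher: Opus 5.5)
Your proof is correct and uses the same ingredients as the paper, but the version you write out conditions in the opposite order. The shared ingredients are:
\begin{enumerate}
\item a change of measure to the anchor release $P_u=\Law(M(D^-\cup\{u\}))$ (your $\mu$), which is legitimate since $u\in\mathcal Z$ and $d(z,u)\le r$ on the support;
\item H\"older with exponents $\alpha$ and $\alpha/(\alpha-1)$;
\item Tonelli together with $R(\theta,D^-)\in\mathcal A_{u,r}$ to produce $\kappa$;
\item one concavity/Jensen step.
\end{enumerate}

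The paper fixes $\theta$ and applies H\"older over $z\sim\pi_{u,r}$. It bounds the indicator factor by $\kappa$ pointwise in $\theta$. It then uses Jensen over $\theta\sim P_u$ (concavity of $x\mapsto x^{1/\alpha}$) to reach the prior-averaged moment $\mathbb E_{Z}\bigl[e^{(\alpha-1)D_\alpha(P_Z\|P_u)}\bigr]$.

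You instead fix $z$ and apply H\"older over $\theta$. This step is just the standard RDP probability-preservation bound $P_z(E)\le\bigl(e^{\varepsilon_\alpha}P_u(E)\bigr)^{(\alpha-1)/\alpha}$. You then use Jensen over $z$ (concavity of $t\mapsto t^{(\alpha-1)/\alpha}$) before invoking $\kappa$.

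The two arguments are mirror images, and the structure is forced. The $\kappa$ bound holds pointwise only in $\theta$: for fixed $z$, $P_u(E_z)$ can equal $1$. The R\'enyi bound holds pointwise only in $z$. So each argument must use Jensen to average out the variable in which its remaining bound fails.

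Your arrangement is more modular: it reduces to a per-pair RDP fact, and the prior enters only through one Jensen step. The paper's arrangement, like your joint-H\"older Step 2, keeps the R\'enyi factor in the prior-averaged form. This makes visible that only $\mathbb E_Z\bigl[e^{(\alpha-1)D_\alpha(P_Z\|P_u)}\bigr]$, not the worst case over $z$, is actually needed.

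Two small wording fixes in Step 2:
\begin{itemize}
\item The bound on the joint divergence follows from the identity $\mathbb E_{\pi\otimes\mu}[h^\alpha]=\mathbb E_Z\bigl[e^{(\alpha-1)D_\alpha(P_Z\|\mu)}\bigr]$, not from data processing.
\item The likelihood ratio exists because $D_\alpha(P_z\|P_u)<\infty$ with $\alpha>1$ forces $P_z\ll P_u$. This is worth stating explicitly.
\end{itemize}
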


The proof of Lemma~\ref{lem:gaussian-ball-rdp} is in Appendix~\ref{app:proof-gaussian-ball-rdp}, and the proof of Theorem~\ref{thm:ball-rdp-rero} is in Appendix~\ref{app:proof-ball-rdp-rero}.

\subsection{Gaussian Ball-ReRo bounds}

The preceding theorems apply to Ball-DP and Ball-RDP mechanisms, but they can be loose for Gaussian releases. We therefore use the testing, or blow-up-function, viewpoint on reconstruction robustness~\citep{balle2022informed,kaissis2023rero,hayes2023bounding}. For distributions $P$ and $Q$ on the release space and $\kappa\in[0,1]$, write
\[
\mathcal B_\kappa(P,Q)
:=
\sup\{P(E): E\text{ measurable and }Q(E)\le \kappa\}.
\]
This quantity asks how large an attack-success event can be under the target-dependent release law if the same event has baseline probability at most $\kappa$ under a reference release law. In our Gaussian output-perturbation setting, the relevant output laws have equal covariance, so this testing problem has a closed form; Appendix~\ref{app:gaussian-testing-lemma} proves the Gaussian identity used below. The next results are the certificates plotted in the experiments and should be read as release-level upper bounds under the stated feasible-set prior.

\begin{theorem}[Direct Gaussian Ball-ReRo]
\label{thm:direct-gaussian-rero}
Let $M(D)=f(D)+\xi$ with $\xi\sim\mathcal N(0,\sigma^2I)$, and let $\Delta_2(r)$ denote the $r$-adjacency sensitivity of $f$. Fix $D^-\in\mathcal Z^{n-1}$, an anchor $u\in\mathcal Z$, a Ball-supported prior $\pi_{u,r}$, a reconstruction loss $\rho$, a tolerance $\eta$, and a feasible output set $\mathcal A_{u,r}\subseteq\mathcal Z$. Define
\[
\kappa:=\kappa_{\pi_{u,r},\rho}(\eta;\mathcal A_{u,r})
\]
and
\[
\bar\Delta(D^-)
:=
\sup_{z\in\operatorname{supp}(\pi_{u,r})}
\|f(D^-\cup\{z\})-f(D^-\cup\{u\})\|_2.
\]
Then every reconstructor $R$ with outputs in $\mathcal A_{u,r}$ satisfies the following
\[
\begin{aligned}
\Pr[\rho(Z,R(\theta,D^-))\le\eta]
&\le
\Phi\!\left(
\Phi^{-1}(\kappa)+
\frac{\bar\Delta(D^-)}{\sigma}
\right) \\
&\le
\Phi\!\left(
\Phi^{-1}(\kappa)+
\frac{\Delta_2(r)}{\sigma}
\right),
\end{aligned}
\]
Here the displayed bound is extended continuously at the endpoint baselines: if $\kappa=0$ the bound is $0$, and if $\kappa=1$ the bound is $1$. The proof is in Appendix~\ref{app:proof-direct-gaussian-rero}.

\end{theorem}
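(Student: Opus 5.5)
The plan is to reduce the reconstruction event to a hypothesis-testing event between the target-dependent release law and a single reference law, namely the release under the anchor, $Q:=\mathcal N(f(D^-\cup\{u\}),\sigma^2 I)$. Write $P_z:=\mathcal N(f(D^-\cup\{z\}),\sigma^2 I)$. Fix a reconstructor $R$ with outputs in $\mathcal A_{u,r}$. For each $z$ define the success event $E_z:=\{\theta:\rho(z,R(\theta,D^-))\le\eta\}$. The success probability is then $\int P_z(E_z)\,\pi_{u,r}(dz)$.

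The first step is to control the reference mass. By Fubini,
\[
\int Q(E_z)\,\pi_{u,r}(dz)=\mathbb E_{\theta\sim Q}\Pr_{Z\sim\pi_{u,r}}[\rho(Z,R(\theta,D^-))\le\eta]\le\kappa,
\]
since for each fixed $\theta$ the output $a=R(\theta,D^-)$ lies in $\mathcal A_{u,r}$. Define $\kappa_z:=Q(E_z)$; then $\mathbb E_{\pi}[\kappa_z]\le\kappa$. This is the point where the anchor $u$ matters: using a single reference law, rather than $P_Z$, decouples the output from $Z$.

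The second step applies the Gaussian testing identity (Appendix~\ref{app:gaussian-testing-lemma}). For equal-covariance Gaussians with mean separation $\Delta$, it gives $\mathcal B_{k}(P,Q)=\Phi(\Phi^{-1}(k)+\Delta/\sigma)$; the Neyman--Pearson optimal set is a half-space, and the identity reduces to one dimension by projecting onto the mean difference. Hence
\[
P_z(E_z)\le\Phi\bigl(\Phi^{-1}(\kappa_z)+\Delta_z/\sigma\bigr)\le\Phi\bigl(\Phi^{-1}(\kappa_z)+\bar\Delta(D^-)/\sigma\bigr),
\]
where $\Delta_z=\|f(D^-\cup\{z\})-f(D^-\cup\{u\})\|_2\le\bar\Delta(D^-)$ for $z$ in the support, and the map is monotone in $\Delta$.

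The third step, which I expect to be the main obstacle, is averaging over $z$. The function $g(k):=\Phi(\Phi^{-1}(k)+c)$ with $c\ge0$ is the ROC curve of a Gaussian shift test, so it is concave on $[0,1]$. Concavity follows from $g'(k)=\varphi(\Phi^{-1}(k)+c)/\varphi(\Phi^{-1}(k))=\exp(-c\,\Phi^{-1}(k)-c^2/2)$, which is decreasing in $k$. By Jensen's inequality and monotonicity,
\[
\mathbb E_\pi[g(\kappa_Z)]\le g(\mathbb E_\pi\kappa_Z)\le g(\kappa).
\]
Measurability of $z\mapsto\kappa_z$ is assumed as in Definition~\ref{def:kappa}. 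The endpoint conventions $\kappa\in\{0,1\}$ follow by continuity of $g$: $g(0)=0$ and $g(1)=1$.

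Finally, the second inequality in the statement uses $\bar\Delta(D^-)\le\Delta_2(r)$. Both $z$ and $u$ lie in $\ball{d}{u,r}$, so $D^-\cup\{z\}\sim_r D^-\cup\{u\}$, because they differ only in one record at distance at most $r$. Monotonicity of $\Phi$ then finishes the proof.
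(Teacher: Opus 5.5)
Your proposal is correct and follows the paper's proof essentially step for step. It uses the per-candidate Gaussian blow-up bound against the anchor release $Q_u$, the concavity and monotonicity of $t\mapsto\Phi(\Phi^{-1}(t)+a)$ with Jensen, a Tonelli/Fubini bound $\mathbb{E}_{Z}[Q_u(E_Z)]\le\kappa$ from the feasible-output constraint, and finally $\bar\Delta(D^-)\le\Delta_2(r)$ via $D^-\cup\{z\}\sim_r D^-\cup\{u\}$. The only difference is that you establish the averaged reference-mass bound before invoking the testing lemma, which is a reordering rather than a different argument.
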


\begin{corollary}[Uniform finite prior under exact identification]
\label{cor:uniform-finite-prior}
Suppose $\pi_S$ is the uniform prior on $m$ candidates,
\[
\pi_S=\frac{1}{m}\sum_{i=1}^m\delta_{z_i},
\qquad
S\subseteq\ball{d}{u,r},
\]
and the decoder is constrained to output a candidate in $S$. Consider the exact-identification loss $\rho_{\mathrm{ID}}(z,z'):=\ind{z\ne z'}$ with tolerance $\eta=0$. Then, every reconstructor $R$ satisfies
\[
\Pr[R(\theta,D^-)=Z]
\le
\Phi\!\left(
\Phi^{-1}(1/m)+\frac{\Delta_2(r)}{\sigma}
\right),
\]
where the probability is over $Z\sim\pi_S$ and $\theta\sim M(D^-\cup\{Z\})$. The feasible-output oblivious baseline is exactly $\kappa=1/m$.
\end{corollary}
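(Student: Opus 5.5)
The plan is to derive the corollary directly from Theorem~\ref{thm:direct-gaussian-rero}, applied to the finite-prior exact-identification setup. I will check the hypotheses in order, compute $\kappa$, and then substitute it into the second (weaker) inequality of that theorem.

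\emph{Step 1: hypotheses.} The prior $\pi_S=\frac1m\sum_i\delta_{z_i}$ has support $S\subseteq\ball{d}{u,r}$, so it is Ball-supported for the anchor $u$. The feasible output set is $\mathcal A_{u,r}:=S\subseteq\mathcal Z$. The loss is $\rho_{\mathrm{ID}}$ with $\eta=0$. The event $\{\rho_{\mathrm{ID}}(Z,R(\theta,D^-))\le 0\}$ is exactly $\{R(\theta,D^-)=Z\}$, so the left-hand side of the theorem is the quantity in the corollary.

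\emph{Step 2: computing $\kappa$.} For any fixed $a\in S$, we have $\Pr_{Z\sim\pi_S}[Z=a]=\pi_S(\{a\})=1/m$. This uses the candidates being distinct, which is implicit in calling $S$ a set of $m$ candidates; if some coincided, the atoms would merge, and this point needs a remark. Taking the supremum over $a\in\mathcal A_{u,r}=S$ gives $\kappa=1/m$ exactly. This also justifies the final sentence of the corollary.

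\emph{Step 3: applying the bound.} Theorem~\ref{thm:direct-gaussian-rero} bounds the success probability by $\Phi(\Phi^{-1}(\kappa)+\bar\Delta(D^-)/\sigma)$, which is at most $\Phi(\Phi^{-1}(\kappa)+\Delta_2(r)/\sigma)$. Substituting $\kappa=1/m$ gives the claim. For $m\ge2$ we have $\kappa\in(0,1)$, so the endpoint conventions are not needed. For $m=1$, $\kappa=1$ and the bound is trivially $1$ under the stated continuous extension.

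The only step needing care is the second inequality inside Theorem~\ref{thm:direct-gaussian-rero}, namely $\bar\Delta(D^-)\le\Delta_2(r)$. For $z\in S$, the datasets $D^-\cup\{z\}$ and $D^-\cup\{u\}$ differ in one slot with $d(z,u)\le r$, so they are $r$-adjacent and hence $\bar\Delta(D^-)\le\Delta_2(r)$. Since $\Phi$ is monotone, the weaker bound follows. This is the main thing I would double-check, and it is already packaged in the theorem. Otherwise the corollary is a direct specialization.
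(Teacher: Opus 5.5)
Your proposal is correct and matches the paper's proof: take $\mathcal A_{u,r}=S$, identify the success event with $\{R(\theta,D^-)=Z\}$, compute $\kappa=1/m$, and substitute into Theorem~\ref{thm:direct-gaussian-rero} together with $\bar\Delta(D^-)\le\Delta_2(r)$. You also note that $\kappa=1/m$ requires the $m$ candidates to be distinct; that is a fair point, which the paper leaves implicit.
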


The proof is in Appendix~\ref{app:proof-uniform-finite-prior}. The direct bound is the main certificate plotted in the experiments. Under homogeneous Gaussian calibration at fixed $(\varepsilon,\delta)$, the ratio $\Delta_2(r)/\sigma$ is fixed by the calibration rule, so shrinking $r$ does not by itself improve the numerical Ball-ReRo bound. The radius still matters because it changes both the protected adjacency scope and the amount of noise needed to reach that certificate. The empirical MAP audit then measures the resulting identification trade-off for the feasible candidate sets that can actually be constructed inside the policy ball. A sharper support-specific Gaussian exact-identification bound is available, but it is not used in the reported experiments; we state it in Appendix~\ref{app:support-specific-bound}.

\section{Experimental Analysis}
\label{sec:experiments}

\subsection{Goals and scope}
The empirical evaluation asks three operational questions. First, at the same target $(\varepsilon,\delta)$, how much accuracy is recovered by calibrating to a declared local radius rather than to the global bounded-replacement radius? Second, for a fixed reconstruction certificate $\gamma$, how much utility is lost as the protected local neighborhood grows? Third, when an informed adversary is restricted to a finite local candidate set, how do empirical MAP attack rates compare with the oblivious baseline and the theorem-backed certificate?

All reported ERM results use $\delta=10^{-6}$, $m=10$, and Gaussian output perturbation. We choose the Ball-DP radius $r$ through empirical local-radius policies. For each benchmark, $q50$, $q80$, and $q95$ denote the radii obtained from the $50$th, $80$th, and $95$th percentiles of the same-label distance distribution under the label-preserving metric~\eqref{eq:label-preserving-metric}. These three policies represent progressively larger protected neighborhoods; Sections~\ref{sec:utility-results} and~\ref{sec:finite-prior-diagnostics} describe how they are used in the accuracy and MAP audits. The standard comparator uses the full bounded-replacement radius $r_{\mathrm{std}}=2B$ for utility experiments and an empirical same-label diameter in the finite-prior attack audit.

\subsection{Experimental model}
The seven-dataset utility table and the finite-prior MAP audit use a ridge-prototype classifier on fixed embeddings. This head is the closed-form audit case: its optimizer, sensitivity, and noiseless informed-adversary inverse problem are explicit, which lets us separate privacy calibration from numerical approximation error. The Ball-DP calibration theorem itself is not ridge-specific; Section~\ref{sec:secondary-head-statements} verifies the required $L_z$ constants for binary logistic regression, softmax logistic regression, and squared-hinge SVMs, and Section~\ref{sec:convex-head-imdb} reports an IMDb utility--certificate sweep for two of these additional heads alongside ridge prototypes.

For record $(z,y)$ with $z\in\mathbb R^d$ and $y\in\{1,\ldots,C\}$, the ridge-prototype model learns one prototype $\mu_c\in\mathbb R^d$ per class by minimizing
\begin{equation}
\label{eq:ridge-proto-objective}
F_D(\mu)=\frac{1}{n}\sum_{i=1}^n\|z_i-\mu_{y_i}\|_2^2+\frac{\lambda}{2}\sum_{c=1}^C\|\mu_c\|_2^2.
\end{equation}
Prediction uses the nearest prototype. The full ridge-prototype statements are collected in Section~\ref{sec:technical-statements}, with the core learning proofs in Section~\ref{sec:core-learning-proofs}. In the reported runs $\lambda=0.01$.

For this model, Corollary~\ref{cor:proto-sensitivity} gives the count-worst-case exact local sensitivity
\[
\Delta_{\mathrm{proto}}(r)=\frac{2r}{2+\lambda n}.
\]
When class counts are public, the count-aware variant is stated separately in Proposition~\ref{prop:proto-count-aware-sensitivity}. Gaussian output perturbation is then calibrated by substituting the relevant sensitivity into Theorem~\ref{thm:ball-gaussian-output-perturbation}.

\subsection{Utility at local policy radii}
\label{sec:utility-results}

Table~\ref{tab:accuracy-radius-summary} reports the main utility comparison at $\varepsilon=2$ across seven embedding benchmarks. It replaces the earlier single-radius noise-ratio table because it answers the operational question directly: what accuracy do stakeholders get when they choose a protected neighborhood? At this privacy level, all tabulated Ball-DP radii improve over the global-radius comparator. Smaller local radii generally yield higher utility, with near ties at the reported precision across several datasets; $q95$ provides the broadest local protection among the three local policies.

\begin{table}[t]
\centering
\caption{\small Mean accuracies over ten independent Gaussian release seeds for the ridge-prototype private ERM experiment at $\varepsilon=2$ and $m=10$ across seven embedding benchmarks. Ball-DP columns use local adjacency radii $q50$, $q80$, and $q95$; Std is the global-radius comparator calibrated with the same Gaussian rule.}
\label{tab:accuracy-radius-summary}
\scriptsize
\setlength{\tabcolsep}{4pt}
\begin{tabular}{lrrrr}
\toprule
Data embeddings & Acc $q50$ & Acc $q80$ & Acc $q95$ & Acc (Std) \\
\midrule
AG News & 0.8616 & 0.8610 & 0.8606 & 0.8559 \\
BANKING77 & 0.6358 & 0.5697 & 0.5218 & 0.2711 \\
CIFAR-10 & 0.7374 & 0.7360 & 0.7340 & 0.6649 \\
Emotion & 0.3010 & 0.2932 & 0.2887 & 0.2458 \\
IMDb & 0.6212 & 0.6135 & 0.6071 & 0.5632 \\
MNIST & 0.8356 & 0.8345 & 0.8330 & 0.7840 \\
TREC-6 & 0.2496 & 0.2460 & 0.2412 & 0.2036 \\
\bottomrule
\end{tabular}
\end{table}

For text datasets, the fixed embeddings are computed with the \texttt{all-MiniLM-L6-v2} Sentence Transformers model, which is based on Sentence-BERT and MiniLM \cite{reimers2019sentencebert,wang2020minilm,sentenceTransformersAllMiniLML6V2}. For image datasets, the fixed embeddings are computed with ImageNet-pretrained ResNet-18 features implemented in TorchVision \cite{he2016resnet,paszke2019pytorch,torchvisionResNet18}. The embedding maps are treated as public preprocessing components of the adjacency policy. The benchmark suite consists of AG News~\cite{zhang2015character}, BANKING77~\cite{casanueva2020efficient}, CIFAR-10~\cite{krizhevsky2009learning}, Emotion~\cite{saravia2018carer}, IMDb~\cite{maas2011learning}, MNIST~\cite{lecun1998gradient}, and TREC-6~\cite{li2002learning}.

The absolute accuracies should not be read as benchmark-leading numbers: the ridge-prototype head is intentionally simple and can be poorly matched to some fixed embeddings, notably on Emotion and TREC-6. We retain these rows as stress tests for the privacy calibration rather than as claims of competitive task performance.

The gains are not uniform, which is itself useful for deployment planning. At $q50$, improvements over the standard comparator range from $0.03$ percentage points on AG News to $14.00$ points on BANKING77. At $q80$, the range is $0.04$--$12.37$ points; at $q95$, it is $0.03$--$11.09$ points. These differences quantify the utility cost of protecting a larger local neighborhood at the same $(\varepsilon,\delta)$.

\subsection{Accuracy at a fixed reconstruction certificate}
\label{sec:accuracy-vs-gamma}

The most interpretable privacy--utility visualization is accuracy as a function of the certified exact-identification upper bound $\gamma$ from Corollary~\ref{cor:uniform-finite-prior}. A higher point means better utility; a farther-left point means a smaller certified exact-identification risk. Figure~\ref{fig:accuracy-vs-gamma-radius-grid-main} shows the regenerated seven-dataset private-ERM accuracy--certificate trade-off from the publication report. The plot uses the finite rows for the direct certificate column and aggregates seed-first before plotting.

\begin{figure}[htb!]
\centering
\includegraphics[width=\linewidth]{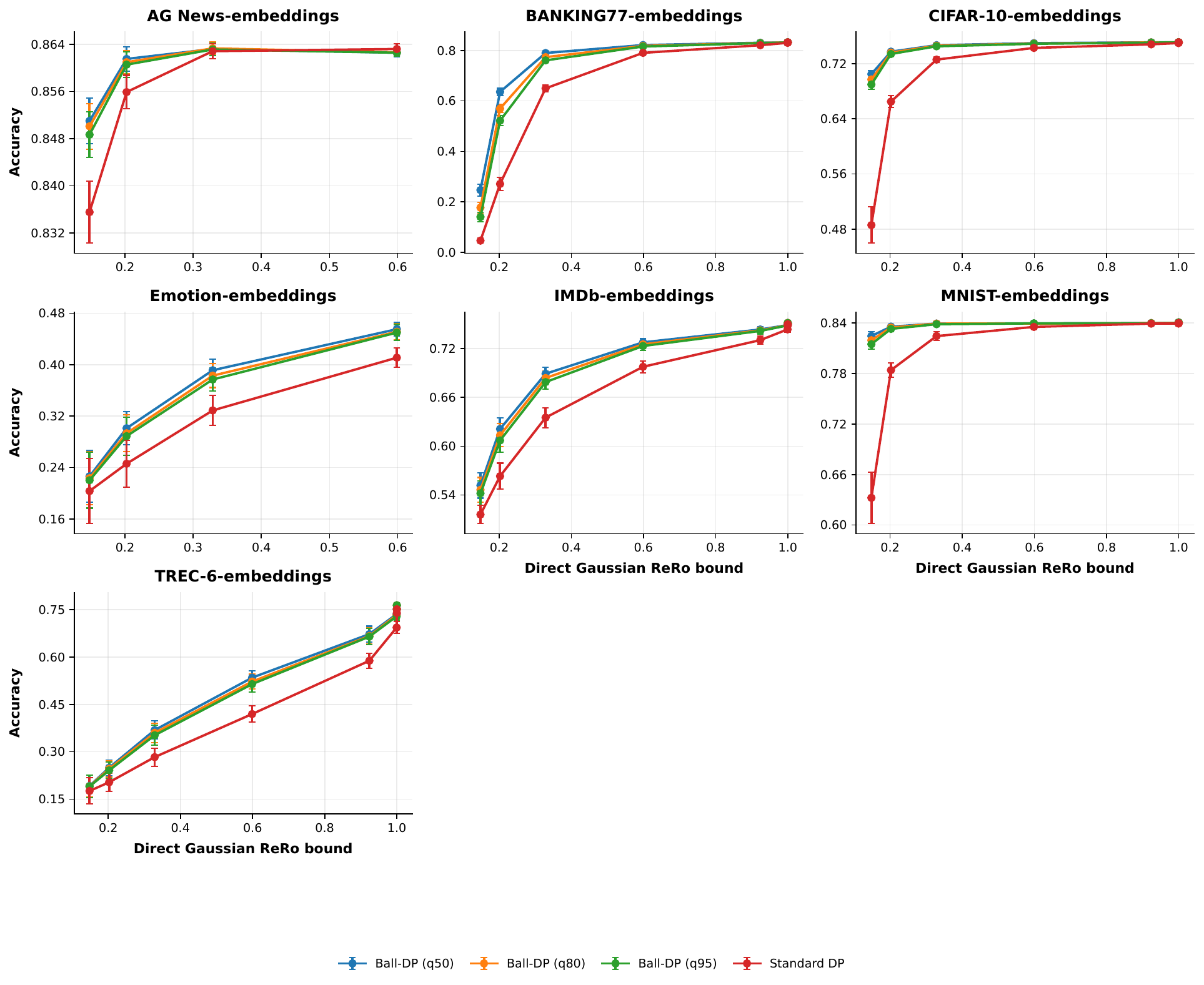}
\caption{\small Private-ERM accuracy versus the direct exact-identification upper bound $\gamma$ on the seven embedding benchmarks for the ridge-prototype head. Solid curves are Ball-DP mechanisms calibrated at local radii $q50$, $q80$, and $q95$; the dashed curve is the global-radius comparator. Points are seed-first means over ten independent Gaussian release seeds. At a fixed numerical certificate level, the vertical gap shows the utility gained by protecting a local neighborhood rather than the global bounded-replacement radius. Table~\ref{tab:accuracy-radius-summary} gives the corresponding per-dataset accuracy values.}
\label{fig:accuracy-vs-gamma-radius-grid-main}
\end{figure}

The important interpretation is that the radius labels do not denote different certificate values at the same $\varepsilon$. Under matched Gaussian calibration, $\gamma$ is determined by $m$, $\varepsilon$, and $\delta$ through $\Delta_2/\sigma$. The radius instead tells the reader which substitutions are protected and how much utility is retained while achieving that numerical certificate.

\subsection{Convex-head comparison on IMDb}
\label{sec:convex-head-imdb}

The preceding seven-dataset frontier uses ridge prototypes to keep the exact finite-prior inverse problem explicit. The calibration theorem, however, applies to every convex head for which Assumption~\ref{ass:lz} is verified. Proposition~\ref{prop:secondary-lz-summary} gives those constants for binary logistic regression, softmax logistic regression, and squared-hinge SVMs. To make this scope visible empirically, Figure~\ref{fig:imdb-convex-heads-rero} repeats the private-ERM utility--certificate sweep on IMDb embeddings at the local radius policy $q50$ for binary logistic regression, ridge prototypes, and the squared-hinge SVM. This experiment uses five Gaussian release seeds per mechanism and nominal privacy levels $\varepsilon\in\{8,16,32\}$.

\begin{figure}[htb!]
\centering
\includegraphics[width=\linewidth]{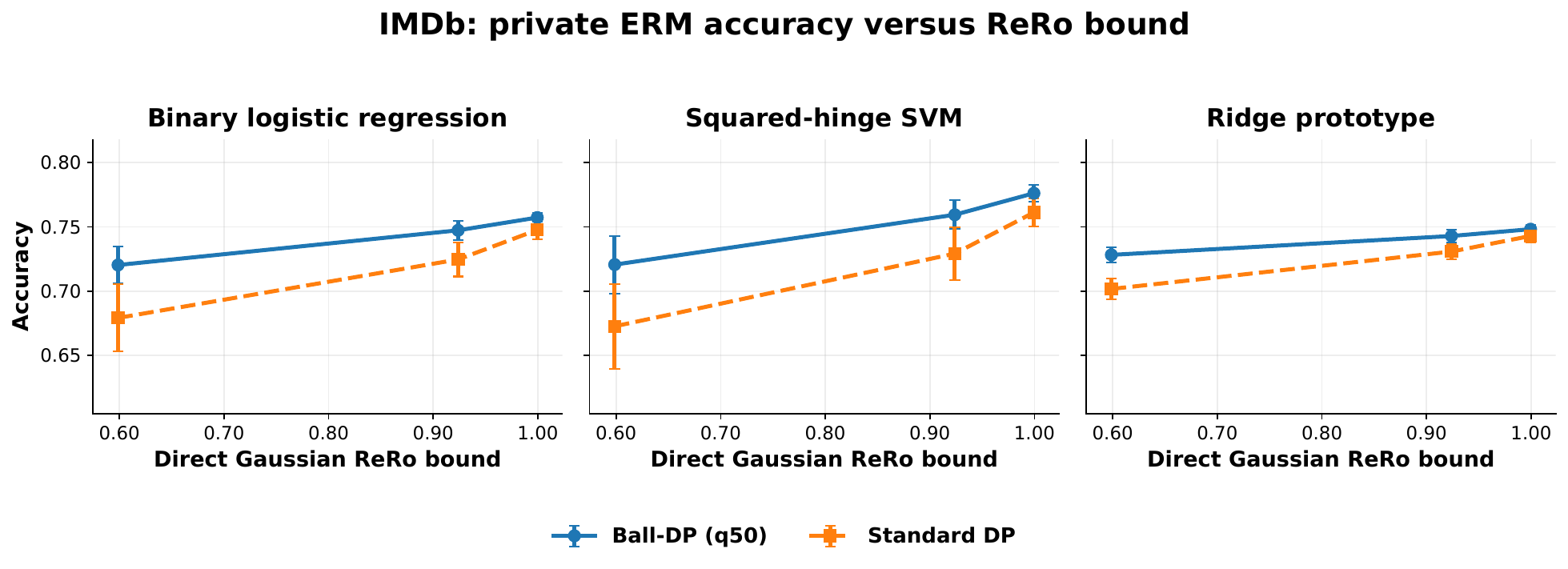}
\caption{\small IMDb convex-head comparison at local radius policy $q50$. Each panel shows private-ERM accuracy as a function of the direct finite-prior ReRo certificate $\gamma$. Solid curves are Ball-DP releases calibrated at the local $q50$ policy radius; dashed curves are the global-radius Standard-DP comparator evaluated on the corresponding $q50$ row set. Points correspond to $\varepsilon\in\{8,16,32\}$, with $\gamma=0.5987,0.9238,0.9995$, respectively. Error bars are seed-first $95\%$ normal intervals over five Gaussian release seeds.}
\label{fig:imdb-convex-heads-rero}
\end{figure}

\begin{table}[htb!]
\centering
\caption{\small IMDb convex-head comparison at local radius policy $q50$. Each cell reports mean accuracy $\pm$ seed-first $95\%$ confidence interval over five Gaussian release seeds. The ReRo certificates are $\gamma=0.5987,0.9238,0.9995$ for $\varepsilon=8,16,32$, respectively, and are the horizontal-axis values used in Figure~\ref{fig:imdb-convex-heads-rero}.}
\label{tab:imdb-convex-heads-rero}
\scriptsize
\setlength{\tabcolsep}{4pt}
\begin{tabular}{@{}llccc@{}}
\toprule
Head & Mechanism & $\varepsilon=8$ & $\varepsilon=16$ & $\varepsilon=32$ \\
\midrule
Binary logistic & Ball-DP $(q50)$
& $0.720 \pm 0.014$ & $0.747 \pm 0.008$ & $0.757 \pm 0.004$ \\
Binary logistic & Standard DP
& $0.679 \pm 0.026$ & $0.725 \pm 0.013$ & $0.748 \pm 0.007$ \\
\midrule
Ridge prototype & Ball-DP $(q50)$
& $0.728 \pm 0.006$ & $0.743 \pm 0.005$ & $0.748 \pm 0.003$ \\
Ridge prototype & Standard DP
& $0.702 \pm 0.008$ & $0.731 \pm 0.006$ & $0.743 \pm 0.005$ \\
\midrule
Squared-hinge SVM & Ball-DP $(q50)$
& $0.721 \pm 0.023$ & $0.759 \pm 0.011$ & $0.776 \pm 0.006$ \\
Squared-hinge SVM & Standard DP
& $0.673 \pm 0.033$ & $0.729 \pm 0.020$ & $0.761 \pm 0.011$ \\
\bottomrule
\end{tabular}
\end{table}

At every displayed privacy level and for all three convex heads, Ball-DP has higher mean accuracy than the global-radius comparator at the same reported ReRo certificate. The gains are largest at the strongest privacy level, $\varepsilon=8$: $4.12$ accuracy points for binary logistic regression, $2.65$ points for ridge prototypes, and $4.80$ points for the squared-hinge SVM. As privacy weakens, the gap shrinks: at $\varepsilon=32$, the corresponding gains are $0.96$, $0.53$, and $1.49$ points. This pattern is consistent with the calibration story: local radius-aware noise matters most when the Gaussian perturbation is large.

The head comparison also shows a utility trade-off that is separate from the privacy certificate. When one can afford high $\varepsilon$, the secondary convex heads can outperform the ridge prototype: at $\varepsilon=32$, the squared-hinge SVM reaches the highest mean accuracy in this sweep, $0.776$ under Ball-DP. However, these higher-utility heads are also more sensitive to stronger privacy in the reported experiment. From $\varepsilon=32$ to $\varepsilon=8$, Ball-DP accuracy drops by $3.68$ points for binary logistic regression and $5.54$ points for the squared-hinge SVM, compared with $2.00$ points for the ridge prototype. Thus the secondary heads can be attractive in the high-utility, weak-privacy regime, while the ridge prototype remains a stable closed-form audit head for the main seven-dataset reconstruction study.

\subsection{Finite-prior reconstruction diagnostics}
\label{sec:finite-prior-diagnostics}

This diagnostic is not a new privacy definition; it is an attack audit for a concrete finite candidate support. The prior on that support is part of the attack specification. In the reported audit we use the uniform prior over $m=10$ candidates, so the oblivious exact-identification baseline is exactly $1/m=0.1$ and comparisons are not confounded by prior imbalance. Proposition~\ref{prop:finite-map} also covers non-uniform initial priors: changing the prior adds the term $\log \pi_i$ to the MAP score and changes the exact-identification baseline to $\max_i \pi_i$. Such priors are useful for deployment-specific stress tests, but the aggregate table below uses the uniform prior throughout.

Within this uniform-prior protocol, the radius-grid audit varies $\varepsilon\in\{2,4,8\}$ and the Ball-DP radius policy in $\{q50,q80,q95\}$, while evaluating the Standard-DP comparator on the same radius-$q80$ feasible-set protocol. For each setup seed, we construct four public-only same-label feasible supports inside the policy ball. Each support has size $m=10$ and is chosen by a farthest-point procedure to stress-test identification among well-separated feasible candidates. We then cycle through the support: for each candidate $z_i\in S$, we treat $z_i$ once as the hidden target, release a Gaussian-perturbed ridge-prototype model trained on $D^-\cup\{z_i\}$, and run the exact MAP rule of Proposition~\ref{prop:finite-map}. Thus each setup seed contributes $4\times 10=40$ raw release-and-attack trials, and the ten setup seeds give $400$ raw trials per dataset and mechanism. The reported uncertainty is computed seed-first: raw trials are first averaged within each setup seed, and the displayed confidence intervals are then computed across the ten setup-seed averages. The standard mechanism in this audit uses the empirical same-label diameter and count-aware ridge calibration.

\begin{figure}[htb!]
\centering
\includegraphics[width=\linewidth]{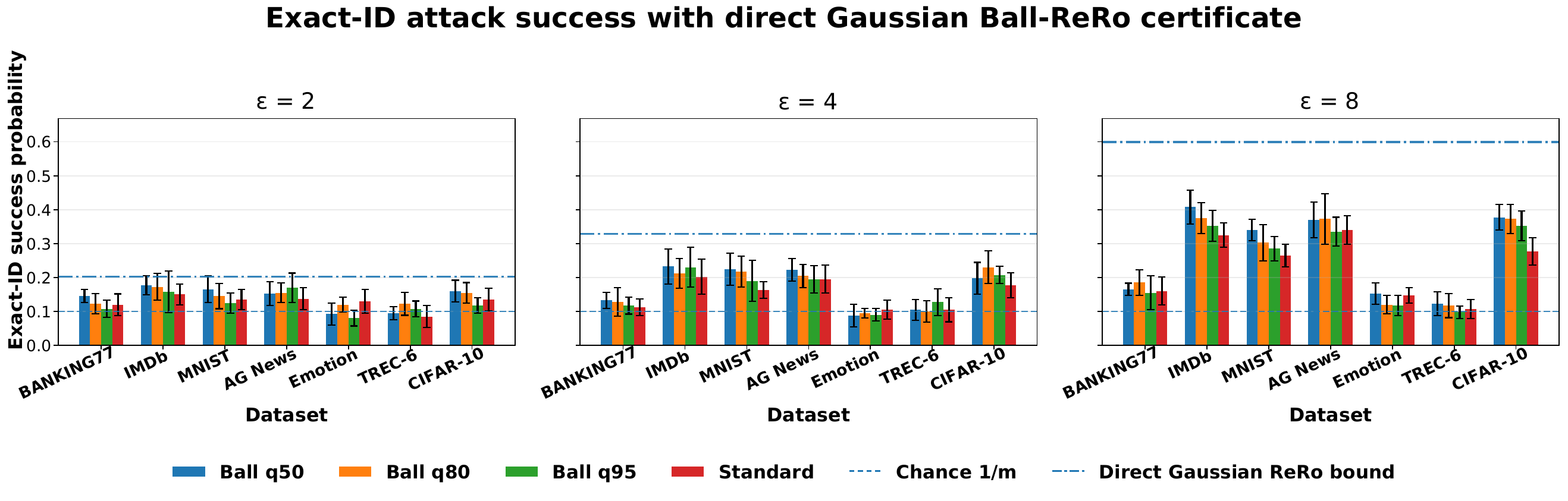}
\caption[Radius-grid exact-identification audit]{\small Radius-grid exact-identification audit for ridge-prototype finite-prior MAP reconstruction. Each panel fixes $\varepsilon\in\{2,4,8\}$; within each dataset, the first three bars show Ball-DP calibrated at radius policies $q50$, $q80$, and $q95$, while the fourth bar shows the global-radius Standard-DP Gaussian comparator evaluated on the same radius-$q80$ finite-prior feasible-set protocol. The $q80$ label for the Standard-DP bar therefore describes the attack candidate-support construction, not the Standard-DP privacy adjacency radius. Error bars are seed-first $95\%$ confidence intervals over ten setup seeds, where each setup seed averages four feasible supports and all $m=10$ candidate-target trials within each support. The dashed horizontal line is chance, $1/m=0.1$. The upper horizontal line is the direct Gaussian ReRo certificate $\gamma_{\mathrm{rel}}=\Phi(\Phi^{-1}(1/m)+\Delta/\sigma)$, which is independent of the radius policy within a fixed $\varepsilon$ panel. For $\delta=10^{-6}$ and $m=10$, the displayed certificate values are $0.2024$, $0.3286$, and $0.5987$ for $\varepsilon=2,4,8$, respectively.}
\label{fig:exact-id-radius-grid}
\end{figure}

The radius-grid audit in Figure~\ref{fig:exact-id-radius-grid} and Table~\ref{tab:exact-id-radius-grid-summary} should not be read as saying that Ball-DP and the standard comparator protect the same adjacency relation. They do not. Ball-DP uses less noise because it protects the declared local ball, whereas the standard comparator uses the global replacement radius. The empirical exact-ID rate therefore need not be lower for Ball-DP and can be higher depending on the sampled feasible-set geometry. The ReRo reference line gives the corresponding direct Gaussian worst-case certificate
\[
\gamma_{\mathrm{rel}}
=
\Phi\!\left(\Phi^{-1}(1/m)+\Delta/\sigma\right),
\]
with the calibration values reported in Table~\ref{tab:direct-gaussian-rero-bounds}. Across the reruns shown here, the empirical success probabilities remain below the certificate lines while staying interpretable relative to the chance baseline $1/m$.

\begin{table*}[htb!]
\centering
\caption{\small Exact-identification radius-grid audit for ridge-prototype finite-prior MAP reconstruction. Entries report empirical exact-ID success probability with seed-first $95\%$ confidence intervals over ten setup seeds. Each displayed empirical cell uses $400$ raw release-and-attack trials, decomposed as ten setup seeds, four feasible supports per setup seed, and all $m=10$ candidate-target trials within each support. The first three columns vary the Ball-DP radius policy. The Standard-DP column uses the global-radius Gaussian comparator evaluated on the same radius-$q80$ finite-prior feasible-set protocol; this matches Figure~\ref{fig:exact-id-radius-grid}. Chance is $1/m=0.1$ for all rows. The final column gives the direct Gaussian ReRo certificate $\gamma_{\mathrm{rel}}=\Phi(\Phi^{-1}(1/m)+\Delta/\sigma)$.}
\label{tab:exact-id-radius-grid-summary}
\resizebox{\textwidth}{!}{%
\begin{tabular}{llccccc}
\toprule
Dataset & $\varepsilon$ & Ball-DP $q50$ & Ball-DP $q80$ & Ball-DP $q95$ & Standard DP $(q80\ \mathrm{audit})$ & $\gamma_{\mathrm{rel}}$ \\
\midrule
AG News & $2$ & $0.153 \pm 0.035$ & $0.155 \pm 0.029$ & $0.170 \pm 0.043$ & $0.138 \pm 0.032$ & $0.2024$ \\
AG News & $4$ & $0.223 \pm 0.034$ & $0.205 \pm 0.034$ & $0.195 \pm 0.041$ & $0.195 \pm 0.041$ & $0.3286$ \\
AG News & $8$ & $0.370 \pm 0.052$ & $0.373 \pm 0.074$ & $0.335 \pm 0.043$ & $0.340 \pm 0.041$ & $0.5987$ \\
\midrule
BANKING77 & $2$ & $0.145 \pm 0.019$ & $0.123 \pm 0.030$ & $0.108 \pm 0.025$ & $0.120 \pm 0.032$ & $0.2024$ \\
BANKING77 & $4$ & $0.133 \pm 0.023$ & $0.128 \pm 0.042$ & $0.118 \pm 0.025$ & $0.113 \pm 0.025$ & $0.3286$ \\
BANKING77 & $8$ & $0.165 \pm 0.018$ & $0.185 \pm 0.037$ & $0.155 \pm 0.051$ & $0.160 \pm 0.041$ & $0.5987$ \\
\midrule
CIFAR-10 & $2$ & $0.160 \pm 0.032$ & $0.155 \pm 0.030$ & $0.118 \pm 0.023$ & $0.135 \pm 0.034$ & $0.2024$ \\
CIFAR-10 & $4$ & $0.198 \pm 0.047$ & $0.230 \pm 0.048$ & $0.208 \pm 0.025$ & $0.178 \pm 0.037$ & $0.3286$ \\
CIFAR-10 & $8$ & $0.378 \pm 0.038$ & $0.373 \pm 0.043$ & $0.353 \pm 0.044$ & $0.278 \pm 0.040$ & $0.5987$ \\
\midrule
Emotion & $2$ & $0.093 \pm 0.033$ & $0.120 \pm 0.022$ & $0.080 \pm 0.023$ & $0.130 \pm 0.035$ & $0.2024$ \\
Emotion & $4$ & $0.088 \pm 0.034$ & $0.095 \pm 0.014$ & $0.090 \pm 0.018$ & $0.105 \pm 0.028$ & $0.3286$ \\
Emotion & $8$ & $0.153 \pm 0.032$ & $0.120 \pm 0.027$ & $0.118 \pm 0.030$ & $0.148 \pm 0.023$ & $0.5987$ \\
\midrule
IMDb & $2$ & $0.178 \pm 0.028$ & $0.173 \pm 0.039$ & $0.158 \pm 0.061$ & $0.150 \pm 0.031$ & $0.2024$ \\
IMDb & $4$ & $0.233 \pm 0.051$ & $0.213 \pm 0.044$ & $0.230 \pm 0.059$ & $0.203 \pm 0.051$ & $0.3286$ \\
IMDb & $8$ & $0.408 \pm 0.050$ & $0.375 \pm 0.046$ & $0.353 \pm 0.046$ & $0.325 \pm 0.037$ & $0.5987$ \\
\midrule
MNIST & $2$ & $0.165 \pm 0.040$ & $0.145 \pm 0.037$ & $0.125 \pm 0.030$ & $0.135 \pm 0.029$ & $0.2024$ \\
MNIST & $4$ & $0.225 \pm 0.047$ & $0.218 \pm 0.045$ & $0.190 \pm 0.060$ & $0.163 \pm 0.025$ & $0.3286$ \\
MNIST & $8$ & $0.340 \pm 0.032$ & $0.303 \pm 0.053$ & $0.285 \pm 0.036$ & $0.265 \pm 0.034$ & $0.5987$ \\
\midrule
TREC-6 & $2$ & $0.095 \pm 0.019$ & $0.123 \pm 0.034$ & $0.108 \pm 0.023$ & $0.085 \pm 0.033$ & $0.2024$ \\
TREC-6 & $4$ & $0.105 \pm 0.031$ & $0.100 \pm 0.032$ & $0.128 \pm 0.039$ & $0.105 \pm 0.036$ & $0.3286$ \\
TREC-6 & $8$ & $0.123 \pm 0.035$ & $0.118 \pm 0.036$ & $0.098 \pm 0.019$ & $0.108 \pm 0.028$ & $0.5987$ \\
\bottomrule
\end{tabular}%
}
\end{table*}

\begin{table}[htb!]
\centering
\caption{\small Direct Gaussian ReRo certificate values used as horizontal reference lines in Figure~\ref{fig:exact-id-radius-grid} and as reference values in the finite-prior audit tables. Here $m=10$, $\delta=10^{-6}$, chance is $1/m=0.1$, and $\Delta/\sigma$ is determined by the analytic Gaussian calibration.}
\label{tab:direct-gaussian-rero-bounds}
\begin{tabular}{cccc}
\toprule
$\varepsilon$ & $\Delta/\sigma$ & $1/m$ & $\gamma_{\mathrm{rel}}$ \\
\midrule
$2$ & $0.4483$ & $0.1000$ & $0.2024$ \\
$4$ & $0.8379$ & $0.1000$ & $0.3286$ \\
$8$ & $1.5315$ & $0.1000$ & $0.5987$ \\
\bottomrule
\end{tabular}
\end{table}

\section{Technical Statements for the Experiments}
\label{sec:technical-statements}

This section records the model-specific facts used by the experimental section. The statements serve two purposes: they specify the Gaussian sensitivity calibration for the ridge-prototype head, and they show how the generic Ball-DP theorem extends to the additional convex heads used in the IMDb comparison. Proofs are deferred to the appendices.

\subsection{Ridge-prototype head}
\label{sec:ridge-statements}

The ridge-prototype head is the closed-form audit case. Its optimizer, sensitivity, and informed-adversary inverse problem are explicit, so it supports exact MAP reconstruction diagnostics without numerical approximation error. The statements below are therefore the ridge-specific instantiation of the general convex-ERM theory, not the full scope of the method.

\begin{theorem}[Exact gradient Lipschitz constant for ridge prototypes]
\label{thm:proto-lz}
Under the label-preserving distance metric defined in Section~\ref{sec:r-adjacency}, Assumption~\ref{ass:lz} holds for $\ell(\mu;(z,y))=\|z-\mu_y\|_2^2$ 
with the exact constant $L_z=2$.
\end{theorem}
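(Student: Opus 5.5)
The plan is to compute the gradient of the loss with respect to the full parameter $\mu=(\mu_1,\ldots,\mu_C)\in\mathbb R^{Cd}$ explicitly, and then compare two records under the label-preserving metric~\eqref{eq:label-preserving-metric}. For a record $(z,y)$, the loss $\ell(\mu;(z,y))=\|z-\mu_y\|_2^2$ depends only on the block $\mu_y$. Hence $\nabla_{\mu_c}\ell = 2(\mu_y - z)\ind{c=y}$: the gradient is the vector $2(\mu_y-z)$ placed in block $y$, with zeros in every other block.

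Next, I will split into the two cases allowed by the metric. If $y\neq y'$, then $d((z,y),(z',y'))=\infty$. The inequality of Assumption~\ref{ass:lz} then holds trivially for any finite $L_z$, since the left side is finite for each fixed $\mu$ (with the convention $L_z\cdot\infty=\infty$ for $L_z>0$). If $y=y'$, both gradients are supported in the same block $y$, and their difference there is $2(\mu_y-z)-2(\mu_y-z')=2(z'-z)$. This difference is independent of $\mu$, so
\[
\|\nabla_\mu\ell(\mu;(z,y))-\nabla_\mu\ell(\mu;(z',y))\|_2 = 2\|z-z'\|_2 = 2\,d\big((z,y),(z',y)\big)
\]
for every $\mu\in\Theta$, including $\Theta=\mathbb R^{Cd}$. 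This gives Assumption~\ref{ass:lz} with $L_z=2$.

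For exactness, the same-label computation holds with equality, so any pair with $y=y'$ and $z\neq z'$ forces $L_z\ge 2$. Hence $L_z=2$ is the smallest valid constant, provided the domain contains at least two distinct same-label embeddings. I expect no real obstacle here. The only points needing care are two bookkeeping issues: the factor $2$ must come from the loss $\|z-\mu_y\|_2^2$ itself, not from the $1/n$ normalization or the regularizer in~\eqref{eq:ridge-proto-objective}; and the infinite-distance case must be treated by convention rather than by a bound.
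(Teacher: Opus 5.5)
Your proposal is correct and follows essentially the same route as the paper: compute the blockwise gradient $2(\mu_y-z)\ind{c=y}$, note that only same-label pairs have finite distance, obtain the $\mu$-independent difference $2(z'-z)$ in block $y$, and use the resulting equality on any same-label pair with $z\neq z'$ to show that $L_z=2$ is exact. Two of your points are slightly more careful than the paper's ``only nonvacuous case'' phrasing but change nothing substantive: you treat the infinite-distance case explicitly by convention, and you note that exactness requires two distinct same-label embeddings in the domain.
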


\begin{corollary}[Exact $r$-adjacency sensitivity]
\label{cor:proto-sensitivity}
Let $\widehat\mu(D)$ be the unique minimizer of~\eqref{eq:ridge-proto-objective}. The count-worst-case sensitivity at radius $r$ is
\[
\Delta_{\mathrm{proto}}(r)=
\sup_{D\sim_rD'}\|\widehat\mu(D)-\widehat\mu(D')\|_2
=
\frac{2r}{2+\lambda n}.
\]
This is sharper than the generic upper bound from Theorem~\ref{thm:erm-sensitivity}.
\end{corollary}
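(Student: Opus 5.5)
The plan is to solve the ridge-prototype problem in closed form and read the sensitivity off directly, instead of going through Theorem~\ref{thm:erm-sensitivity}. The objective~\eqref{eq:ridge-proto-objective} separates across classes. For each class $c$, write $n_c:=|\{i:y_i=c\}|$ and $S_c:=\sum_{i:y_i=c}z_i$. The block of $F_D$ depending on $\mu_c$ is a strictly convex quadratic. Setting its gradient $\frac{2}{n}\sum_{i:y_i=c}(\mu_c-z_i)+\lambda\mu_c$ to zero gives
\[
\widehat\mu_c(D)=\frac{2S_c}{2n_c+\lambda n}.
\]
This also re-derives uniqueness of the minimizer, since $\lambda>0$.

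Next, take $D\sim_r D'$ differing in index $j$. Under the label-preserving metric~\eqref{eq:label-preserving-metric}, a finite distance $d(z_j,z_j')\le r$ forces $y_j=y_j'=:c$ and $\|x_j-x_j'\|_2\le r$. Hence every class count $n_{c'}$ is unchanged, and only $S_c$ changes, by $x_j-x_j'$. All other prototypes coincide, so
\[
\|\widehat\mu(D)-\widehat\mu(D')\|_2=\frac{2\|x_j-x_j'\|_2}{2n_c+\lambda n}\le\frac{2r}{2n_c+\lambda n}.
\]
Since the replaced record belongs to class $c$, we have $n_c\ge1$. Taking the worst case over count configurations therefore gives the upper bound $2r/(2+\lambda n)$. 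For comparison, the generic bound from Theorem~\ref{thm:erm-sensitivity} with $L_z=2$ (Theorem~\ref{thm:proto-lz}) is $2r/(\lambda n)$, which is strictly larger; this proves the ``sharper'' remark.

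For equality of the supremum, I would exhibit a dataset in which some class $c$ has exactly one member $z_j$, together with a replacement $z_j'$ of the same label at embedding distance exactly $r$. The other $n-1$ records are arbitrary in other classes. The displayed difference then equals $2r/(2+\lambda n)$.

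The main point of care is this attainment step, in two respects. First, ``count-worst-case'' must be read as a supremum over all datasets, including their class-count profiles, rather than with counts fixed as public metadata; the latter is the count-aware variant treated separately. Second, the record domain must contain two same-label points at distance $r$, which holds for $\mathcal Z=\mathbb R^d\times\mathcal Y$ and, under a norm bound $B$, whenever $r\le 2B$. I would state this domain assumption explicitly. If $r>2B$ under a norm bound, the supremum is instead $2\min(r,2B)/(2+\lambda n)$, and the displayed equality then holds only as an upper bound.
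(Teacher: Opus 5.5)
Your proposal is correct and follows essentially the same route as the paper's proof. Both arguments use the closed form $\widehat\mu_c(D)=2S_c/(2n_c+\lambda n)$, the invariance of class counts under label-preserving replacement, the worst case $n_c=1$, and attainment via a singleton class with a replacement at distance exactly $r$. Your explicit comparison with the generic bound $2r/(\lambda n)$, and your caveat that attainment needs two same-label points at distance $r$ in the domain (e.g.\ $r\le 2B$ under a norm bound), are small refinements that the paper leaves implicit.
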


Corollary~\ref{cor:proto-sensitivity} is count-worst-case: it allows the changed record to be the only record in its class. If class counts are public, or are fixed component metadata of the label-preserving adjacency graph, the calibration can instead use the smallest nonzero public class count.

\begin{proposition}[Count-aware ridge-prototype sensitivity]
\label{prop:proto-count-aware-sensitivity}
Let $n_c$ be the public number of records in class $c$, and let
\[
n_{\min}:=\min\{n_c:n_c>0\}.
\]
Conditional on these counts, the exact $r$-adjacency sensitivity of the ridge-prototype optimizer is
\[
\Delta_{\mathrm{proto}}(r;n_1,\ldots,n_C)
=
\max_{c:n_c>0}\frac{2r}{2n_c+\lambda n}
=
\frac{2r}{2n_{\min}+\lambda n}.
\]
\end{proposition}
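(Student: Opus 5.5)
The count-aware formula follows from solving the ridge-prototype objective~\eqref{eq:ridge-proto-objective} in closed form and then reading off, class by class, how much one prototype can move under a single $r$-adjacent replacement.

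\emph{Closed-form minimizer.} The objective~\eqref{eq:ridge-proto-objective} decouples across classes. For class $c$, the block is $\frac1n\sum_{i:y_i=c}\|z_i-\mu_c\|_2^2+\frac\lambda2\|\mu_c\|_2^2$. Setting its gradient $-\frac2n\sum_{i:y_i=c}(z_i-\mu_c)+\lambda\mu_c$ to zero gives
\[
\widehat\mu_c(D)=\frac{2S_c(D)}{2n_c+\lambda n},
\qquad
S_c(D):=\sum_{i:y_i=c}z_i .
\]
For a class with $n_c=0$, this gives $\widehat\mu_c=0$. The minimizer is unique by strong convexity.

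\emph{Effect of one replacement.} Take $D\sim_r D'$ under the label-preserving metric~\eqref{eq:label-preserving-metric}. Because the metric is infinite across labels and $r<\infty$, the replaced record $(x_j,y_j)\mapsto(x'_j,y_j)$ keeps its label and satisfies $\|x_j-x'_j\|_2\le r$.
\begin{itemize}[leftmargin=*]
\item All class counts $n_1,\ldots,n_C$ are unchanged, and so is $n$. This is exactly why conditioning on the public counts is legitimate.
\item Only the sum of class $c=y_j$ changes, by $S_c(D)-S_c(D')=x_j-x'_j$.
\item Every other prototype is identical in $D$ and $D'$.
\end{itemize}
Hence
\[
\|\widehat\mu(D)-\widehat\mu(D')\|_2=\frac{2\|x_j-x'_j\|_2}{2n_c+\lambda n}\le\frac{2r}{2n_c+\lambda n}.
\]
Here $\widehat\mu$ is the concatenated parameter vector, so its Euclidean norm reduces to the norm of the single changed block.

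\emph{Maximization over admissible pairs.} Taking the supremum over all $r$-adjacent pairs with the given counts yields an upper bound of $\max_{c:n_c>0}\frac{2r}{2n_c+\lambda n}$. Classes with $n_c=0$ contain no record that could be replaced, so they are excluded. The function $n_c\mapsto 2r/(2n_c+\lambda n)$ is decreasing in $n_c$, so this maximum equals $\frac{2r}{2n_{\min}+\lambda n}$.

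\emph{Tightness (the only delicate step).} Exactness requires exhibiting a pair attaining the bound. I would pick a class $c$ with $n_c=n_{\min}$ and any dataset with these counts. I would then replace one record $(x,c)$ of that class by $(x+rv,c)$ for a unit vector $v$, so the distance is exactly $r$ and the displacement attains $\frac{2r}{2n_{\min}+\lambda n}$.

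The obstacle is whether $x+rv$ is admissible in the record domain. If a norm bound $\|z\|_2\le B$ is imposed, I would choose $x$ and $v$ so that both endpoints lie in the domain, for instance $x=-\tfrac r2 v$, which requires $r\le 2B$. I would state this mild condition explicitly; it holds for the radii considered, and larger $r$ is vacuous by Proposition~\ref{prop:bounded-replacement-special-case}. As a consistency check, setting $n_{\min}=1$ recovers Corollary~\ref{cor:proto-sensitivity}.
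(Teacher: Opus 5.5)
Your proof is correct and follows the paper's argument step for step. The steps are the class-wise closed form $\widehat\mu_c=2S_c/(2n_c+\lambda n)$, invariance of the count vector under label-preserving replacement, a single moved block of norm $2\|x_j-x_j'\|_2/(2n_c+\lambda n)$, monotonicity in $n_c$, and attainment by a distance-$r$ replacement in a class of size $n_{\min}$. Your admissibility caveat ($r\le 2B$ when a norm bound is imposed) is extra care that the paper omits, because in its setup the embeddings range over all of $\mathbb{R}^d$ and attainment is immediate.
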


The maximum is attained by a class with count $n_{\min}$, since one changed record moves rare-class prototypes the most. Because $n_{\min}\ge1$, the count-aware calibration is never looser than the count-worst-case bound $2r/(2+\lambda n)$, and it is strictly tighter whenever $n_{\min}>1$.

\begin{theorem}[Exact reconstruction from a noiseless release]
\label{thm:proto-exact-recon}
Let $D=D^-\cup\{(z,y)\}$ and suppose an informed adversary knows $D^-$, $n$, $\lambda$, and the noiseless optimum $\widehat\mu(D)$. If the missing label $y$ is known, then
\[
z=
\frac{2(n_y^-+1)+\lambda n}{2}\,\widehat\mu_y(D)
-
\sum_{(z',y')\in D^-:\,y'=y}z',
\]
where $n_y^-$ is the number of public label-$y$ examples.

If the missing label is unknown, define
\[
\mu_c^0(D^-)
:=
\frac{2\sum_{(z',y')\in D^-:\,y'=c}z'}{2n_c^-+\lambda n}
\]
for every class $c$ with $n_c^->0$. If there is a unique class $c$ such that $\widehat\mu_c(D)\ne \mu_c^0(D^-)$, then $y=c$ and the same formula reconstructs $z$.
\end{theorem}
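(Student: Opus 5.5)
The plan is to solve the first-order optimality condition of the ridge-prototype objective~\eqref{eq:ridge-proto-objective} in closed form and then invert it. The objective separates across classes, since each $\mu_c$ appears only in the terms with $y_i=c$ and in its own regularizer. Let $n_c$ denote the number of records with label $c$ in $D$. Setting the gradient with respect to $\mu_c$ to zero gives
\[
-\frac{2}{n}\sum_{i:y_i=c}(z_i-\mu_c)+\lambda\mu_c=0,
\]
and hence
\[
\widehat\mu_c(D)=\frac{2\sum_{i:y_i=c}z_i}{2n_c+\lambda n}.
\]
Uniqueness follows from $\lambda$-strong convexity, already noted after~\eqref{eq:generic-erm}. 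The formula also covers classes with $n_c=0$, for which it returns $\widehat\mu_c=0$. Note that $n$, the total number of records in $D$, appears in the regularizer scaling and is assumed known.

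\textbf{Known label.} For the class $y$ of the missing record, $n_y=n_y^-+1$ and the class sum equals $z+\sum_{(z',y')\in D^-:\,y'=y}z'$. Multiplying the stationarity formula by $(2(n_y^-+1)+\lambda n)/2$ and subtracting the public sum isolates $z$, which gives exactly the displayed expression.

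\textbf{Unknown label.} For every class $c\neq y$, the dataset $D$ contains exactly the public label-$c$ records, so $\widehat\mu_c(D)=\mu_c^0(D^-)$ whenever $n_c^->0$. Consequently any class $c$ with $n_c^->0$ and $\widehat\mu_c(D)\neq\mu_c^0(D^-)$ must equal $y$.

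The delicate step is justifying the hypothesis's reading and the edge cases. Classes with $n_c^-=0$ and $c\neq y$ satisfy $\widehat\mu_c(D)=0$. One has to interpret "unique class with $\widehat\mu_c\ne\mu_c^0$" as ranging over classes where the comparison is meaningful. I would extend the convention $\mu_c^0:=0$ for $n_c^-=0$, which is consistent with the formula. Under this convention, every class other than $y$ matches its reference value. So if exactly one class mismatches, it must be $y$.

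It is possible that $\widehat\mu_y(D)=\mu_y^0(D^-)$ by coincidence, namely when $z$ equals a specific multiple of the public class sum. In that case no class mismatches, so the uniqueness hypothesis fails and the theorem makes no claim. The hypothesis thus exactly rules out this ambiguity. Once $y=c$ is identified, the known-label formula applies verbatim.
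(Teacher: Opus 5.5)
Your proposal is correct and follows the paper's proof essentially step for step. You derive the closed form $\widehat\mu_c(D)=2S_c(D)/(2n_c(D)+\lambda n)$ from the class-separable first-order condition, invert it when the label is known, and observe that every class $c\ne y$ reproduces its $D^-$-implied prototype, so a unique mismatching class must be $y$. Your convention $\mu_c^0:=0$ for classes with $n_c^-=0$ is a harmless refinement. The paper additionally proves that the degenerate case $\widehat\mu_y(D)=\mu_y^0(D^-)$ occurs exactly when $z=\mu_y^0(D^-)$, but the theorem itself does not need this.
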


The only label-identification obstruction is the degenerate event in which the missing example equals the $D^-$-implied prototype of its own class. In that case, adding the missing record does not move the corresponding released prototype.

\begin{corollary}[Known-label noisy inversion]
\label{cor:ridge-noisy-inversion}
Fix $D^-$ and suppose the missing label $y$ is known. Let
\[
A_y:=2(n_y^-+1)+\lambda n,
\qquad
S_y^-:=\sum_{(z',y')\in D^-:\,y'=y}z'.
\]
For the Gaussian release, define
\[
W_y=(A_y/2)\widetilde\mu_y-S_y^-.
\]
If the hidden record is $Z$, then
\[
W_y=Z+\eta_y,
\qquad
\eta_y\sim\mathcal N(0,\tau_y^2 I),
\qquad
\tau_y=(A_y/2)\sigma.
\]
\end{corollary}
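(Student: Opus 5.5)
The plan is to obtain the statement directly from the closed-form ridge-prototype optimizer that underlies Theorem~\ref{thm:proto-exact-recon}, combined with the Gaussian release model $\widetilde\mu=\widehat\mu(D)+\xi$ with $\xi\sim\mathcal N(0,\sigma^2 I)$. First, set the gradient of~\eqref{eq:ridge-proto-objective} with respect to $\mu_y$ to zero and write down the stationarity condition. Since $\sum_{i:y_i=y}\frac{2}{n}(\mu_y-z_i)+\lambda\mu_y=0$, the class-$y$ block of the optimizer is
\[
\widehat\mu_y(D)=\frac{2\sum_{i:y_i=y}z_i}{2n_y+\lambda n}.
\]
For $D=D^-\cup\{(Z,y)\}$ with $y$ known, we have $n_y=n_y^-+1$ and $\sum_{i:y_i=y}z_i=S_y^-+Z$. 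Hence $\widehat\mu_y(D)=2(S_y^-+Z)/A_y$, which is exactly the identity used in Theorem~\ref{thm:proto-exact-recon}. Rearranged, it reads $(A_y/2)\widehat\mu_y(D)-S_y^-=Z$.

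Second, substitute the noisy release. The class-$y$ block of the release is $\widetilde\mu_y=\widehat\mu_y(D)+\xi_y$, where $\xi_y$ is the $d$-dimensional sub-vector of $\xi$ corresponding to prototype $y$. Because $\xi$ has isotropic covariance $\sigma^2 I_{Cd}$, the marginal is $\xi_y\sim\mathcal N(0,\sigma^2 I_d)$. Linearity then gives
\[
W_y=(A_y/2)\widetilde\mu_y-S_y^-=Z+(A_y/2)\xi_y.
\]
Set $\eta_y:=(A_y/2)\xi_y$. A scalar multiple of an isotropic Gaussian is isotropic Gaussian, so $\eta_y\sim\mathcal N(0,(A_y/2)^2\sigma^2 I)$, and therefore $\tau_y=(A_y/2)\sigma$.

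There is no real obstacle, only bookkeeping. One point is that $A_y$ and $S_y^-$ depend only on public data ($D^-$, $n$, $\lambda$, and the known label), so $W_y$ is a deterministic affine function of the release that the adversary can compute. The other point is that the noise is independent of $Z$, so conditional on the hidden record $Z$ the law of $W_y$ is $\mathcal N(Z,\tau_y^2 I)$, which is what the statement asserts. The only care needed is the block-extraction step: the Gaussian mechanism adds noise to the full stacked parameter vector, and we use that coordinate projections of $\mathcal N(0,\sigma^2 I)$ are again $\mathcal N(0,\sigma^2 I)$.
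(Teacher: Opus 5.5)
Your proposal is correct and matches the paper's proof step for step. Both use the closed-form class-$y$ prototype $2(S_y^-+Z)/A_y$, substitute the Gaussian release, and apply the affine rescaling that gives $\eta_y=(A_y/2)\xi_y$; your remarks on marginalizing the isotropic noise to the $y$-block and on its independence from $Z$ simply make explicit what the paper leaves implicit.
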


Thus, for a uniform finite prior with common known label $y$, the exact MAP attack is nearest-neighbor decoding of $W_y$ over the candidate support.

\subsection{Logistic and SVM heads}
\label{sec:secondary-head-statements}

The same ERM sensitivity theorem applies to the additional convex heads used in the experiments once Assumption~\ref{ass:lz} is verified for each model. For the main text, we record only the resulting constants needed for the Ball-DP calibration; the proof of Proposition~\ref{prop:secondary-lz-summary} is deferred to Appendix~\ref{app:proof-secondary-lz}.

Throughout this subsection we use the label-preserving metric, assume $\|z\|_2\le B$, and write
\[
\widetilde B:=\sqrt{B^2+1}
\]
for the augmented feature norm.

\begin{proposition}[Explicit $L_z$ constants for common convex heads]
\label{prop:secondary-lz-summary}
For same-label substitutions, Assumption~\ref{ass:lz} holds for the following convex heads with the corresponding dataset-independent constants:
\begin{itemize}[leftmargin=1.4em]
    \item \emph{Squared hinge loss:}
    \[
    L_z \le 2+4B\sqrt{2/\lambda}.
    \]

    \item \emph{Binary logistic loss:}
    \[
    L_z \le 1+\frac{\widetilde B^2}{4\lambda}.
    \]

    \item \emph{Softmax logistic loss:}
    \[
    L_z \le \sqrt 2\left(1+\frac{\widetilde B^2}{2\lambda}\right).
    \]
\end{itemize}

Consequently, for any of these heads, Theorem~\ref{thm:erm-sensitivity} gives
\[
\|\widehat\theta(D)-\widehat\theta(D')\|
\le
\frac{L_z}{\lambda n}r
\qquad
\text{for all } D\sim_r D',
\]
with $L_z$ chosen from the corresponding case above.
\end{proposition}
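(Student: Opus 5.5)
For each head I would write the per-record gradient $\nabla_\theta \ell(\theta;(x,y))$ explicitly and bound $\|\nabla_\theta\ell(\theta;(x,y))-\nabla_\theta\ell(\theta;(x',y))\|_2$ by a constant times $\|x-x'\|_2$. Only same-label pairs need to be considered: for $y\neq y'$ the label-preserving metric gives $d=\infty$ and the inequality is trivial. Each loss has the form $g(\theta^\top\tilde x)$ (or $g(W\tilde x)$ in the multiclass case), where $\tilde x=(x,1)$ is the augmented feature. The gradient is therefore $g'(\cdot)\,\tilde x$, and I split the difference as
\[
g'(s)\tilde x-g'(s')\tilde x' = g'(s)(\tilde x-\tilde x') + \bigl(g'(s)-g'(s')\bigr)\tilde x'.
\]
The first term is at most $\sup|g'|\,\|x-x'\|_2$. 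For the second term I use $|g'(s)-g'(s')|\le \mathrm{Lip}(g')\,\|\theta\|_2\,\|x-x'\|_2$ together with $\|\tilde x'\|_2\le \widetilde B$ (or $\le B$ if there is no bias term). This gives
\[
L_z\le \sup|g'|+\mathrm{Lip}(g')\,\widetilde B\,\sup_{\theta\in\Theta}\|\theta\|_2 .
\]

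The remaining ingredient is a bound on $\|\theta\|_2$ over the set $\Theta$ of minimizers, which is allowed by Assumption~\ref{ass:lz}. I would use the standard optimality argument $F_D(\widehat\theta)\le F_D(0)$, which gives
\[
\tfrac{\lambda}{2}\|\widehat\theta\|_2^2\le \tfrac1n\sum_i \ell(0;z_i).
\]
For the squared hinge, $\ell(0;z)=1$, so $\|\widehat\theta\|\le\sqrt{2/\lambda}$. With $g'(s)=-2\max(0,1-s)$ and $\mathrm{Lip}(g')=2$, the first term needs $|g'(s)|\le 2$. I expect this is the point where the paper's constant $2$ enters, via a bound like $\max(0,1-s)\le 1$ on the relevant region, or via a slightly different grouping of terms. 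I would then get $2+2\cdot 2B\sqrt{2/\lambda}=2+4B\sqrt{2/\lambda}$, matching the first bullet, which suggests the hinge head uses $\|x\|\le B$ without a bias. For binary logistic, $|g'|\le1$ and $\mathrm{Lip}(g')=1/4$. The displayed constant $\widetilde B^2/(4\lambda)$ suggests that the norm bound should instead come from the first-order condition $\lambda\widehat\theta=-\frac1n\sum g'(\cdot)\tilde x_i$, which gives $\|\widehat\theta\|\le \widetilde B/\lambda$. Plugging this in yields $1+\widetilde B^2/(4\lambda)$. I would use this gradient-based bound wherever it produces the paper's constant.

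For softmax, the gradient with respect to the matrix $W$ is $(p(W\tilde x)-e_y)\tilde x^\top$. I would apply the same split. The vector $p-e_y$ has Euclidean norm at most $\sqrt2$, and the softmax map is $\tfrac12$-Lipschitz in $\ell_2$ (its Jacobian $\mathrm{diag}(p)-pp^\top$ has spectral norm $\le 1/2$). The first-order condition gives $\|W\|_F\le \sqrt2\,\widetilde B/\lambda$. Combining these gives
\[
L_z\le \sqrt2+\tfrac12\cdot\sqrt2\,\tfrac{\widetilde B}{\lambda}\cdot\widetilde B=\sqrt2\bigl(1+\tfrac{\widetilde B^2}{2\lambda}\bigr).
\]
The consequence for the sensitivity bound then follows immediately from Theorem~\ref{thm:erm-sensitivity}, since each objective is differentiable and $\lambda$-strongly convex.

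The main obstacle is bookkeeping rather than concept. I need to choose, for each head, which parameter-norm bound (the value-based one or the stationarity-based one) and which feature norm ($B$ versus $\widetilde B$) reproduce exactly the stated constants. I also need to handle the squared-hinge first term carefully, since $g'$ is not globally bounded, possibly by restricting to $\Theta$ and bounding the margin there.
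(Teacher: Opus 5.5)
Your plan follows the paper's proof closely. It uses the same add-and-subtract split of the gradient difference and the same value-based bound $F_D(w^\star)\le F_D(0)=1$ for the bias-free squared hinge. It also uses the same stationarity bounds $\|\widetilde\theta^\star\|_2\le\widetilde B/\lambda$ and $\|\widetilde W^\star\|_F\le\sqrt2\,\widetilde B/\lambda$ for the two logistic heads, and the same softmax facts ($\|p-e_y\|_2\le\sqrt2$, Jacobian spectral norm at most $1/2$). Your binary and softmax logistic computations are correct as written and give the stated constants.

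The squared-hinge step, however, contains two errors that cancel each other.

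\emph{First error.} The bound $|g'(s)|\le 2$, i.e.\ $\max\{0,1-yw^\top x\}\le 1$, is false on $\Theta$. Assumption~\ref{ass:lz} must hold for all records, and the record $x=-yBw/\|w\|_2$ gives $\max\{0,1-yw^\top x\}=1+B\|w\|_2$. Regrouping your bound does not rescue it either. When $x'-x\perp w$ the margin is unchanged, so your second term vanishes. The first term alone must then carry the full gradient change $2\max\{0,1-yw^\top x\}\,\|x-x'\|_2$, which is close to $2(1+B\|w\|_2)\|x-x'\|_2$ for records near $-yBw/\|w\|_2$.

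\emph{Second error.} Your own formula $\sup|g'|+\mathrm{Lip}(g')\,B\,\sup_{\Theta}\|w\|_2$ gives a second term of $2B\sqrt{2/\lambda}$, not $2\cdot 2B\sqrt{2/\lambda}$.

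\emph{Correct accounting.} This is exactly the paper's: it writes $g(z)-g(z')=-2y\bigl(z(h(z)-h(z'))+(z-z')h(z')\bigr)$ and uses $h(z')\le 1+B\|w\|_2$. The first term is at most $2(1+B\|w\|_2)\|x-x'\|_2$. The second term is at most $2B\|w\|_2\|x-x'\|_2$. The total is $(2+4B\|w\|_2)\|x-x'\|_2$. So the constant $2$ is the margin-free part of the hinge value, and the second copy of $2B\|w\|_2$ comes from its margin-dependent part, not from an extra factor in the Lipschitz term. With $\|w\|_2\le\sqrt{2/\lambda}$ this gives $2+4B\sqrt{2/\lambda}$, and your argument is then complete.
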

\section{Related Work}
\label{sec:related}

\paragraph{Adjacency design beyond the global worst case.}
The choice of adjacency is central to the meaning of DP \cite{dwork2014book}. Metric formulations such as $d_X$-privacy and later work on metric DP replace the uniform adjacency graph by guarantees that depend on a record metric, especially in location and embedding domains \cite{chatzikokolakis2013metrics,imola2022metricdp}. Ball-DP is simpler and more policy-oriented: it keeps the standard $(\varepsilon,\delta)$ semantics unchanged and restricts the protected substitutions to a declared radius-$r$ neighborhood.

\paragraph{Private convex ERM and output perturbation.}
Private ERM under convexity has a long history, including output- and objective-perturbation methods \cite{chaudhuri2011erm}. In the Gaussian output-perturbation regime, analytic calibration improves over the classical tail-bound formula \cite{balle2018analytic}. Our generic ERM sensitivity theorem is the local counterpart of the usual output-perturbation analysis. Ridge prototypes give an exact equality rather than only an upper bound, while the binary-logistic, softmax-logistic, and squared-hinge SVM constants show how the same local-calibration argument extends to common convex linear heads.

\paragraph{Reconstruction attacks and reconstruction robustness.}
Balle, Cherubin, and Hayes formalized the informed-adversary reconstruction setting and showed that convex models can admit exact or near-exact inversion from noiseless releases \cite{balle2022informed}. Hayes, Mahloujifar, and Balle derived reconstruction bounds and matching attacks for DP-SGD \cite{hayes2023bounding}. The hypothesis-testing view of privacy, developed in the $f$-DP line of work, provides a natural route from privacy guarantees to attack bounds \cite{dong2022fdp}. Kaissis \emph{et al.} used this perspective to derive direct reconstruction bounds for Gaussian and subsampled mechanisms \cite{kaissis2023rero}. Our Gaussian Ball-ReRo result is the local, radius-thresholded analogue of that direct Gaussian testing argument.

\paragraph{Prototype-based private learning.}
Prototype representations are common in classical nearest-centroid classifiers and modern embedding-based learning \cite{snell2017prototypical}. Recent private learning work has also explored prototype releases in transfer-learning settings \cite{wahdany2025dppl}. Our use of prototypes is different: the model is chosen as the closed-form audit case because its convex structure yields exact local sensitivity and an exact informed-adversary audit. The broader convex-head contribution is captured by the generic ERM theorem and the explicit logistic/SVM constants.

\section{Summary and Future Directions}
\label{sec:conclusion}

This paper introduced Ball-DP, a radius-restricted adjacency relation that makes the locality of a privacy guarantee an explicit, auditable policy choice rather than an implicit byproduct of noise tuning. The standard $(\varepsilon, \delta)$-indistinguishability semantics are preserved within the declared neighborhood; only the scope of the guarantee changes. For strongly convex ERM, we showed that sensitivity under $r$-adjacency scales linearly with $r$, giving a direct route to recalibrated Gaussian output perturbation. The theorem is instantiated for ridge prototypes, binary logistic regression, softmax logistic regression, and squared-hinge SVMs through explicit record-Lipschitz constants. For ridge-regularized prototype classifiers, the sensitivity, the noiseless informed-adversary inversion, and the finite-prior Gaussian MAP attack are all exact, allowing privacy calibration and reconstruction auditing to be studied without confounding approximation error.

The experiments show that this local calibration recovers meaningful accuracy, while empirical exact-identification rates remain below the theorem-backed Ball-ReRo certificates. The additional IMDb convex-head sweep shows the same utility--certificate pattern for binary logistic and squared-hinge SVM heads, supporting the claim that the method is a convex-ERM calibration principle rather than a prototype-only construction. The current analysis is, however, confined to convex, one-shot Gaussian releases with public calibration metadata; it does not address label-changing substitutions, transcript-level releases from private SGD, or nonconvex representation-learning pipelines. Extending the Ball-DP accounting to these settings, particularly multi-step DP-SGD transcripts where sensitivity no longer admits a closed form, is the main direction for future work.

\bibliographystyle{abbrvnat}
\bibliography{ref}

\section*{Ethics considerations}
This paper studies privacy guarantees and reconstruction attacks for machine-learning models trained on potentially sensitive data. The experiments use fixed embedding representations and theorem-aligned attack protocols whose purpose is to audit privacy leakage under controlled assumptions, not to facilitate unauthorized extraction of personal data from deployed systems. Empirical attack evaluations should be run only on datasets for which the researchers have authorization, and released artifacts should omit raw private records. No human-subject intervention is involved in the analysis reported here. The experiments evaluate candidate identification in embedding space rather than recovered raw images or text.

\appendix

\section{Proofs for Main Results}
\label{sec:core-learning-proofs}


\subsection{Proof of Proposition~\ref{prop:bounded-replacement-special-case}}
\begin{proof}
If $\|z\|_2\le B$ and $\|z'\|_2\le B$, the triangle inequality gives
\[
\|z-z'\|_2\le \|z\|_2+\|z'\|_2\le 2B.
\]
Thus every admissible single-record replacement is $2B$-adjacent, so the bounded-replacement adjacency graph is contained in the Ball adjacency graph at radius $2B$.
\end{proof}

\subsection{Proof of Theorem~\ref{thm:erm-sensitivity}}
\label{app:proof-erm-sensitivity}

We first state a generic optimizer-perturbation bound that does not depend on the metric or on any Lipschitz-in-sample assumption.

\begin{lemma}[Single-record optimizer perturbation]
\label{lem:single-record-erm-perturbation}
Let $D,D'\in\mathcal Z^n$ differ only at index $j$, with records $z_j$ and $z'_j$. Suppose $F_D$ is differentiable and $\lambda$-strongly convex, and suppose the minimizers $\widehat\theta(D)$ and $\widehat\theta(D')$ lie in a parameter set $\Theta$. Then
\[
\|\widehat\theta(D)-\widehat\theta(D')\|_2
\le
\frac{1}{\lambda n}
\sup_{\theta\in\Theta}
\left\|
\nabla_\theta\ell(\theta;z_j)
-
\nabla_\theta\ell(\theta;z'_j)
\right\|_2.
\]
\end{lemma}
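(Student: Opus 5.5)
The plan is the standard strong-convexity argument comparing the two first-order optimality conditions. Write $\theta:=\widehat\theta(D)$ and $\theta':=\widehat\theta(D')$. Since $F_D$ and $F_{D'}$ are differentiable, $\nabla F_D(\theta)=0$ and $\nabla F_{D'}(\theta')=0$. Because $D$ and $D'$ agree except at index $j$, the two objectives differ only in one summand:
\[
F_{D'}(\vartheta)=F_D(\vartheta)+\frac1n\bigl(\ell(\vartheta;z'_j)-\ell(\vartheta;z_j)\bigr)\quad\text{for all }\vartheta .
\]
Evaluating the gradient of this identity at $\theta'$ gives
\[
\nabla F_D(\theta')=-\frac1n\bigl(\nabla_\theta\ell(\theta';z'_j)-\nabla_\theta\ell(\theta';z_j)\bigr).
\]

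Next I use $\lambda$-strong convexity of $F_D$, in its gradient-monotonicity form:
\[
\langle\nabla F_D(\theta')-\nabla F_D(\theta),\,\theta'-\theta\rangle\ge\lambda\|\theta'-\theta\|_2^2 .
\]
Substitute $\nabla F_D(\theta)=0$ and the expression for $\nabla F_D(\theta')$ above, then apply Cauchy--Schwarz to the left side. This yields
\[
\lambda\|\theta'-\theta\|_2^2\le\frac1n\,\bigl\|\nabla_\theta\ell(\theta';z_j)-\nabla_\theta\ell(\theta';z'_j)\bigr\|_2\,\|\theta'-\theta\|_2 .
\]
If $\theta=\theta'$ the claim is trivial. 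Otherwise, divide by $\|\theta'-\theta\|_2$. Since $\theta'\in\Theta$ by hypothesis, the gradient difference at $\theta'$ is at most its supremum over $\Theta$, which gives the stated bound.

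The only point needing care is which objective's strong convexity to invoke. I use $F_D$'s, and evaluate the perturbation term at $\theta'$, so that the supremum only requires $\theta'\in\Theta$. Symmetrically, one could use $F_{D'}$ and evaluate at $\theta$.

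With the lemma in hand, Theorem~\ref{thm:erm-sensitivity} follows immediately. Assumption~\ref{ass:lz} bounds the supremum by $L_z\,d(z_j,z'_j)$, and $r$-adjacency gives $d(z_j,z'_j)\le r$.
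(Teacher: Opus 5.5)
Your proposal is correct and follows the paper's proof essentially step for step. Both arguments use strong monotonicity of $\nabla F_D$ together with $\nabla F_D(\widehat\theta(D))=0$ and Cauchy--Schwarz. Both then identify $\nabla F_D(\widehat\theta(D'))$ with $\frac1n$ times the single-record gradient difference via $\nabla F_{D'}(\widehat\theta(D'))=0$, so that only $\widehat\theta(D')\in\Theta$ is needed to pass to the supremum.
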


\begin{proof}[Proof of Lemma~\ref{lem:single-record-erm-perturbation}]
Let
\[
u:=\widehat\theta(D),
\qquad
v:=\widehat\theta(D').
\]
If $u=v$, the result is immediate, so assume $u\ne v$. Since $F_D$ is differentiable and $\lambda$-strongly convex, its gradient is $\lambda$-strongly monotone:
\[
\big\langle \nabla F_D(a)-\nabla F_D(b),\,a-b\big\rangle
\ge
\lambda\|a-b\|_2^2
\qquad
\text{for all }a,b.
\]
Apply this with $a=u$ and $b=v$. Since $u$ minimizes $F_D$, $\nabla F_D(u)=0$, and therefore
\[
\lambda\|u-v\|_2^2
\le
\big\langle -\nabla F_D(v),\,u-v\big\rangle
\le
\|\nabla F_D(v)\|_2\,\|u-v\|_2.
\]
Dividing by $\|u-v\|_2$ gives
\[
\lambda\|u-v\|_2
\le
\|\nabla F_D(v)\|_2.
\]
Since $v$ minimizes $F_{D'}$, $\nabla F_{D'}(v)=0$, so
\[
\|\nabla F_D(v)\|_2
=
\|\nabla F_D(v)-\nabla F_{D'}(v)\|_2.
\]
The datasets differ only at index $j$, hence all unchanged records and the regularizer cancel in the gradient difference:
\[
\nabla F_D(v)-\nabla F_{D'}(v)
=
\frac{1}{n}
\left(
\nabla_\theta\ell(v;z_j)
-
\nabla_\theta\ell(v;z'_j)
\right).
\]
Because $v\in\Theta$, we have
\[
\left\|
\nabla_\theta\ell(v;z_j)
-
\nabla_\theta\ell(v;z'_j)
\right\|_2
\le
\sup_{\theta\in\Theta}
\left\|
\nabla_\theta\ell(\theta;z_j)
-
\nabla_\theta\ell(\theta;z'_j)
\right\|_2.
\]
Combining the previous displays yields
\[
\|\widehat\theta(D)-\widehat\theta(D')\|_2
\le
\frac{1}{\lambda n}
\sup_{\theta\in\Theta}
\left\|
\nabla_\theta\ell(\theta;z_j)
-
\nabla_\theta\ell(\theta;z'_j)
\right\|_2,
\]
as claimed.
\end{proof}

\begin{proof}[Proof of Theorem~\ref{thm:erm-sensitivity}]
Let $D\sim_rD'$ and let $j$ be the differing coordinate. By Lemma~\ref{lem:single-record-erm-perturbation},
\[
\|\widehat\theta(D)-\widehat\theta(D')\|_2
\le
\frac{1}{\lambda n}
\sup_{\theta\in\Theta}
\left\|
\nabla_\theta\ell(\theta;z_j)
-
\nabla_\theta\ell(\theta;z'_j)
\right\|_2.
\]
Assumption~\ref{ass:lz} gives
\[
\sup_{\theta\in\Theta}
\left\|
\nabla_\theta\ell(\theta;z_j)
-
\nabla_\theta\ell(\theta;z'_j)
\right\|_2
\le
L_z\,d(z_j,z'_j).
\]
Since $D\sim_rD'$, we have $d(z_j,z'_j)\le r$. Therefore
\[
\|\widehat\theta(D)-\widehat\theta(D')\|_2
\le
\frac{L_z}{\lambda n}r.
\]
\end{proof}

\subsection{Proof of Proposition~\ref{prop:outside-ball}}
\label{app:proof-outside-ball}

\begin{proof}
Fix $t>0$ and consider any pair $D\sim_tD'$. By Theorem~\ref{thm:erm-sensitivity} with radius $t$,
\[
\|\widehat\theta(D)-\widehat\theta(D')\|_2
\le
\Delta_t
:=
\frac{L_z}{\lambda n}t.
\]
Thus the two releases are Gaussian distributions with common covariance $\sigma^2I_p$ and means separated by at most $\Delta_t$. The Gaussian privacy profile therefore implies that the release is $(\varepsilon_t,\delta_t)$-DP on the $t$-adjacency relation whenever
\[
\delta_t
\ge
\Phi\left(\frac{\Delta_t}{2 \sigma}-\frac{\varepsilon_t \sigma}{\Delta_t}\right)-e^{\varepsilon} \Phi\left(-\frac{\Delta_t}{2 \sigma}-\frac{\varepsilon \sigma}{\Delta_t}\right)
\]

For the closed-form statement, assume
\[
\sigma\ge \frac{\Delta_r}{\varepsilon}\sqrt{2\log(1.25/\delta)},
\qquad
\Delta_r:=\frac{L_z}{\lambda n}r.
\]
Set $\varepsilon_t=\varepsilon t/r$. Since
\[
\frac{\Delta_t}{\varepsilon_t}
=
\frac{(L_z/(\lambda n))t}{\varepsilon t/r}
=
\frac{L_zr}{\lambda n\,\varepsilon}
=
\frac{\Delta_r}{\varepsilon},
\]
the chosen $\sigma$ also satisfies
\[
\sigma\ge
\frac{\Delta_t}{\varepsilon_t}
\sqrt{2\log(1.25/\delta)}.
\]
Whenever $\varepsilon_t$ lies in the usual parameter range for the classical Gaussian mechanism bound, that bound gives $(\varepsilon_t,\delta)$-DP on the $t$-adjacency relation.
\end{proof}

\subsection{Proofs for the Ridge-Prototype Model}
\label{app:proof-ridge}

\subsubsection{Proof of Theorem~\ref{thm:proto-lz}}

\begin{proof}
We identify the parameter $\mu=(\mu_1,\dots,\mu_C)$ with the concatenated vector in $\mathbb{R}^{Cd}$ obtained by stacking the class blocks. Under this identification, the per-example loss is
\[
\ell(\mu;(z,y))=\|z-\mu_y\|_2^2.
\]
Fix a class index $c\in\{1,\dots,C\}$. The gradient of $\ell$ with respect to block $\mu_c$ is
\[
\nabla_{\mu_c}\ell(\mu;(z,y))
=
2(\mu_y-z)\,\ind{c=y}.
\]
Now take two records $(z,y)$ and $(z',y')$ and subtract their blockwise gradients:
\[
\begin{aligned}
&\nabla_{\mu_c}\ell(\mu;(z,y))
-
\nabla_{\mu_c}\ell(\mu;(z',y')) \\
&\qquad =
2(\mu_y-z)\,\ind{c=y}
-
2(\mu_{y'}-z')\,\ind{c=y'}.
\end{aligned}
\]
Under the label-preserving metric of Equation~\eqref{eq:label-preserving-metric}, finite record distance occurs only when $y=y'$. Therefore the only nonvacuous case is $y=y'$. In that case the display above simplifies to
\[
\nabla_{\mu_c}\ell(\mu;(z,y))-
\nabla_{\mu_c}\ell(\mu;(z',y))
=
2(z'-z)\,\ind{c=y}.
\]
This means that all blocks are zero except the block indexed by $y$. The full gradient difference therefore has exactly one nonzero block, equal to $2(z'-z)$. Its Euclidean norm is thus
\[
\|\nabla_\mu \ell(\mu;(z,y))-
\nabla_\mu \ell(\mu;(z',y))\|_2
=
2\|z-z'\|_2.
\]
By the definition of the label-preserving metric,
\[
d\big((z,y),(z',y)\big)=\|z-z'\|_2.
\]
Substituting this identity into the previous display yields
\[
\|\nabla_\mu \ell(\mu;(z,y))-
\nabla_\mu \ell(\mu;(z',y))\|_2
=
2\,d\big((z,y),(z',y)\big).
\]
Hence Assumption~\ref{ass:lz} holds with $L_z=2$.

To see that the constant is exact, choose any same-label pair with $z\neq z'$. The last display becomes an equality, so no smaller constant can work uniformly.
\end{proof}

\subsubsection{Proof of Corollary~\ref{cor:proto-sensitivity}}

\begin{proof}
Let
\[
\widehat\mu(D)\in\arg\min_\mu F_D(\mu)
\]
be the unique minimizer of the ridge-prototype objective. For each class $c$, define the class sum and class count
\[
S_c(D):=\sum_{i:y_i=c} z_i,
\qquad
n_c(D):=\#\{i:y_i=c\}.
\]
Because the objective separates over the class blocks, we can write
\[
F_D(\mu)
=
\sum_{c=1}^C
\left[
\frac{1}{n}\sum_{i:y_i=c}\|z_i-\mu_c\|_2^2
+
\frac{\lambda}{2}\|\mu_c\|_2^2
\right].
\]
Fix a class $c$. Differentiating the $c$th summand with respect to $\mu_c$ gives
\[
\nabla_{\mu_c}F_D(\mu)
=
\frac{2}{n}\sum_{i:y_i=c}(\mu_c-z_i)+\lambda\mu_c.
\]
Expand the sum:
\[
\nabla_{\mu_c}F_D(\mu)
=
\frac{2n_c(D)}{n}\mu_c-
\frac{2}{n}S_c(D)+\lambda\mu_c.
\]
Group the $\mu_c$ terms:
\[
\nabla_{\mu_c}F_D(\mu)
=
\left(\frac{2n_c(D)}{n}+\lambda\right)\mu_c-
\frac{2}{n}S_c(D).
\]
At the optimum this gradient is zero, so
\[
\left(\frac{2n_c(D)}{n}+\lambda\right)\widehat\mu_c(D)=\frac{2}{n}S_c(D).
\]
Multiply both sides by $n$:
\[
\bigl(2n_c(D)+\lambda n\bigr)\widehat\mu_c(D)=2S_c(D).
\]
Therefore the closed-form optimizer is
\begin{equation}
\label{eq:proto-closed-form}
\widehat\mu_c(D)=\frac{2S_c(D)}{2n_c(D)+\lambda n}.
\end{equation}

Now let $D'$ be obtained from $D$ by replacing one record $(z,y)$ by $(z',y)$ with $\|z-z'\|_2\le r$. Because the label is preserved, the class counts do not change:
\[
n_c(D')=n_c(D)\qquad \text{for every class }c.
\]
The class sums are also unchanged for all classes except the affected class $y$:
\[
S_c(D')=S_c(D)\qquad \text{for }c\neq y,
\]
while for the affected class,
\[
S_y(D')=S_y(D)-z+z'.
\]
Substituting these identities into~\eqref{eq:proto-closed-form} shows that
\[
\widehat\mu_c(D')=\widehat\mu_c(D)\qquad \text{for every }c\neq y,
\]
and
\[
\widehat\mu_y(D')-
\widehat\mu_y(D)
=
\frac{2\bigl(S_y(D')-S_y(D)\bigr)}{2n_y(D)+\lambda n}
=
\frac{2(z'-z)}{2n_y(D)+\lambda n}.
\]
Thus the full parameter difference has exactly one nonzero block, namely the $y$th block. Its norm is therefore
\[
\|\widehat\mu(D')-\widehat\mu(D)\|_2
=
\|\widehat\mu_y(D')-\widehat\mu_y(D)\|_2
=
\frac{2\|z'-z\|_2}{2n_y(D)+\lambda n}.
\]
Since the changed record itself belongs to class $y$, we have $n_y(D)\ge 1$. Hence
\[
\|\widehat\mu(D')-\widehat\mu(D)\|_2
\le
\frac{2r}{2n_y(D)+\lambda n}
\le
\frac{2r}{2+\lambda n}.
\]
This proves the upper bound
\[
\Delta_{\mathrm{proto}}(r)\le \frac{2r}{2+\lambda n}.
\]

To prove equality, choose a dataset in which the modified record is the \emph{only} example from its class, so that $n_y(D)=1$. Choose the replacement so that $\|z-z'\|_2=r$. For this pair,
\[
\|\widehat\mu(D')-\widehat\mu(D)\|_2
=
\frac{2r}{2+\lambda n}.
\]
Therefore the upper bound is attained, and
\[
\Delta_{\mathrm{proto}}(r)=\frac{2r}{2+\lambda n}.
\]
\end{proof}

\subsubsection{Proof of Proposition~\ref{prop:proto-count-aware-sensitivity}}

\begin{proof}
Under the label-preserving metric, adjacent datasets differ by replacing $(z,y)$ with $(z',y)$, so the class labels and hence the count vector are unchanged along every adjacency edge. The proof of Corollary~\ref{cor:proto-sensitivity} gives, for a replacement in class $c$,
\[
\|\widehat\mu(D')-\widehat\mu(D)\|_2
=
\frac{2\|z'-z\|_2}{2n_c+\lambda n}.
\]
Conditional on the public count vector, the only remaining choices are the class $c$ and the replacement distance. The radius policy imposes $\|z'-z\|_2\le r$, so the largest possible movement is
\[
\max_{c:n_c>0}\frac{2r}{2n_c+\lambda n}.
\]
The denominator is smallest for a class with the smallest nonzero count $n_{\min}$, which gives $2r/(2n_{\min}+\lambda n)$. Equality is attained by choosing such a class and a replacement at distance exactly $r$.
\end{proof}

\subsubsection{Proof of Theorem~\ref{thm:proto-exact-recon}}

\begin{proof}
For each class $c\in\{1,\dots,C\}$, define the class sum and class count in the known dataset $D^-$ by
\[
\begin{aligned}
S_c^-
&:= \sum_{\substack{(z',y')\in D^-\\ y'=c}} z',
\\
n_c^-
&:= \#\{(z',y')\in D^- : y'=c\}.
\end{aligned}
\]
Since the full dataset is $D=D^-\cup\{(z,y)\}$, the full-data class statistics are
\[
S_c(D)=S_c^-+z\,\ind{c=y},
\qquad
n_c(D)=n_c^-+\ind{c=y}.
\]

As shown in the proof of Corollary~\ref{cor:proto-sensitivity}, the released prototype satisfies
\[
\widehat\mu_c(D)=\frac{2S_c(D)}{2n_c(D)+\lambda n}.
\]
We now prove the two claims separately.

\paragraph{Part 1: exact record reconstruction when the label is known.}
Assume the adversary knows the missing label $y$. Then for the affected class,
\[
S_y(D)=S_y^-+z,
\qquad
n_y(D)=n_y^-+1.
\]
Substituting these identities into the closed-form optimizer gives
\[
\widehat\mu_y(D)=\frac{2(S_y^-+z)}{2(n_y^-+1)+\lambda n}.
\]
Multiply both sides by the denominator:
\[
\bigl(2(n_y^-+1)+\lambda n\bigr)\widehat\mu_y(D)=2(S_y^-+z).
\]
Expand the right-hand side:
\[
\bigl(2(n_y^-+1)+\lambda n\bigr)\widehat\mu_y(D)=2S_y^-+2z.
\]
Subtract $2S_y^-$ from both sides:
\[
\bigl(2(n_y^-+1)+\lambda n\bigr)\widehat\mu_y(D)-2S_y^-=2z.
\]
Finally divide by $2$:
\[
z=\frac{2(n_y^-+1)+\lambda n}{2}\,\widehat\mu_y(D)-S_y^-.
\]
Since $S_y^-=
\sum_{(z',y')\in D^-:\,y'=y} z'$, this is exactly the displayed reconstruction formula.

\paragraph{Part 2: label identification from the changed prototype.}
For any class $c\neq y$, the missing record does not affect class $c$. Thus
\[
S_c(D)=S_c^-,
\qquad
n_c(D)=n_c^-.
\]
Substituting into the closed-form optimizer gives
\[
\widehat\mu_c(D)=\frac{2S_c^-}{2n_c^-+\lambda n}.
\]
Define the $D^-$-implied baseline prototype
\[
\mu_c^-:=\frac{2S_c^-}{2n_c^-+\lambda n}.
\]
Then we have shown that
\[
\widehat\mu_c(D)=\mu_c^- \qquad \text{for every }c\neq y.
\]
Therefore, if there is a unique class $c^\star$ such that
\[
\widehat\mu_{c^\star}(D)\neq \mu_{c^\star}^-,
\]
that class must be the true missing class: $c^\star=y$. Once the label is identified, Part~1 reconstructs the record exactly.

It remains to characterize the only case where the missing class fails to move. Suppose
\[
\widehat\mu_y(D)=\mu_y^-.
\]
Using the formulas above, this means
\[
\frac{2(S_y^-+z)}{2(n_y^-+1)+\lambda n}
=
\frac{2S_y^-}{2n_y^-+\lambda n}.
\]
Cancel the common factor $2$:
\[
\frac{S_y^-+z}{2(n_y^-+1)+\lambda n}
=
\frac{S_y^-}{2n_y^-+\lambda n}.
\]
Cross-multiply:
\[
(S_y^-+z)(2n_y^-+\lambda n)
=
S_y^-(2n_y^-+2+\lambda n).
\]
Expand the left-hand side:
\[
S_y^-(2n_y^-+\lambda n)+z(2n_y^-+\lambda n)
=
S_y^-(2n_y^-+\lambda n)+2S_y^-.
\]
Subtract the common term $S_y^-(2n_y^-+\lambda n)$ from both sides:
\[
z(2n_y^-+\lambda n)=2S_y^-.
\]
Finally divide by $2n_y^-+\lambda n$ to get
\[
z=\frac{2S_y^-}{2n_y^-+\lambda n}=\mu_y^-.
\]
Thus,
\[
\widehat\mu_y(D)=\mu_y^- \quad \Longrightarrow \quad z=\mu_y^-.
\]
The converse is immediate: if $z=\mu_y^-$, then substituting this into the formula for $\widehat\mu_y(D)$ gives back $\mu_y^-$. Hence
\[
\widehat\mu_y(D)=\mu_y^- \quad \Longleftrightarrow \quad z=\mu_y^-.
\]
So the only obstruction to label identification is the degenerate case in which the missing record equals the $D^-$-implied prototype of its class.
\end{proof}

\subsubsection{Proof of Corollary~\ref{cor:ridge-noisy-inversion}}

\begin{proof}
By the closed-form optimizer in~\eqref{eq:proto-closed-form}, if the missing record is $(Z,y)$ then
\[
\widehat\mu_y(D^-\cup\{(Z,y)\})=\frac{2(S_y^-+Z)}{A_y},
\qquad
A_y=2(n_y^-+1)+\lambda n.
\]
The Gaussian release gives
\[
\widetilde\mu_y=\frac{2(S_y^-+Z)}{A_y}+\xi_y,
\qquad
\xi_y\sim\mathcal N(0,\sigma^2 I).
\]
Multiplying by $A_y/2$ and subtracting $S_y^-$ yields
\[
W_y=\frac{A_y}{2}\widetilde\mu_y-S_y^-
=Z+\frac{A_y}{2}\xi_y.
\]
Thus $W_y=Z+\eta_y$ with $\eta_y\sim\mathcal N(0,(A_y\sigma/2)^2I)$.
For a uniform finite prior with common known label, the Gaussian likelihood is proportional to
\[
\exp\!\left(-\frac{\|W_y-z_i\|_2^2}{2\tau_y^2}\right),
\]
so MAP decoding is nearest-neighbor decoding over the candidate set.
\end{proof}

\section{Proofs for Ball-ReRo}
\label{app:proof-ball-rero}

\subsection{Proof of Theorem~\ref{thm:ball-dp-rero}}
\label{app:proof-ball-dp-rero}

\begin{proof}
For each record $z\in\mathcal{Z}$, define the completed dataset
\[
D_z:=D^-\cup\{z\},
\]
and let
\[
P_z:=\Law\big(M(D_z)\big)
\]
be the law of the mechanism output under $D_z$. Define the success indicator
\[
A(z,\theta):=\ind{\rho(z,R(\theta,D^-))\le \eta}.
\]
The reconstruction success probability is
\[
p_{\mathrm{succ}}
=
\mathbb{E}_{Z\sim\pi_{u,r}}
\mathbb{E}_{\theta\sim P_Z}[A(Z,\theta)].
\]
Because $\pi_{u,r}$ is supported on $\ball{d}{u,r}$, every $z$ in its support satisfies $d(z,u)\le r$. Hence $D_z$ and $D_u$ are r-adjacent. Since $M$ is $(\varepsilon,\delta)$ r-adjacency DP, for every such $z$ and every measurable event $E$,
\[
P_z(E)\le e^{\varepsilon}P_u(E)+\delta.
\]
Apply this inequality to
\[
E_z:=\{\theta: \rho(z,R(\theta,D^-))\le \eta\}.
\]
Then
\[
\mathbb{E}_{\theta\sim P_z}[A(z,\theta)]
= P_z(E_z)
\le e^{\varepsilon}P_u(E_z)+\delta.
\]
Averaging over $Z\sim\pi_{u,r}$ gives
\[
p_{\mathrm{succ}}
\le
 e^{\varepsilon}
\mathbb{E}_{Z\sim\pi_{u,r}}
\mathbb{E}_{\theta\sim P_u}[A(Z,\theta)]
+\delta.
\]
By Tonelli's theorem~\citep[Chapter~2]{folland1999real},
\[
p_{\mathrm{succ}}
\le
 e^{\varepsilon}
\mathbb{E}_{\theta\sim P_u}
\Pr_{Z\sim\pi_{u,r}}[\rho(Z,R(\theta,D^-))\le\eta]
+\delta.
\]
For each fixed $\theta$, the output constraint gives $R(\theta,D^-)\in\mathcal A_{u,r}$. Therefore
\[
\begin{aligned}
&\Pr_{Z\sim\pi_{u,r}}
  [\rho(Z,R(\theta,D^-))\le\eta] \\
&\quad\le
\sup_{a\in\mathcal A_{u,r}}
\Pr_{Z\sim\pi_{u,r}}[\rho(Z,a)\le\eta]
=
\kappa.
\end{aligned}
\]
Substituting this bound yields
\[
p_{\mathrm{succ}}
\le e^{\varepsilon}\kappa+\delta,
\]
which is the claimed Ball-ReRo guarantee.
\end{proof}

\subsection{Proof of Lemma~\ref{lem:gaussian-ball-rdp}}
\label{app:proof-gaussian-ball-rdp}

\begin{proof}
Fix a pair of r-adjacent datasets $D\sim_r D'$. Write
\[
\mu_0:=f(D),
\qquad
\mu_1:=f(D').
\]
Then the two released distributions are
\[
M(D)\sim \mathcal{N}(\mu_0,\sigma^2 I),
\qquad
M(D')\sim \mathcal{N}(\mu_1,\sigma^2 I).
\]
For two Gaussians with the same covariance matrix, the order-$\alpha$ R\'enyi divergence is
\[
D_\alpha\big(\mathcal{N}(\mu_0,\sigma^2 I)\,\|\,\mathcal{N}(\mu_1,\sigma^2 I)\big)
=
\frac{\alpha}{2\sigma^2}\|\mu_0-\mu_1\|_2^2.
\]
Applying this identity gives
\[
D_\alpha\big(M(D)\,\|\,M(D')\big)
=
\frac{\alpha}{2\sigma^2}\|f(D)-f(D')\|_2^2.
\]
By the definition of r-adjacency sensitivity,
\[
\|f(D)-f(D')\|_2\le \Delta_2(r).
\]
Substituting this into the previous display proves
\[
D_\alpha\big(M(D)\,\|\,M(D')\big)
\le
\frac{\alpha\,\Delta_2(r)^2}{2\sigma^2}.
\]
\end{proof}

\subsection{Proof of Theorem~\ref{thm:ball-rdp-rero}}
\label{app:proof-ball-rdp-rero}

\begin{proof}
For each candidate record $z$, let
\[
D_z:=D^-\cup\{z\},
\qquad
P_z:=\Law\big(M(D_z)\big).
\]
Define
\[
A(z,\theta):=\ind{\rho(z,R(\theta,D^-))\le \eta}.
\]
The reconstruction success probability is
\[
p_{\mathrm{succ}}
=
\int_{\mathcal{Z}}\int_{\Theta} A(z,\theta)\,P_z(d\theta)\,\pi_{u,r}(dz).
\]
Because $\pi_{u,r}$ is supported on $\ball{d}{u,r}$, every $z$ in its support satisfies $D_z\sim_r D_u$. By Ball-RDP,
\[
D_\alpha(P_z\|P_u)\le \varepsilon_\alpha.
\]
In particular, $P_z$ is absolutely continuous with respect to $P_u$ for all relevant $z$. Let
\[
\Lambda_z(\theta):=\frac{dP_z}{dP_u}(\theta).
\]
Rewrite $p_{\mathrm{succ}}$ using $P_u$ as the reference measure:
\[
p_{\mathrm{succ}}
=
\int_{\Theta}
\left[
\int_{\mathcal{Z}} A(z,\theta)\Lambda_z(\theta)\,\pi_{u,r}(dz)
\right]
P_u(d\theta).
\]
Fix $\theta$ and apply H\"older's inequality with conjugate exponents $\alpha$ and $\alpha'=\alpha/(\alpha-1)$:
\[
\begin{aligned}
&\int_{\mathcal{Z}}
A(z,\theta)\Lambda_z(\theta)\,\pi_{u,r}(dz) \\
&\quad\le
\left(
\int_{\mathcal{Z}} A(z,\theta)\,\pi_{u,r}(dz)
\right)^{1/\alpha'}
\left(
\int_{\mathcal{Z}} \Lambda_z(\theta)^\alpha\,
\pi_{u,r}(dz)
\right)^{1/\alpha}.
\end{aligned}
\]
For fixed $\theta$, the output constraint gives $R(\theta,D^-)\in\mathcal A_{u,r}$, so the first factor is bounded by
\[
\int_{\mathcal{Z}} A(z,\theta)\,\pi_{u,r}(dz)
=
\Pr_{Z\sim\pi_{u,r}}[\rho(Z,R(\theta,D^-))\le \eta]
\le \kappa.
\]
Therefore
\[
p_{\mathrm{succ}}
\le
\kappa^{1/\alpha'}
\int_{\Theta}
\left(
\int_{\mathcal{Z}} \Lambda_z(\theta)^\alpha\,\pi_{u,r}(dz)
\right)^{1/\alpha}
P_u(d\theta).
\]
Let
\[
H(\theta):=\int_{\mathcal{Z}} \Lambda_z(\theta)^\alpha\,\pi_{u,r}(dz).
\]
Since $x\mapsto x^{1/\alpha}$ is concave, Jensen's inequality gives
\[
\left(\int_{\Theta} H(\theta)^{1/\alpha}P_u(d\theta)\right)^\alpha
\le
\int_{\Theta}H(\theta)P_u(d\theta).
\]
Consequently,
\[
\left(\frac{p_{\mathrm{succ}}}{\kappa^{1/\alpha'}}\right)^\alpha
\le
\int_{\mathcal{Z}}
\left[
\int_{\Theta} \Lambda_z(\theta)^\alpha P_u(d\theta)
\right]
\pi_{u,r}(dz).
\]
By the definition of R\'enyi divergence,
\[
\int_{\Theta} \Lambda_z(\theta)^\alpha P_u(d\theta)
=
\exp\big((\alpha-1)D_\alpha(P_z\|P_u)\big)
\le e^{(\alpha-1)\varepsilon_\alpha}.
\]
Thus
\[
\left(\frac{p_{\mathrm{succ}}}{\kappa^{1/\alpha'}}\right)^\alpha
\le e^{(\alpha-1)\varepsilon_\alpha}.
\]
Taking the $\alpha$th root and using $1/\alpha'=(\alpha-1)/\alpha$ yields
\[
p_{\mathrm{succ}}
\le
\big(\kappa e^{\varepsilon_\alpha}\big)^{(\alpha-1)/\alpha}.
\]
Since a probability is at most one, the claimed minimum with $1$ follows.
\end{proof}

\subsection{A Gaussian testing lemma}
\label{app:gaussian-testing-lemma}

\begin{lemma}[Blow-up function for equal-covariance Gaussians]
\label{lem:gaussian-blowup}
Let
\[
\begin{gathered}
P=\mathcal{N}(\mu_1,\sigma^2 I),
\qquad
Q=\mathcal{N}(\mu_0,\sigma^2 I),\\
\Delta:=\|\mu_1-\mu_0\|_2.
\end{gathered}
\]
Then for every $\kappa\in[0,1]$,
\[
\sup\{P(E): Q(E)\le \kappa\}
=
\Phi\!\left(\Phi^{-1}(\kappa)+\frac{\Delta}{\sigma}\right),
\]
where the values at $\kappa=0$ and $\kappa=1$ are understood as the continuous limits, namely $0$ and $1$, respectively.
\end{lemma}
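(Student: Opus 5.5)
The plan is to reduce to a one-dimensional testing problem via rotation invariance and then apply the Neyman--Pearson lemma. Translating and rotating the release space is a measurable bijection that preserves both the set of admissible events and the probabilities $P(E)$ and $Q(E)$. We may therefore assume $\mu_0=0$ and $\mu_1=\Delta e_1$, where $e_1$ is the first standard basis vector. If $\Delta=0$, then $P=Q$ and the supremum is $\kappa=\Phi(\Phi^{-1}(\kappa))$, so assume $\Delta>0$. The likelihood ratio
\[
\frac{dP}{dQ}(\theta)=\exp\!\left(\frac{\Delta\theta_1}{\sigma^2}-\frac{\Delta^2}{2\sigma^2}\right)
\]
is a strictly increasing function of the single coordinate $\theta_1$.

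For the upper bound, we invoke the Neyman--Pearson lemma. Among all events $E$ with $Q(E)\le\kappa$, the probability $P(E)$ is maximized by a likelihood-ratio threshold event. Since the ratio has no atoms (its law under $Q$ is continuous), no randomization is needed, and by monotonicity the optimal event is a half-space $E^\star=\{\theta_1\ge t\}$.

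To keep the argument self-contained, we spell out the standard comparison. Let $c$ be the value of the likelihood ratio on the boundary $\theta_1=t$. Then
\[
P(E)-P(E^\star)=\int (\mathbf 1_E-\mathbf 1_{E^\star})\,(dP-c\,dQ)+c\,\bigl(Q(E)-Q(E^\star)\bigr)\le 0,
\]
because the integrand is pointwise nonpositive and $Q(E)\le\kappa=Q(E^\star)$.

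Next we compute the threshold and the resulting value. Under $Q$, $\theta_1\sim\mathcal N(0,\sigma^2)$, so $Q(E^\star)=\kappa$ gives $t=\sigma\,\Phi^{-1}(1-\kappa)=-\sigma\,\Phi^{-1}(\kappa)$. Under $P$, $\theta_1\sim\mathcal N(\Delta,\sigma^2)$, so
\[
P(E^\star)=1-\Phi\!\left(\frac{t-\Delta}{\sigma}\right)=\Phi\!\left(\Phi^{-1}(\kappa)+\frac{\Delta}{\sigma}\right).
\]
The event $E^\star$ itself attains this value, which gives the matching lower bound and hence equality.

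The endpoint cases are handled directly. At $\kappa=0$, any $E$ with $Q(E)=0$ is Lebesgue-null, and since $P\ll Q$ this forces $P(E)=0$. At $\kappa=1$, we may take $E=\Theta$, giving $P(E)=1$. Both values agree with the continuous limits of the formula.

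I expect the only point needing care to be the Neyman--Pearson step: stating it cleanly for general measurable $E$ and confirming that nonrandomized half-spaces suffice. The comparison inequality displayed above settles this directly, since the absence of atoms means the constraint $Q(E^\star)=\kappa$ can be met exactly by a deterministic half-space.
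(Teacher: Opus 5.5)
Your proof is correct and follows the same route as the paper. You reduce by translation and rotation to $Q=\mathcal N(0,\sigma^2 I)$ and $P=\mathcal N(\Delta e_1,\sigma^2 I)$, use the Neyman--Pearson lemma to identify the half-space $\{\theta_1\ge t\}$ as optimal, and compute $P(E^\star)=\Phi(\Phi^{-1}(\kappa)+\Delta/\sigma)$; your explicit comparison inequality and your separate handling of $\Delta=0$ and $\kappa\in\{0,1\}$ just make rigorous what the paper delegates to the cited lemma and to continuity.
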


\begin{proof}
By translation and orthogonal invariance of Gaussian measures, it is enough to consider
\[
Q=\mathcal{N}(0,\sigma^2 I),
\qquad
P=\mathcal{N}(\Delta e_1,\sigma^2 I),
\]
where $e_1$ is the first standard basis vector. The log-likelihood ratio is
\[
\log\frac{dP}{dQ}(x)
=
\frac{\Delta}{\sigma^2}x_1-\frac{\Delta^2}{2\sigma^2},
\]
which is increasing in $x_1$. By the Neyman--Pearson lemma~\citep{neyman1933problem}, among all measurable events with a fixed $Q$-mass, an event that maximizes $P(E)$ is a likelihood-ratio upper level set. Hence, in this Gaussian shift case, an optimal event at fixed $Q$-mass is a half-space
\[
E_t:=\{x:x_1\ge t\}.
\]
Choose $t$ such that $Q(E_t)=\kappa$. Under $Q$, $x_1\sim\mathcal N(0,\sigma^2)$, so
\[
\kappa=1-\Phi(t/\sigma)
\quad\text{and}\quad
 t=\sigma\Phi^{-1}(1-\kappa).
\]
Under $P$, $x_1\sim\mathcal N(\Delta,\sigma^2)$, hence
\[
P(E_t)
=
1-\Phi\!\left(\frac{t-\Delta}{\sigma}\right)
=
\Phi\!\left(\Phi^{-1}(\kappa)+\frac{\Delta}{\sigma}\right).
\]
The Neyman--Pearson lemma gives optimality, proving the identity.
\end{proof}

\subsection{Proof of Theorem~\ref{thm:direct-gaussian-rero}}
\label{app:proof-direct-gaussian-rero}

\begin{proof}
Fix $D^-\in\mathcal{Z}^{n-1}$ and an anchor $u\in\mathcal{Z}$. For each $z$ in the support of $\pi_{u,r}$, define
\[
D_z:=D^-\cup\{z\},
\qquad
D_u:=D^-\cup\{u\}.
\]
Let
\[
P_z:=\Law(M(D_z)),
\qquad
Q_u:=\Law(M(D_u)).
\]
Because the mechanism is Gaussian output perturbation,
\[
P_z=\mathcal{N}(f(D_z),\sigma^2 I),
\qquad
Q_u=\mathcal{N}(f(D_u),\sigma^2 I).
\]
Set
\[
\Delta_z:=\|f(D_z)-f(D_u)\|_2,
\qquad
\Delta_\star:=\bar\Delta(D^-).
\]
By definition, $\Delta_z\le\Delta_\star$ for all $z$ in the support. Since $d(z,u)\le r$, also $D_z\sim_r D_u$, and hence $\Delta_\star\le\Delta_2(r)$.

For each $z$, define
\[
E_z:=\{\theta:\rho(z,R(\theta,D^-))\le\eta\}.
\]
The attack success probability is
\[
p_{\mathrm{succ}}
=
\mathbb{E}_{Z\sim\pi_{u,r}}[P_Z(E_Z)].
\]
By Lemma~\ref{lem:gaussian-blowup},
\[
\begin{aligned}
P_z(E_z)
&\le
\Phi\!\left(\Phi^{-1}(Q_u(E_z))+\frac{\Delta_z}{\sigma}\right)\\
&\le
\Phi\!\left(\Phi^{-1}(Q_u(E_z))+\frac{\Delta_\star}{\sigma}\right).
\end{aligned}
\]
Therefore
\[
p_{\mathrm{succ}}
\le
\mathbb{E}_{Z\sim\pi_{u,r}}
\left[
\Phi\!\left(\Phi^{-1}(Q_u(E_Z))+\frac{\Delta_\star}{\sigma}\right)
\right].
\]
Let
\[
a:=\frac{\Delta_\star}{\sigma},
\qquad
g(t):=\Phi\!\left(\Phi^{-1}(t)+a\right).
\]
For $t\in(0,1)$, with $x=\Phi^{-1}(t)$ and $\varphi$ denoting the standard normal density,
\[
g'(t)
=
\frac{\varphi(x+a)}{\varphi(x)}
=
\exp\!\left(-ax-\frac{a^2}{2}\right)>0,
\qquad
g''(t)
=
-\frac{a}{\varphi(x)}
\exp\!\left(-ax-\frac{a^2}{2}\right)
\le 0.
\]
Thus $g$ is nondecreasing and concave on $(0,1)$, and the same conclusion holds on $[0,1]$ by continuous extension. Jensen's inequality gives
\[
p_{\mathrm{succ}}
\le
\Phi\!\left(
\Phi^{-1}\!\left(\mathbb{E}_{Z\sim\pi_{u,r}}[Q_u(E_Z)]\right)
+\frac{\Delta_\star}{\sigma}
\right).
\]
It remains to bound the expectation inside $\Phi^{-1}$. By Tonelli's theorem~\citep[Chapter~2]{folland1999real},
\[
\mathbb{E}_{Z\sim\pi_{u,r}}[Q_u(E_Z)]
=
\mathbb{E}_{\theta\sim Q_u}
\Pr_{Z\sim\pi_{u,r}}[\rho(Z,R(\theta,D^-))\le \eta].
\]
For each fixed $\theta$, the output constraint gives $R(\theta,D^-)\in\mathcal A_{u,r}$, so
\[
\Pr_{Z\sim\pi_{u,r}}[\rho(Z,R(\theta,D^-))\le \eta]
\le
\kappa_{\pi_{u,r},\rho}(\eta;\mathcal A_{u,r})
=
\kappa.
\]
Thus
\[
\mathbb{E}_{Z\sim\pi_{u,r}}[Q_u(E_Z)]\le\kappa.
\]
Using the monotonicity of $g$ yields
\[
p_{\mathrm{succ}}
\le
\Phi\!\left(\Phi^{-1}(\kappa)+\frac{\Delta_\star}{\sigma}\right).
\]
Finally, $\Delta_\star\le\Delta_2(r)$ gives the second displayed inequality in the theorem.
\end{proof}

\subsection{Proof of Corollary~\ref{cor:uniform-finite-prior}}
\label{app:proof-uniform-finite-prior}

\begin{proof}
Under the exact-identification loss $\rho_{\mathrm{ID}}(z,z')=\ind{z\neq z'}$ with tolerance $\eta=0$, the success event is exactly $R(\theta,D^-)=Z$. The allowed output set is $\mathcal A_{u,r}=S$. For the uniform prior
\[
\pi_S=\frac{1}{m}\sum_{i=1}^m\delta_{z_i},
\]
the best oblivious feasible-set strategy is to guess any one candidate, so
\[
\kappa_{\pi_S,\rho_{\mathrm{ID}}}(0;S)
=
\sup_{a\in S}\Pr_{Z\sim\pi_S}[Z=a]
=
\frac1m.
\]
Substituting this value into Theorem~\ref{thm:direct-gaussian-rero} and then using $\bar\Delta(D^-)\le\Delta_2(r)$ gives
\[
\Pr[R(\theta,D^-)=Z]
\le
\Phi\!\left(\Phi^{-1}(1/m)+\frac{\Delta_2(r)}{\sigma}\right),
\]
which is the claimed exact-identification bound.
\end{proof}

\subsection{Support-specific finite-Gaussian exact-identification bound}
\label{app:support-specific-bound}

The following bound is a sharper finite-prior statement for equal-covariance Gaussian releases. It is useful as a technical comparison point, but it is not the quantity plotted in the experiments; the main text uses the simpler direct certificate from Corollary~\ref{cor:uniform-finite-prior}.

\begin{theorem}[Support-specific finite-Gaussian exact-ID bound]
\label{thm:finite-gaussian-support-bound}
Let $P_i=\mathcal N(\mu_i,\sigma^2I)$ be the release law under candidate $z_i$, with prior weights $\pi_i>0$ and $\sum_i\pi_i=1$. Fix any reference Gaussian $Q=\mathcal N(\mu_0,\sigma^2I)$ and define
\[
d_i:=\frac{\|\mu_i-\mu_0\|_2}{\sigma}.
\]
Then every exact-identification rule satisfies
\[
\Pr[\widehat Z=Z]
\le
\max_{q_1,\ldots,q_m\ge0:\,\sum_iq_i\le1}
\sum_{i=1}^m
\pi_i\,\Phi\!\left(\Phi^{-1}(q_i)+d_i\right),
\]
with boundary cases interpreted by continuity. In particular, if $\pi_i=1/m$ and $d_i\le R$ for all $i$, then
\[
\Pr[\widehat Z=Z]
\le
\Phi\!\left(\Phi^{-1}(1/m)+R\right).
\]
\end{theorem}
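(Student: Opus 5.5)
The plan is to reduce exact identification to a family of disjoint decision regions and apply Lemma~\ref{lem:gaussian-blowup} to each region separately. Any exact-identification rule $\widehat Z=R(\theta,D^-)$ with outputs in $S$ induces measurable sets
\[
E_i:=\{\theta:R(\theta,D^-)=z_i\},\qquad i=1,\ldots,m.
\]
These sets are pairwise disjoint. (We may assume the rule is deterministic; a randomized rule is a mixture of deterministic ones, and the final bound is a maximum over feasible $(q_i)$, so the argument extends by averaging.) The success probability is then
\[
\Pr[\widehat Z=Z]=\sum_{i=1}^m\pi_i P_i(E_i).
\]
Set $q_i:=Q(E_i)\ge0$. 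Disjointness gives $\sum_i q_i=Q(\cup_iE_i)\le1$, so $(q_1,\ldots,q_m)$ lies in the feasible region of the stated maximization.

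Next, apply Lemma~\ref{lem:gaussian-blowup} to each pair $(P_i,Q)$. Both are equal-covariance Gaussians whose means are at distance $\sigma d_i$, so
\[
P_i(E_i)\le\Phi\!\left(\Phi^{-1}(q_i)+d_i\right).
\]
The endpoint cases $q_i\in\{0,1\}$ are handled by the continuity convention in the lemma. Summing with weights $\pi_i$ gives the success probability bounded by $\sum_i\pi_i\Phi(\Phi^{-1}(q_i)+d_i)$ for this particular feasible vector, and hence by the maximum over all feasible vectors. The maximum is attained because the objective is continuous on the compact simplex; otherwise a supremum suffices.

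For the uniform specialization, take $\pi_i=1/m$ and $d_i\le R$. Since $\Phi$ is increasing, each term is at most $g(q_i):=\Phi(\Phi^{-1}(q_i)+R)$. In the proof of Theorem~\ref{thm:direct-gaussian-rero} we already showed that $g$ is nondecreasing and concave on $[0,1]$. Jensen's inequality then gives
\[
\frac1m\sum_i g(q_i)\le g\!\left(\frac1m\sum_iq_i\right)\le g(1/m),
\]
where the last step uses $\sum_i q_i\le1$ and the monotonicity of $g$. This yields $\Phi(\Phi^{-1}(1/m)+R)$.

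I expect no serious obstacle here. The one delicate point is bookkeeping: the per-region blow-up bound holds for every event, including events where $q_i=0$, and the general result is stated as a maximum over feasible $q$ rather than being evaluated in closed form. The step that carries the real content is the disjointness constraint $\sum_i q_i\le1$. It is what makes the bound sharper than applying Theorem~\ref{thm:direct-gaussian-rero} with the reference law $Q$.
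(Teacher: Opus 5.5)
Your proposal is correct and follows the paper's proof essentially step for step: disjoint decision regions with $q_i=Q(E_i)$ and $\sum_i q_i\le 1$, Lemma~\ref{lem:gaussian-blowup} applied to each pair $(P_i,Q)$, and then monotonicity, concavity and Jensen's inequality for the uniform specialization. Your remarks on randomized rules (averaging over deterministic ones) and on attainment of the maximum by compactness are valid details the paper leaves implicit. The one quibble is your closing comment: the direct bound of Theorem~\ref{thm:direct-gaussian-rero} also uses disjointness, through $\kappa=1/m$, so the extra sharpness here comes from keeping the individual $d_i$ and $\pi_i$ (and from the free choice of $Q$), not from the constraint $\sum_i q_i\le 1$ itself.
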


\begin{proof}
Let $A_i$ be the event that the attack outputs candidate $z_i$. The events $A_1,\ldots,A_m$ are disjoint, and therefore
\[
\sum_{i=1}^m Q(A_i)\le 1.
\]
Write $q_i:=Q(A_i)$. By Lemma~\ref{lem:gaussian-blowup}, applied to $P_i=\mathcal N(\mu_i,\sigma^2 I)$ and $Q=\mathcal N(\mu_0,\sigma^2 I)$,
\[
P_i(A_i)\le \Phi\!\left(\Phi^{-1}(q_i)+d_i\right),
\qquad
 d_i=\frac{\|\mu_i-\mu_0\|_2}{\sigma}.
\]
The exact-identification success probability is
\[
p_{\mathrm{succ}}=\sum_{i=1}^m \pi_i P_i(A_i).
\]
Combining the previous two displays gives
\[
p_{\mathrm{succ}}
\le
\sum_{i=1}^m \pi_i\Phi\!\left(\Phi^{-1}(q_i)+d_i\right)
\]
for some nonnegative $q_i$ with $\sum_i q_i\le 1$. Maximizing over all such $q_i$ proves the first claim.

For the uniform-prior specialization, assume $d_i\le R$ for all $i$ and
define
\[
h_R(q):=\Phi\!\left(\Phi^{-1}(q)+R\right).
\]
For $q\in(0,1)$, with $x=\Phi^{-1}(q)$ and $\varphi$ denoting the standard normal density,
\[
h_R'(q)
=
\frac{\varphi(x+R)}{\varphi(x)}
=
\exp\!\left(-Rx-\frac{R^2}{2}\right)>0,
\qquad
h_R''(q)
=
-\frac{R}{\varphi(x)}
\exp\!\left(-Rx-\frac{R^2}{2}\right)
\le 0.
\]
Thus $h_R$ is nondecreasing and concave on $(0,1)$, and the same conclusion holds on $[0,1]$ by continuous extension. Jensen's
inequality gives
\[
\begin{aligned}
\frac{1}{m}\sum_{i=1}^m
\Phi\!\left(\Phi^{-1}(q_i)+d_i\right)
&\le
\frac{1}{m}\sum_{i=1}^m h_R(q_i) \\
&\le
h_R\!\left(\frac{1}{m}\sum_i q_i\right) \\
&=
\Phi\!\left(
\Phi^{-1}\!\left(\frac{1}{m}\sum_i q_i\right)+R
\right).
\end{aligned}
\]
Because $\sum_i q_i\le 1$ and the same map is nondecreasing,
\[
p_{\mathrm{succ}}
\le
\Phi\!\left(\Phi^{-1}(1/m)+R\right).
\]
This proves the stated finite-Gaussian support bound.
\end{proof}

\section{Proof of Proposition~\ref{prop:finite-map}}
\label{app:proof-finite-map}

\begin{proof}
Fix a finite candidate set $S=\{z_1,\dots,z_m\}$ and prior weights $\pi_1,\dots,\pi_m$. For each candidate $z_i$, the Gaussian release density is
\[
p(\widetilde\theta\mid z_i,D^-)
=
(2\pi\sigma^2)^{-p/2}
\exp\!\left(
-\frac{1}{2\sigma^2}\bigl\|\widetilde\theta-\widehat\theta(D^-\cup\{z_i\})\bigr\|_2^2
\right),
\]
where $p$ is the parameter dimension. By Bayes' rule, for each $i=1,\dots,m$,
\[
p(z_i\mid \widetilde\theta,D^-)
=
\frac{\pi_i\,p(\widetilde\theta\mid z_i,D^-)}
{\sum_{j=1}^m \pi_j\,p(\widetilde\theta\mid z_j,D^-)}
=
\frac{
\pi_i
\exp\!\left(
-\frac{1}{2\sigma^2}\bigl\|\widetilde\theta-\widehat\theta(D^-\cup\{z_i\})\bigr\|_2^2
\right)
}{
\sum_{j=1}^m
\pi_j
\exp\!\left(
-\frac{1}{2\sigma^2}\bigl\|\widetilde\theta-\widehat\theta(D^-\cup\{z_j\})\bigr\|_2^2
\right)
}.
\]
Take logarithms:
\[
\log p(z_i\mid \widetilde\theta,D^-)
=
\log \pi_i
-
\frac{1}{2\sigma^2}\bigl\|\widetilde\theta-\widehat\theta(D^-\cup\{z_i\})\bigr\|_2^2
+
C(\widetilde\theta),
\]
where $C(\widetilde\theta)$ is a constant independent of the candidate index $i$. Since the MAP rule chooses the candidate with the largest posterior probability, and since $C(\widetilde\theta)$ is the same for every candidate, any MAP estimator is of the form
\[
\arg\max_{z_i\in S}
\left\{
\log \pi_i
-
\frac{1}{2\sigma^2}\bigl\|\widetilde\theta-\widehat\theta(D^-\cup\{z_i\})\bigr\|_2^2
\right\}.
\]
If the prior is uniform, then $\log \pi_i$ is the same for every candidate and can be dropped, leaving the least-squares objective.
\end{proof}

\section{Derivations for Secondary Convex Head Constants}
\label{app:proof-secondary-lz}

\begin{proof}[Proof of Proposition~\ref{prop:secondary-lz-summary}]
We prove the three constants separately. Throughout this proof, the record metric is the label-preserving metric used in the statement. Thus finite-distance substitutions have the same label, and for a same-label pair $(z,y),(z',y)$ the distance is $\|z-z'\|_2$. Different-label pairs have infinite distance, so they impose no additional finite Lipschitz constraint in Assumption~\ref{ass:lz}. 

The derivations in the three subsections below establish all bounded-parameter and ERM-minimizer constants in Proposition~\ref{prop:secondary-lz-summary}. Substituting any of them into Theorem~\ref{thm:erm-sensitivity} gives the stated radius-$r$ sensitivity bound.
\end{proof}

\subsection{Squared Hinge}

Fix a label $y\in\{-1,1\}$ and a parameter vector $w\in\R^d$. For a feature vector $z$, define
\[
q(z):=1-yw^\top z,
\qquad
h(z):=q(z)_+=\max\{0,q(z)\}.
\]
The squared-hinge loss is
\[
\ell(w;z,y)=h(z)^2.
\]
The scalar map $u\mapsto (u_+)^2$ is differentiable everywhere: its derivative is $0$ for $u<0$, is $2u$ for $u>0$, and is also $0$ at $u=0$ because the one-sided derivatives both equal $0$. Hence, by the chain rule,
\[
\nabla_w\ell(w;z,y)
=2h(z)\nabla_w(1-yw^\top z)
=2h(z)(-yz)
=-2yzh(z).
\]
Write
\[
g(z):=\nabla_w\ell(w;z,y)=-2yzh(z).
\]
We first bound the scalar hinge difference. The map $m(a):=a_+=\max\{0,a\}$ is $1$-Lipschitz on $\R$. Indeed, if $a,b\le0$, then $m(a)=m(b)=0$; if $a,b\ge0$, then $|m(a)-m(b)|=|a-b|$; and if $a\ge0\ge b$, then $|m(a)-m(b)|=a\le a-b=|a-b|$, with the remaining mixed case symmetric. Therefore,
\begin{align*}
|h(z)-h(z')|
&\le |q(z)-q(z')| \\
&= |(1-yw^\top z)-(1-yw^\top z')| \\
&= |yw^\top(z'-z)| \\
&= |w^\top(z'-z)| \\
&\le \|w\|_2\,\|z'-z\|_2
=\|w\|_2\,\|z-z'\|_2 .
\end{align*}
Next, for any admissible feature vector $u$ with $\|u\|_2\le B$,
\begin{align*}
h(u)
&=\max\{0,1-yw^\top u\} \\
&\le 1+|w^\top u| \\
&\le 1+\|w\|_2\|u\|_2 \\
&\le 1+B\|w\|_2 .
\end{align*}
In particular, $h(z')\le1+B\|w\|_2$.

Now subtract the gradients and add and subtract $zh(z')$:
\begin{align*}
g(z)-g(z')
&=-2yzh(z)+2yz'h(z') \\
&=-2y\bigl(zh(z)-z'h(z')\bigr) \\
&=-2y\bigl(z(h(z)-h(z'))+(z-z')h(z')\bigr).
\end{align*}
Taking Euclidean norms and using $|y|=1$ gives
\begin{align*}
\|g(z)-g(z')\|_2
&\le 2\Bigl(\|z\|_2|h(z)-h(z')|+\|z-z'\|_2|h(z')|\Bigr) \\
&\le 2\Bigl(B\|w\|_2\|z-z'\|_2+(1+B\|w\|_2)\|z-z'\|_2\Bigr) \\
&=\bigl(2+4B\|w\|_2\bigr)\|z-z'\|_2 .
\end{align*}
Thus, on $\Theta_R=\{w:\|w\|_2\le R\}$,
\[
\|\nabla_w\ell(w;z,y)-\nabla_w\ell(w;z',y)\|_2
\le (2+4BR)\|z-z'\|_2,
\]
so $L_z(R)=2+4BR$ is valid.

It remains to bound the radius of ERM minimizers. Let
\[
F_D(w)=\frac1n\sum_{i=1}^n \max\{0,1-y_iw^\top z_i\}^2+\frac\lambda2\|w\|_2^2,
\]
and let $w^\star\in\arg\min_w F_D(w)$. Since $w^\star$ minimizes $F_D$,
\[
F_D(w^\star)\le F_D(0).
\]
At $w=0$, every hinge value is $\max\{0,1\}=1$, so
\[
F_D(0)=\frac1n\sum_{i=1}^n1^2+\frac\lambda2\|0\|_2^2=1.
\]
The empirical squared-hinge term is nonnegative, hence
\[
F_D(w^\star)
=\frac1n\sum_{i=1}^n \max\{0,1-y_i(w^\star)^\top z_i\}^2
+\frac\lambda2\|w^\star\|_2^2
\ge \frac\lambda2\|w^\star\|_2^2.
\]
Combining the last three displays,
\[
\frac\lambda2\|w^\star\|_2^2\le F_D(w^\star)\le F_D(0)=1,
\]
so
\[
\|w^\star\|_2\le \sqrt{\frac2\lambda}.
\]
Substituting $R=\sqrt{2/\lambda}$ into $L_z(R)=2+4BR$ gives the ERM-minimizer constant
\[
L_z\le 2+4B\sqrt{\frac2\lambda}.
\]

\subsection{Binary Logistic}
Let $y\in\{-1,1\}$, let
\[
\widetilde z:=\begin{pmatrix}z\\1\end{pmatrix}\in\R^{d+1},
\qquad
\widetilde\theta:=\begin{pmatrix}w\\b\end{pmatrix}\in\R^{d+1},
\qquad
\widetilde B:=\sqrt{B^2+1}.
\]
Then $\|\widetilde z\|_2\le \widetilde B$, and for same-label pairs,
\[
\|\widetilde z-\widetilde z'\|_2
=
\left\|\begin{pmatrix}z-z'\\0\end{pmatrix}\right\|_2
=\|z-z'\|_2.
\]
Define the score $s(\widetilde z):=\widetilde\theta^\top\widetilde z$ and the sigmoid $\sigma(u):=(1+e^{-u})^{-1}$. The binary logistic loss is
\[
\ell(\widetilde\theta;\widetilde z,y)=\log\bigl(1+\exp(-ys(\widetilde z))\bigr).
\]
With $u=-ys(\widetilde z)$, we have
\[
\frac{d}{du}\log(1+e^u)=\frac{e^u}{1+e^u}=\sigma(u),
\qquad
\nabla_{\widetilde\theta}u=-y\widetilde z.
\]
Therefore,
\[
\nabla_{\widetilde\theta}\ell(\widetilde\theta;\widetilde z,y)
=-y\sigma(-y\widetilde\theta^\top\widetilde z)\widetilde z.
\]
For two same-label examples, set
\[
\alpha:=\sigma(-y\widetilde\theta^\top\widetilde z),
\qquad
\beta:=\sigma(-y\widetilde\theta^\top\widetilde z').
\]
Since $0\le\sigma\le1$, $0\le\alpha,\beta\le1$. The gradient difference satisfies
\begin{align*}
&\left\|\nabla_{\widetilde\theta}\ell(\widetilde\theta;\widetilde z,y)
-\nabla_{\widetilde\theta}\ell(\widetilde\theta;\widetilde z',y)\right\|_2 \\
&\qquad =\left\|-y\alpha\widetilde z+y\beta\widetilde z'\right\|_2 \\
&\qquad =\left\|-\alpha\widetilde z+\beta\widetilde z'\right\|_2 .
\end{align*}
Add and subtract $\alpha\widetilde z'$:
\[
-\alpha\widetilde z+\beta\widetilde z'
=\alpha(\widetilde z'-\widetilde z)+(\beta-\alpha)\widetilde z'.
\]
Thus,
\begin{align*}
\left\|-\alpha\widetilde z+\beta\widetilde z'\right\|_2
&\le |\alpha|\,\|\widetilde z'-\widetilde z\|_2
+|\beta-\alpha|\,\|\widetilde z'\|_2 \\
&\le \|z-z'\|_2+\widetilde B|\beta-\alpha|.
\end{align*}
The sigmoid derivative is
\[
\sigma'(u)=\sigma(u)(1-\sigma(u)).
\]
For $t\in[0,1]$,
\[
t(1-t)=\frac14-\left(t-\frac12\right)^2\le\frac14,
\]
so $0\le\sigma'(u)\le1/4$ and $\sigma$ is $1/4$-Lipschitz. Therefore,
\begin{align*}
|\beta-\alpha|
&\le \frac14\left| -y\widetilde\theta^\top\widetilde z' + y\widetilde\theta^\top\widetilde z\right| \\
&=\frac14\left|\widetilde\theta^\top(\widetilde z-\widetilde z')\right| \\
&\le \frac14\|\widetilde\theta\|_2\,\|\widetilde z-\widetilde z'\|_2 \\
&=\frac14\|\widetilde\theta\|_2\,\|z-z'\|_2 .
\end{align*}
Substituting this into the gradient-difference bound gives
\[
\left\|\nabla_{\widetilde\theta}\ell(\widetilde\theta;\widetilde z,y)
-\nabla_{\widetilde\theta}\ell(\widetilde\theta;\widetilde z',y)\right\|_2
\le
\left(1+\frac{\widetilde B\|\widetilde\theta\|_2}{4}\right)\|z-z'\|_2.
\]
Hence, on $\Theta_R=\{\widetilde\theta:\|\widetilde\theta\|_2\le R\}$,
\[
L_z(R)=1+\frac{\widetilde B R}{4}
\]
is valid.

Now consider the regularized objective, with both weights and intercept regularized,
\[
F_D(\widetilde\theta)
=\frac1n\sum_{i=1}^n \ell(\widetilde\theta;\widetilde z_i,y_i)
+\frac\lambda2\|\widetilde\theta\|_2^2.
\]
Let $\widetilde\theta^\star$ be an ERM minimizer. First-order optimality gives
\[
0=\nabla F_D(\widetilde\theta^\star)
=\frac1n\sum_{i=1}^n
\nabla_{\widetilde\theta}\ell(\widetilde\theta^\star;\widetilde z_i,y_i)
+\lambda\widetilde\theta^\star,
\]
so
\[
\lambda\widetilde\theta^\star
=-\frac1n\sum_{i=1}^n
\nabla_{\widetilde\theta}\ell(\widetilde\theta^\star;\widetilde z_i,y_i).
\]
Taking norms and using the triangle inequality,
\begin{align*}
\lambda\|\widetilde\theta^\star\|_2
&\le \frac1n\sum_{i=1}^n
\left\|\nabla_{\widetilde\theta}\ell(\widetilde\theta^\star;\widetilde z_i,y_i)\right\|_2.
\end{align*}
For each example,
\begin{align*}
\left\|\nabla_{\widetilde\theta}\ell(\widetilde\theta^\star;\widetilde z_i,y_i)\right\|_2
&=\left\|-y_i\sigma(-y_i(\widetilde\theta^\star)^\top\widetilde z_i)\widetilde z_i\right\|_2 \\
&\le |y_i|\cdot 1\cdot\|\widetilde z_i\|_2 \\
&\le \widetilde B.
\end{align*}
Therefore,
\[
\lambda\|\widetilde\theta^\star\|_2
\le \frac1n\sum_{i=1}^n\widetilde B
=\widetilde B,
\]
and hence
\[
\|\widetilde\theta^\star\|_2\le\frac{\widetilde B}{\lambda}.
\]
Substituting $R=\widetilde B/\lambda$ into the fixed-parameter constant gives
\[
L_z\le 1+\frac{\widetilde B}{4}\cdot\frac{\widetilde B}{\lambda}
=1+\frac{\widetilde B^2}{4\lambda}
=1+\frac{B^2+1}{4\lambda}.
\]

\subsection{Softmax Logistic}

Let $y\in\{1,\ldots,K\}$, let $\widetilde z=(z,1)\in\R^{d+1}$, and let $\widetilde W\in\R^{K\times(d+1)}$. Define
\[
p(\widetilde z):=\softmax(\widetilde W\widetilde z),
\qquad
a(\widetilde z):=p(\widetilde z)-e_y,
\]
where $e_y$ is the $y$th standard basis vector in $\R^K$. For scores $s:=\widetilde W\widetilde z$, the softmax loss is
\[
\ell(\widetilde W;\widetilde z,y)
=-s_y+\log\left(\sum_{c=1}^K e^{s_c}\right).
\]
For each class $c$,
\[
\frac{\partial \ell}{\partial s_c}
=-\mathbf 1\{c=y\}+\frac{e^{s_c}}{\sum_{j=1}^K e^{s_j}}
=p_c(\widetilde z)-(e_y)_c.
\]
Thus $\nabla_s\ell=p(\widetilde z)-e_y=a(\widetilde z)$. Since $s_c=\widetilde w_c^\top\widetilde z$ for the $c$th row $\widetilde w_c^\top$ of $\widetilde W$, the row-wise chain rule gives
\[
\nabla_{\widetilde W}\ell(\widetilde W;\widetilde z,y)
=a(\widetilde z)\widetilde z^\top.
\]
For a same-label pair, write
\[
a:=a(\widetilde z),
\qquad
a':=a(\widetilde z').
\]
Subtract the gradients and add and subtract $a(\widetilde z')^\top$:
\begin{align*}
&\nabla_{\widetilde W}\ell(\widetilde W;\widetilde z,y)
-\nabla_{\widetilde W}\ell(\widetilde W;\widetilde z',y) \\
&\qquad = a\widetilde z^\top-a'(\widetilde z')^\top \\
&\qquad = a(\widetilde z-\widetilde z')^\top+(a-a')(\widetilde z')^\top.
\end{align*}
Taking Frobenius norms and using the triangle inequality,
\begin{align*}
&\left\|\nabla_{\widetilde W}\ell(\widetilde W;\widetilde z,y)
-\nabla_{\widetilde W}\ell(\widetilde W;\widetilde z',y)\right\|_F \\
&\qquad\le
\left\|a(\widetilde z-\widetilde z')^\top\right\|_F
+
\left\|(a-a')(\widetilde z')^\top\right\|_F.
\end{align*}
For vectors $u,v$, $\|uv^\top\|_F=\|u\|_2\|v\|_2$, so
\begin{align}
\label{eq:softmax-gradient-split}
&\left\|\nabla_{\widetilde W}\ell(\widetilde W;\widetilde z,y)
-\nabla_{\widetilde W}\ell(\widetilde W;\widetilde z',y)\right\|_F \\
&\qquad\le
\|a\|_2\,\|\widetilde z-\widetilde z'\|_2
+
\|a-a'\|_2\,\|\widetilde z'\|_2 . \notag
\end{align}
We bound the two factors separately.

First, since $p(\widetilde z)$ is a probability vector,
\begin{align*}
\|a\|_2^2
&=\|p(\widetilde z)-e_y\|_2^2 \\
&=(1-p_y(\widetilde z))^2+
\sum_{c\ne y}p_c(\widetilde z)^2 \\
&\le (1-p_y(\widetilde z))^2+
\left(\sum_{c\ne y}p_c(\widetilde z)\right)^2 \\
&=(1-p_y(\widetilde z))^2+(1-p_y(\widetilde z))^2 \\
&\le 2.
\end{align*}
Thus $\|a\|_2\le\sqrt2$.

Second, because $a-a'=p(\widetilde z)-p(\widetilde z')$, we need a Lipschitz bound for softmax. For $p=\softmax(x)$, the Jacobian is
\[
J(x)=\operatorname{Diag}(p)-pp^\top.
\]
It is symmetric. In row $i$,
\begin{align*}
\sum_{j=1}^K |J_{ij}(x)|
&=p_i(1-p_i)+\sum_{j\ne i}p_ip_j \\
&=p_i(1-p_i)+p_i\sum_{j\ne i}p_j \\
&=2p_i(1-p_i) \\
&\le \frac12.
\end{align*}
Therefore $\|J(x)\|_\infty\le1/2$. Since $J(x)$ is symmetric, $\|J(x)\|_2\le\|J(x)\|_\infty\le1/2$. Applying the mean-value theorem to $x\mapsto\softmax(x)$ along the segment between $\widetilde W\widetilde z$ and $\widetilde W\widetilde z'$ gives
\begin{align*}
\|p(\widetilde z)-p(\widetilde z')\|_2
&\le \frac12\|\widetilde W\widetilde z-\widetilde W\widetilde z'\|_2 \\
&=\frac12\|\widetilde W(\widetilde z-\widetilde z')\|_2 \\
&\le \frac12\|\widetilde W\|_F\,\|\widetilde z-\widetilde z'\|_2.
\end{align*}
Hence
\[
\|a-a'\|_2\le \frac12\|\widetilde W\|_F\,\|\widetilde z-\widetilde z'\|_2.
\]
Since $\|\widetilde z'\|_2\le\widetilde B$ and $\|\widetilde z-\widetilde z'\|_2=\|z-z'\|_2$, substituting the two bounds into~\eqref{eq:softmax-gradient-split} yields
\begin{align*}
&\left\|\nabla_{\widetilde W}\ell(\widetilde W;\widetilde z,y)
-\nabla_{\widetilde W}\ell(\widetilde W;\widetilde z',y)\right\|_F \\
&\qquad\le
\left(\sqrt2+\frac{\widetilde B\|\widetilde W\|_F}{2}\right)\|z-z'\|_2.
\end{align*}
Therefore, on $\Theta_R=\{\widetilde W:\|\widetilde W\|_F\le R\}$,
\[
L_z(R)=\sqrt2+\frac{\widetilde B R}{2}
\]
is valid.

It remains to bound the Frobenius norm of regularized ERM minimizers. Let
\[
F_D(\widetilde W)
=\frac1n\sum_{i=1}^n\ell(\widetilde W;\widetilde z_i,y_i)
+\frac\lambda2\|\widetilde W\|_F^2,
\]
and let $\widetilde W^\star$ minimize this objective. First-order optimality gives
\[
0=\frac1n\sum_{i=1}^n
\nabla_{\widetilde W}\ell(\widetilde W^\star;\widetilde z_i,y_i)
+\lambda\widetilde W^\star,
\]
so
\[
\lambda\widetilde W^\star
=-\frac1n\sum_{i=1}^n
\nabla_{\widetilde W}\ell(\widetilde W^\star;\widetilde z_i,y_i).
\]
Taking Frobenius norms and applying the triangle inequality,
\[
\lambda\|\widetilde W^\star\|_F
\le \frac1n\sum_{i=1}^n
\left\|\nabla_{\widetilde W}\ell(\widetilde W^\star;\widetilde z_i,y_i)\right\|_F.
\]
For one example, define $p_\star(\widetilde z_i):=\softmax(\widetilde W^\star\widetilde z_i)$. Then
\[
\nabla_{\widetilde W}\ell(\widetilde W^\star;\widetilde z_i,y_i)
=\bigl(p_\star(\widetilde z_i)-e_{y_i}\bigr)\widetilde z_i^\top,
\]
and hence, by the same probability-vector bound used above,
\[
\left\|\nabla_{\widetilde W}\ell(\widetilde W^\star;\widetilde z_i,y_i)\right\|_F
=\|p_\star(\widetilde z_i)-e_{y_i}\|_2\,\|\widetilde z_i\|_2
\le \sqrt2\,\widetilde B.
\]
Therefore,
\[
\lambda\|\widetilde W^\star\|_F
\le \frac1n\sum_{i=1}^n\sqrt2\,\widetilde B
=\sqrt2\,\widetilde B,
\]
so every ERM minimizer satisfies
\[
\|\widetilde W^\star\|_F\le\frac{\sqrt2\,\widetilde B}{\lambda}.
\]
Substituting $R=\sqrt2\,\widetilde B/\lambda$ into the bounded-parameter constant gives
\begin{align*}
L_z
&\le \sqrt2+\frac{\widetilde B}{2}\cdot\frac{\sqrt2\,\widetilde B}{\lambda} \\
&=\sqrt2\left(1+\frac{\widetilde B^2}{2\lambda}\right) \\
&=\sqrt2\left(1+\frac{B^2+1}{2\lambda}\right).
\end{align*}

\end{document}